\newenvironment{proof}[1][Proof]{\noindent\textit{#1.} }{\ \rule{0.5em}{0.5em}\par}
\newcommand{\ed}{{\rm e}}
\newcommand{\boundC}{\alpha}
\newcommand{\be}{\begin{equs}}
\newcommand{\ee}{\end{equs}}
\newcommand{\bea}{\begin{equs}}
\newcommand{\eea}{\end{equs}}
\newcommand{\beas}{\begin{equs*}}
\newcommand{\eeas}{\end{equs*}}
\newtheorem{theorem}{Theorem}[section]
\newtheorem{lemma}[theorem]{Lemma}
\newtheorem{proposition}[theorem]{Proposition}
\newtheorem{corollary}[theorem]{Corollary}
\newtheorem{remark}[theorem]{Remark}
\newtheorem{hypothesis}[theorem]{Hypothesis}
\newcommand{\real}{{\bf R}}
\def\One{\mathbb{I}}
\newcommand{\myl}[1]{
\hspace{-1mm}
\left(\vbox to #1pt{}\right.
\hspace{-1mm}
}
\newcommand{\my}[2]{
\left#1\vbox to #2pt{}\right.
}
\newcommand{\myr}[1]{
\hspace{-1mm}
\left.\vbox to #1pt{}\right)
\hspace{-1mm}
}
\def\backgr{\;\raisebox{-36mm}{\epsfysize=100mm\epsfbox{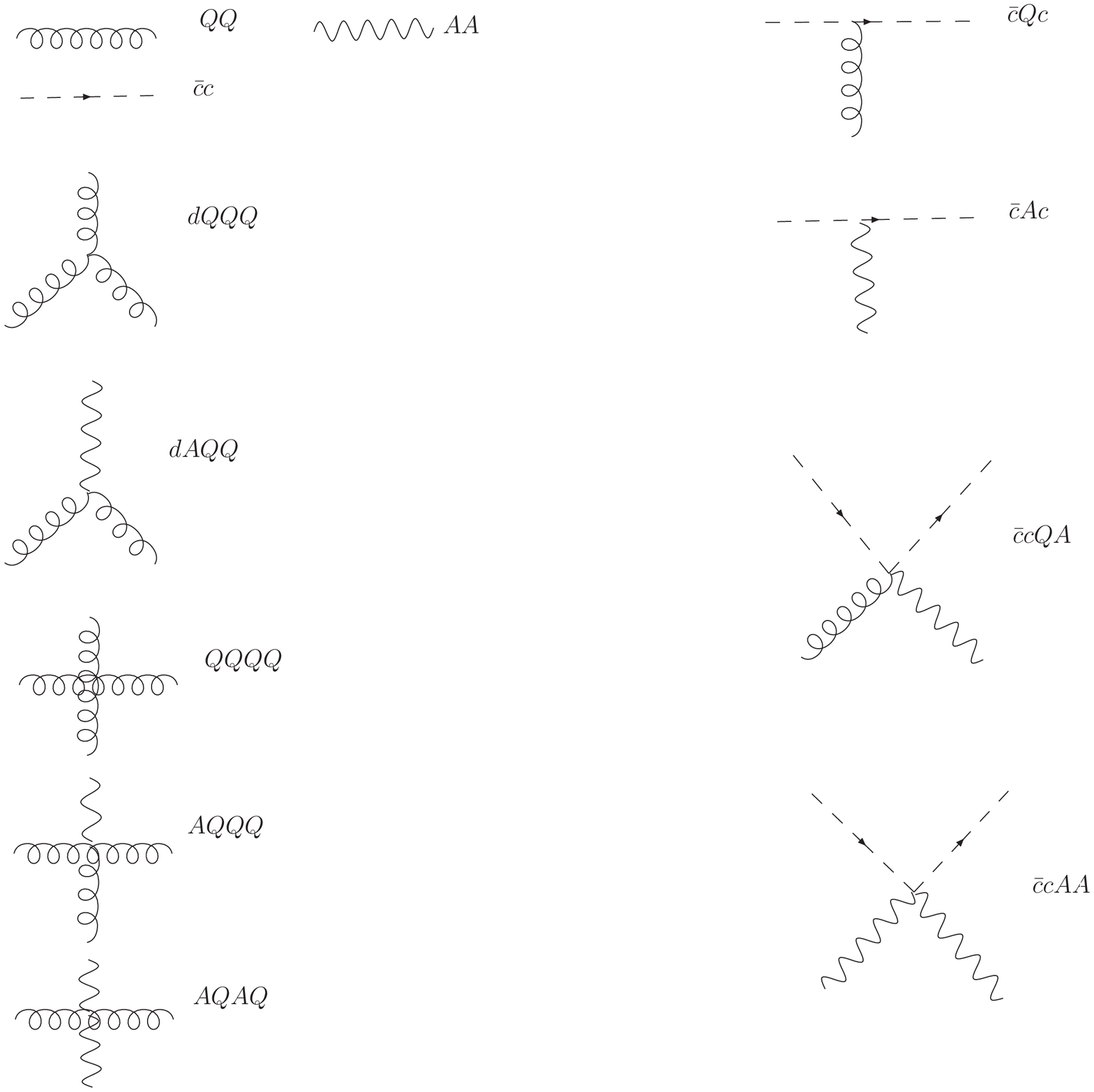}}\;}
\begin{document}
\title{The QCD $\beta$-function from global solutions to Dyson-Schwinger equations}

\author{Guillaume van Baalen\footnote{Department of Mathematics and
    Statistics, Boston University, 111 Cummington Street, Boston MA,
    02215, USA}, Dirk Kreimer\footnote{CNRS-IHES 91440 Bures sur
    Yvette, France; and Center for Mathematical Physics Boston
    University, 111 Cummington Street, Boston MA,
    02215, USA}, David Uminsky\footnotemark[1], and Karen
  Yeats\footnotemark[1]}
\maketitle
\begin{abstract}
We study quantum chromo\-dynamics from the viewpoint of
untruncated Dy\-son--Schwin\-ger equations turned to an ordinary
differential equation for the gluon anomalous dimension. This
non\-linear equation is parameterized by a function $P(x)$ which
is unknown beyond perturbation theory. Still, very mild
assumptions on $P(x)$ lead to stringent restrictions for possible
solutions to Dy\-son--Schwin\-ger equations.

We establish that the theory must have asymptotic freedom beyond
perturbation theory and also investigate the low energy regime
and the possibility for a mass gap in the asymptotically free
theory.
\end{abstract}
\section*{Acknowledgments}  D.K.\ and
K.Y.\ were supported by NSF grant DMS-0603781. D.U.\ and G.v.B.\
are supported by NSF grant DMS-0405724 and thank C.E.\ Wayne for
discussions. 
D.K.\ thanks Ivan Todorov
for discussions.
\section{Introduction}
We study non-perturbative aspects of quantum chromodynamics
(QCD). We do so by investigating Dyson--Schwinger equations.
Instead of solving a truncated version of these a priori very
intricate equations \cite{Marciano:1978ee}, we use recent insight
into the mathematical structure of quantum field theory to gain
insight into the possible structure of solutions. This approach
has been successfully applied to quantum electrodynamics in
\cite{QED} and is here extended to QCD.
\subsection{The method}
We follow the methods employed in our work on QED \cite{QED},
adopted to the study of QCD in the background field approach as
developed by Abbott \cite{Abbott1,Abbott2}. Let us first
reconsider the situation for quantum electrodynamics. There,
thanks to the Ward identity, it suffices to consider the
anomalous dimension $\gamma_1(x)$ of the photon which is
essentially the $\beta$-function,
$\beta(x)=x\gamma_1(x)$.

This anomalous dimension is obtained from the photon's
self-energy, a two-point function which is determined
non-perturbatively by the knowledge of a single Lorentz scalar
function
\begin{equs}
G(x,L)=1-\sum_{k=1}^\infty \gamma_k(x)L^k~,
\end{equs}
with $x$ the fine structure constant and $L=\ln
\left(-q^2/\mu^2\right)$, where the inverse photon-propagator is
$(q^2g_{\mu\nu}-q_\mu q_\nu)G(x,L)$.

Combining this with the combinatorial Dyson-Schwinger equations
and using an expansion into suitable integral kernels which
parametrize the corresponding integral equation, the
Dyson-Schwinger equation combine with the renormalization group
equation to give

i) a recursion for the $\gamma_k$:
\begin{equs}
\gamma_k(x)=-\frac{1}{k}\gamma_1(x)(1-x\partial_x)\gamma_{k-1}(x),
k\geq 2~,
\label{eqn:recu}
\end{equs}

ii) a differential equation for $\gamma_1(x)$:
\begin{equs}
\gamma_1(x)(1-x\partial_x)\gamma_1(x)+\gamma_1(x)-P(x)=0~.
\label{eqn:DSE}
\end{equs}

Here, i) comes from the renormalization group, ii) from the DSE,
and $P(x)$ is a suitably constructed series over residues.
The appearance of the operator $(1\pm x\partial_x)$ is typical for a
gauge theory.

We will now address a non-abelian gauge theory, resulting in a
similar set-up (in particular, i) and ii) remain form-invariant)
once we learned to make effective use of quantum gauge invariance
to reduce again to a single ODE. Obviously, however, $P(x)$ will
be a different function, in particular, it will change sign. We
study the consequences of this fact with minimal knowledge on the
behavior of $P(x)$. Still, as in QED, we will see that we can
learn quite a bit regarding the non-perturbative sector of QCD.

To proceed, we turn to the background field method.
\subsection{Background field method}
We first have to consider the set of vertices and propagators in
the background field gauge.
They define a set $\mathcal{R}$ given as follows:
\begin{equs}
\backgr.
\end{equs}
Here, names attached to the vertices and propagators are short
hand memos of the corresponding monomials in the Lagrangian.

The field content is $Q$ for the internal quantized gauge field,
$A$ for an external background gauge field, $\bar{c},c$
for the anti-ghost and ghost field. Coupling of Fermionic matter
will not change the ensuing discussion in any way and is omitted for
convenience. See Abbott \cite{Abbott2} for details.

With the set $\mathcal{R}$ comes an accompanying set of 1PI
Feynman graphs naturally labeled by elements in this set
according to their type and number of external legs.

We consider in particular Green functions for such graphs and
adopt the results of \cite{Kre05}, see also \cite{KvS}, which
read as expected $\forall r\in \mathcal{R}$
\begin{equs}
X^r=\One\pm \sum_{k\geq 1} [g^2]^k 
\sum_{|\gamma|=k}\frac{1}{\mathrm{sym}(\gamma)}B_+^{\gamma;r}
\left(\frac{\prod_{v\in\gamma^{[0]}}X^v}{\prod_{e\in\gamma^{[1]}}
\sqrt{X^e}}\right)~,
\end{equs}
with 
\begin{equs}
B_+^\gamma(h)=\sum_{\Gamma\in<\Gamma>}
\frac{{\textbf{bij}(\gamma,h,\Gamma)}}{|h|_\vee}\frac{1}
{\textrm{maxf}(\Gamma)}\frac{1}{(\gamma|h)}\Gamma~,\label{def}
\end{equs}
where maxf$(\Gamma)$ is the number of maximal forests of
$\Gamma$, $|h|_\vee$ is the number of distinct graphs obtainable
by permuting edges of $h$, $\textbf{bij}(\gamma,h,\Gamma)$ is the
number of bijections of external edges of $h$ with an insertion
place in $\gamma$ such that the result is $\Gamma$, and finally
$(\gamma|h)$ is the number of insertion places for $h$ in
$\gamma$ \cite{Kre05}. $\sum_{\Gamma\in <\Gamma>}$ indicates a
sum over the linear span $<\Gamma>$ of generators of $H$.

Next, we divide by the ideal $I$ which implements the
Slavnov--Taylor identities which here is generated order in order
in $g^2$ by
\begin{equs}
X^{\bar{c}Ac} & =
X^{\bar{c}Qc}=X^{\bar{c}c},\;X^{dQQQ}=X^{dAQQ}=X^{QQ}=X^{AA}~,\\
X^{AAQQ}& = X^{AQAQ}=X^{AQQQ}=X^{QQQQ}=X^{dQQQ}~.
\end{equs}
On $H/I$ we then get two independent Green functions which need
renormalization
\begin{equs}
X^{AA}, \;X^{\bar{c}c}~,
\end{equs}
corresponding to a mere two-element set
\begin{equs}
\mathcal{R}_{H/I}=\{AA,\bar{c}c\}~.
\end{equs}
Also, we then find an combinatorial invariant charge uniquely
defined as
\begin{equs}
C=\One/\sqrt{X^{AA}}=\One/\sqrt{X^{QQ}}~,
\end{equs}
so that, as in QED, the $\beta$-function is just half the
negative anomalous dimension of the gauge field.

Note that the addition of massless fermions would just add in
$H/I$ an element $\bar{\psi}\psi$ for the fermion self-energy
but would not change the ideal or the invariant charge.

The system of combinatorial Dyson Schwinger equations is then
\begin{equs}
X^{AA} & =  \One-\sum_k [g^2]^k B_+^{k,AA}\left(X^{AA}\left[C\right]^{-2k}\right)\\
X^{\bar{c}c} & = \One-\sum_k [g^2]^k
B_+^{k,\bar{c}c}\left(X^{\bar{c}c}\left[C\right]^{-2k}\right)~.
\end{equs}
Note that this determines $X^{AA}$ in terms of itself, while
$X^{\bar{c}c}$ is a function of itself and $X^{QQ}$. We write
$X^r=\One-\sum_{k\geq 1} [g^2]^k c_k^r$, with $c_k^r\in H/I$ the
generators of a sub Hopf algebra \cite{Kre05,BergKr06} given by
all graphs which contribute to an amplitude $r$ at a chosen order
$k$. Note that the resolution of a Green functions into images of
Hochschild one-cocycles $B_+^r$ is the mathematical equivalent of
a resolution of all overlapping divergences into non-overlapping
integral kernels. That this is possible in a non-abelian theory
was realized early by Baker and Lee \cite{Baker:1976vz}.

See \cite{Kre05,KvS,Sui07,Sui08} for explicit examples how
Hochschild cohomology and Hopf algebras relate.

There hence is a quotient Hopf algebra $H_{AA}$ spanned by
generators $c^{AA}_k$ of $X^{AA}$. Similarly, going temporarily
to the quotient Hopf algebra $H_{\bar{c}c}$ defined by
$X^{AA}=\One$, $c_k^{AA}=0$, we find that this is a cocommutative
Hopf algebra (which is obvious from setting $C=\One$) and hence
we find a factorization of groups into an abelian subgroup
$\mathrm{Spec}(H_{\bar{c}c})$ and a normal subgroup
$\mathrm{Spec}(H_{AA})$,
\begin{equs}
\mathrm{Spec}(H/I)=\mathrm{Spec}(H_{AA})\rtimes
\mathrm{Spec}(H_{\bar{c}c})~,
\end{equs}
corresponding to a short exact sequence which splits
\begin{equs}
\One\to \mathrm{Spec}(H_{AA})\to \mathrm{Spec}(H/I)\to
\mathrm{Spec}(H_{\bar{c}c})\to\One~.
\end{equs}
It is this factorization which allows us to compute the
$\beta$-function of QCD by an ODE for a single equation below.
Note that the situation is similar to QED: there, the Ward
identity allows for a similar semi-direct product structure
between photon amplitudes and Fermionic matter. Gauge invariance
then allows to eliminate all short-distance singularities in the
abelian subgroup thanks to the work of Baker, Johnson and Willey,
and one is left with the photon propagation as the only source of
renormalization.

Here, we can compute the $\beta$-function from $G^{AA}(x,L)$, but
would have to consider the full coupled system to determine
$G^{\bar{c}c}(x,L)$ (and $G^{\bar{\psi}\psi}(x,L)$), which we do
not attempt here.

Furthermore, note that the simplification at $k=1$
\begin{equs}
B_+^{1,AA}\left(X^{AA}\left[C\right]^{-2}\right)=B_+^{1,AA}(\One)
~.
\end{equs}
This is typically for gauge theories and emphasizes that we are
in a single equation situation with $s=1$ \cite{QED}. Terms
$B_+^r(\One)$ always deliver a pure residue from their
short-distance singularities, and these terms are intimately
connected to fermion determinants. We will not pursue this
connection any further here.

The background field method is then suited to our approach as it
allows us to compute the QCD beta function from a single ordinary
differential equation.

Indeed, a change of basis of primitives allows to reduce the
application of Feynman rules to the study of one-variable Mellin
transforms for the integral kernels for the above primitives
$B_+^{k,r}(\One)$, and from there we can strictly follow the
techniques of \cite{KY06,KY,Y} to get to an single ordinary
differential equation:
\begin{equs}
\gamma_1(x)+\gamma_1(x)^2-P(x)-x\gamma_1(x)\gamma_1^\prime(x)=0~.
\end{equs}
Here, $P(x)$ is a suitable series over primitives. As always, we
renormalize using a momentum scheme with subtractions at
$q^2=\mu^2$. That scheme is uniquely suited \cite {CelmGon} to
our gauge-invariant non-perturbative approach.

\subsection{Qualitative properties of QCD}
From perturbative computations, asymptotic freedom is firmly
established. We will establish it beyond perturbation theory
below. Perturbatively, this is mainly a self-consistency
statement: assuming that the QCD coupling constant is small, we
approximate the theory by its loop expansion to a few orders. The
resulting polynomial approximation to the beta function supports
the claim of asymptotic freedom in perturbation theory:
$\beta(x)<0$, $0<x<1$, hence at large momentum transfer
$\lim_{-Q^2\to\infty} \alpha_s(L)\to 0$, $L=\ln( -Q^2/\mu^2)$.
The coupling indeed becomes small in that limit.

As usual, perturbation theory agrees well with observations:
asymptotic freedom is a well-established experimental fact.

Much more intricate is the study
\begin{equs}
-Q^2\to 0_+~.
\end{equs}
This is beyond the reach of perturbation theory. Nevertheless,
different approaches point out that in that limit, the gluon
propagator might turn to a constant, confirming an old suggestion
of Cornwall that the free gluon develops in the interacting
theory a momentum-dependent mass which vanishes at high energies,
but turns to a non-vanishing constant in the limit $-Q^2\to 0_+$.
We study this behavior from our viewpoint in section \ref{gap}.

Results to this effect were already obtained by\\
i) Lattice computations \cite{Aguilar:2008xm};

ii) numerical study of Dyson Schwinger equations truncated in a
gauge invariant way \cite{Boucaud:2008ky};

iii) in the Gribov-Zwanziger formalism \cite{john}.

Below, we reconsider the problem from a study of the possible
structure of solutions of Dyson Schwinger equations. We want to
establish asymptotic freedom beyond perturbation theory, and want
to discuss to what extent a solution which exhibits asymptotic
freedom can also exhibit a mass gap.

Again, as in the case of QED, we find the most interesting
solution to be a separatrix. In the case of QCD, that separatrix
is the only solution which has asymptotic freedom.

\section{Results}

\label{sec:QCD}
In QCD, the Dyson-Schwinger equation for $\gamma_1$ is
\begin{equs}
\label{DSeqnQCDfi}
\frac{{\rm d}\gamma_1(x)}{{\rm d}x} = 
f(\gamma_1(x),x)\equiv
\frac{\gamma_1(x)
+\gamma_1(x)^2-P(x)}{x\gamma_1(x)}~.
\end{equs}
We will assume that the primitive skeleton function satisfies the
following assumptions:
\begin{itemize}
\item[H1:] $P$ is a twice differentiable function on ${\bf R}^{+}$, with 
$P(0)=0$, $P'(0)<0$ and $P''(0)<0$.
\item[H2:] There exist $x^{\star}$ such that $P(x)>-\frac{1}{4}$
and $P''(x)\leq0$ (i.e. $P$ is concave) on
$[0,x^{\star}]$.
\item[H3:] The function $P(x)$ satisfies $P(x)<0$ for all $x>0$.
\end{itemize}
As in \cite{QED}, we avoid the singularities of
(\ref{DSeqnQCDfi}) at $\gamma_1=0$ and $x=0$ by specifying with
an initial condition at $x^{\star}$, namely,
\begin{equs}
\label{DSeqnQCD}
\frac{{\rm d}\gamma_1(x)}{{\rm d}x} = 
f(\gamma_1(x),x)\equiv
\frac{\gamma_1(x)
+\gamma_1(x)^2-P(x)}{x\gamma_1(x)}~,~~~
\gamma_1(x^{\star})=\gamma_0~.
\end{equs}
This ensures that solutions of (\ref{DSeqnQCD}) exist at least
locally around $x=x^{\star}$. Though we will mainly look for
solutions or (\ref{DSeqnQCDfi}) with $\gamma_1(x^{\star})<0$ and
$x\geq0$, we will occasionally comment on the
$\gamma_1(x^{\star})>0$ case.

In the QED case, we proved in \cite{QED} (see also Section
\ref{sec:QED} of the present paper) the existence of a unique
value $\gamma_1^{\star}(x_0)$ (the separatrix) separating
solutions that exist globally for all $x\geq x_0$ from those that
can only be continued up to a finite $x_{\rm max}>x_0$. As shown
in Section \ref{sec:QED}, all solutions in QED can be continued
as $x\to0$, differing there `only' by a flat behavior
$\sim\ed^{-\frac{1}{x}}$. In QCD the situation is reversed: we
will prove that all solutions starting at some appropriate
$x^{\star}$ can be continued as $x\to\infty$, but that there is a
unique value $\gamma_1^{\star}(x^{\star})$ that separates
solutions that cannot be continued as $x\to0$ from those that
can, which either satisfy $\gamma_1(0)=-1$ if
$\gamma_1(x^{\star})<\gamma_1^{\star}(x^{\star})$ or
$\gamma_1(0)=0$ if
$\gamma_1(x^{\star})=\gamma_1^{\star}(x^{\star})$. We call the
solution $\gamma_1^{\star}(x)$ that satisfies
$\gamma_1^{\star}(0)=0$ the {\em asymptotically free} solution.

We will also use more speculative hypotheses on $P(x)$:
\begin{itemize}
\item[S1:] There exists $p>0$ such that $P(x)=-c x^p+{\it o}(x^p)$
as $x\to\infty$.
\item[S2:] There exist $x_c\geq -\frac{1}{P'(0)}$ such that 
$P''(x)\leq0$ for all $x\in[0,x_c]$
\item[S3:] There exist a (finite) interval $[x_{c},x_{d}]$ with
$x_{c}>x^{\star}$ such that
\begin{equs}
-\int_{x_{c}}^{x_{d}}\frac{1+4P(z)}{2z}{\rm d}z\geq1~.
\end{equs}
\item[S4:] There exist finite $x_l$ and $x_r$ such that 
$P(x_l)=P(x_r)=-\frac{1}{4}$ and $P(x)>-\frac{1}{4}$ for all 
$0\leq x<x_l$ and $x>x_r$. The function $P(x)$ does {\em not}
satisfy S1, S2 and S3, but rather
$\lim_{x\to\infty}(P(x)-P_{\infty})=\lim_{x\to\infty}xP'(x)=0$
for some $P_{\infty}>-\frac{1}{4}$.
\end{itemize}
Let us briefly comment on our logic here. The H1-H3 hypotheses
are a bare minimum for the results we will present below. Within
perturbation theory, we have
\begin{equs}
P(x)=\gamma_1(x)+{\cal O}(x^3)=-\beta_1 x- \beta_2 x^2+{\cal O}(x^3)
\label{eqn:pertu}
\end{equs}
as $x\to0$, where $-\beta_1$ and $-\beta_2$ are the $1$ and $2$-loop
coefficients of the $\beta$ function, namely $\beta_1=9$ and
$\beta_2=64$ for $n_f=6$. As such, the hypotheses H1 and H2 are
reasonable. While it also follows from (\ref{eqn:pertu}) that
$P(x)<0$ at least for small values of $x$, extending this to all
values of $x$ is somewhat more speculative.

As we will show below, if $P(x)<-\frac{1}{4}$ on a `sufficiently
large interval', for instance if either S1, S2 or S3 hold, all
solutions of (\ref{DSeqnQCD}) satisfy $\gamma_1(x)=-1$ for some
$x\geq x^{\star}$. It then follows from {\rm H3} that they grow
linearly as $x\to\infty$ if
\begin{equs}
{\cal D}(P)=
-\int_{x^{\star}}^{\infty}\frac{P(z)}{z^3}{\rm d}z<\infty~,
\end{equs}
and faster than linearly if ${\cal D}(P)=\infty$. Incidentally,
we showed in \cite{QED} (see also Section \ref{sec:QED} of the
present paper), that the finiteness/infiniteness of ${\cal D}(P)$
was intimately linked with the existence/non-existence as
$x\to\infty$ of solutions of the analogous of (\ref{DSeqnQCD})
for QED. It is striking to see that the {\em same criterion}
distinguishes between different type of behavior in QCD as
well\footnote{though of course the primitive skeleton functions
$P$ are different in both cases}.

Note that an anomalous dimension growing at least linearly as
$x\to\infty$ leads to a Landau pole for the running
coupling, and hence a serious obstacle to studying the infrared
behavior of the Gluon propagator. Despite that, we will show in
Section \ref{sec:Dirk} that this pole (if present) can be removed
using an unsubtracted dispersion relation, see e.g.
\cite{Shirkov}, and the Gluon propagator can still be studied in
the infrared limit.

In contrast, if $\gamma_1(x)$ is finite as $x\to\infty$, we avoid
the Landau pole, and can study the Gluon propagator without using
dispersion relations. Such constant asymptotics for $\gamma_1$
can only happen if $P(x)$ tends to a constant as $x\to\infty$.
This motivates (part of) the hypothesis S4. Under that
hypothesis, we will show that there is only one solution
that satisfies
\begin{equs}
\lim_{x\to\infty}\gamma_1(x)=-\frac{1+\sqrt{1+4P_{\infty}}}{2}
\equiv\gamma_{\infty}~.
\end{equs}
If $P_{\infty}=0$, we call that solution the {\em confinement
solution} $\gamma_1^{c}(x)$, and if $P_{\infty}>0$, we call it a
{\em strong confinement solution}. On physical grounds, the
asymptotically free and (strong) confinement solutions need to be
the same. Unfortunately, for generic $P(x)$ satisfying H1, H2 and
S4, these two solutions are different. We did not succeed in
finding a sufficient condition on $P(x)$ that guarantees both
solutions are the same. Despite that, a necessary condition is
certainly that $P(x)$ makes at least one (small) excursion below
$-\frac{1}{4}$, while avoiding the S1-S3 conditions, see also
figure \ref{fragile} below or Section \ref{sec:confsol}.

By standard folklore and heuristics \cite{Acharya}, a nowhere
vanishing $\beta$-function, $\beta(x)<0,\;\forall x>0$, which we will
indeed establish below under assumption H1 above, implies a mass
gap in QCD, and hence a confinement scenario following old ideas
of Cornwall \cite{Cornwall}. We will come back to that in Section
\ref{gap} of this paper.

\begin{remark}
\label{rem:zero}
If $P_{\infty}>0$ in hypothesis S4, $P(x)$ has a further zero for
a finite $x_1>0$, $P(x_1)=0$. In such a case, using the running
coupling formulation of (\ref{DSeqnQCDfi}), one sees that some
solutions spiral around the zero of $P(x)$, themselves having
infinitely many zeroes. The asymptotically free solution
$\gamma_1^{\star}(x)$ may or may not spiral around $x=x_1$,
depending on details of $P$. Should it be captured, we get a
solution $\gamma_1(x)$ which has a UV fix-point at zero and an
infrared fixpoint at $x_1$. This is the Banks-Zaks scenario
\cite{BZ}.
\end{remark}

In the remainder of this section, we are going to state our main
results. The proofs and technical details are postponed to
Section \ref{sec:technicalities} of this paper.

Our first main result gives a complete characterization of the
behavior of solutions of (\ref{DSeqnQCD}) for $x<x^{\star}$. In
particular, it establishes the uniqueness of the solution
exhibiting asymptotic freedom.
\begin{theorem}\label{thm:completechar}
Under the hypotheses {\rm H1} and {\em H2}, there is a unique
value $\gamma_1^{\star}(x^{\star})<0$ such that the corresponding
solution $\gamma_1^{\star}(x)$ of (\ref{DSeqnQCD}) exists for all
$x\in[0,x^{\star}]$ and satisfies
${\displaystyle\lim_{x\to0}}\gamma_1^{\star}(x)=0$. Additionally,
that solution satisfies
\begin{equs}
\frac{\sqrt{1+4P(x)}-1}{2}
\leq \gamma_1^{\star}(x)\leq P'(0)x
~,
\label{eqn:sandwich}
\end{equs}
If $\gamma_1(x^{\star})<\gamma_1^{\star}(x^{\star})$, then the
corresponding solution satisfies
${\displaystyle\lim_{x\to0}}\gamma_1(x)=-1$. If
$\gamma_1^{\star}(x^{\star})<\gamma_1(x^{\star})<0$, then there
exist $x_{\rm min}>0$ such that the corresponding solution
satisfies $\gamma_1(x_{\rm min})=0$.
\end{theorem}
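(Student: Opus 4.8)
The plan is to read off the phase portrait of \reff{DSeqnQCD} from its two nullclines together with the singular lines $\gamma_1=0$ and $x=0$, and then to isolate $\gamma_1^{\star}$ as the unique trajectory threading a narrowing antifunnel that pinches onto the corner $(x,\gamma_1)=(0,0)$. Writing the numerator of $f$ as $\gamma_1^2+\gamma_1-P(x)$, its roots are $\gamma_\pm(x)=\tfrac12(-1\pm\sqrt{1+4P(x)})$, which are real on $[0,x^{\star}]$ by H2 and satisfy $\gamma_+(0)=0$ and $\gamma_-(0)=-1$. First I would record the sign of $f$ for $x>0$: strictly between the roots $f>0$, outside them $f<0$, with the extra flip across $\gamma_1=0$ supplied by the denominator. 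A key preliminary observation is that H1 and H2 force $P(x)\le P'(0)x<0$ on $(0,x^{\star}]$ (concavity together with $P(0)=0$), so the whole interval behaves as if H3 held; in particular $\gamma_+(x)\le P(x)\le P'(0)x<0$ there, placing the line $\ell(x)=P'(0)x$ strictly between $\gamma_+$ and $0$.

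Next I would set up the region $R=\{(x,\gamma_1):\gamma_+(x)\le\gamma_1\le\ell(x)\}$, whose width $\ell(x)-\gamma_+(x)$ tends to $0$ as $x\to0$. Reading the flow in the direction of decreasing $x$, I would check the two boundary inequalities. On the lower edge $f(\gamma_+(x),x)=0$ while $\gamma_+'(x)=P'(x)/\sqrt{1+4P(x)}<0$, so a solution on $\gamma_+$ is pushed below it; on the upper edge a short computation gives $f(\ell(x),x)-P'(0)=(P'(0)x-P(x))/(P'(0)x^2)<0$, so a solution on $\ell$ is pushed above it. Thus, as $x$ decreases, $R$ can be exited only downward across $\gamma_+$ and upward across $\ell$: both edges repel inward and $R$ is a backward-narrowing antifunnel. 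To promote the existence of a trapped trajectory to uniqueness I would verify the dispersivity sign $\partial_{\gamma_1}f=(\gamma_1^2+P(x))/(x\gamma_1^2)$: in $R$ one has $|\gamma_1|\le|\gamma_+(x)|$ and, since $P<0$, $\gamma_+(x)^2+P(x)<0$, whence $\partial_{\gamma_1}f<0$ throughout $R$, exactly the sign making trajectories contract as $x\to0$. The standard narrowing-antifunnel argument (as in \cite{QED}) then yields a unique solution $\gamma_1^{\star}$ remaining in $R$ for all $x\in(0,x^{\star}]$; this is the asserted separatrix, the sandwich bound \reff{eqn:sandwich} is just membership in $R$, and $\lim_{x\to0}\gamma_1^{\star}(x)=0$ follows by squeezing between $\gamma_+(x)\to0$ and $\ell(x)\to0$.

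For the two off-separatrix alternatives I would use that distinct solutions of \reff{DSeqnQCD} cannot cross, so initial data are ordered by $\gamma_1(x^{\star})$. If $\gamma_1(x^{\star})<\gamma_1^{\star}(x^{\star})$, the solution stays below $\gamma_1^{\star}$ and so eventually crosses $\gamma_+$ into the band $\gamma_-(x)<\gamma_1<\gamma_+(x)$ where $f>0$; there it decreases (as $x$ decreases) but cannot cross $\gamma_-$, and a local analysis near $\gamma_1=-1$, writing $\gamma_1=-1+v$ so that $v'\approx v/x+P'(0)$, shows $v\to0$, forcing $\lim_{x\to0}\gamma_1(x)=-1$. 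If $\gamma_1^{\star}(x^{\star})<\gamma_1(x^{\star})<0$, the solution stays above $\gamma_1^{\star}\ge\gamma_+$, hence in the zone where $f<0$, so it increases as $x$ decreases; it cannot be funnelled into the corner (reached only along $\gamma_1^{\star}$), and since $f\to-\infty$ as $\gamma_1\to0^-$ it meets $\gamma_1=0$ with vertical tangent at a finite $x_{\rm min}>0$, as claimed.

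The main obstacle is the behaviour at the singular corner $(0,0)$, where the two boundaries coalesce and the field blows up. The delicate points are verifying the dispersivity sign $\partial_{\gamma_1}f<0$ throughout $R$ — this is precisely where concavity, via $P(x)<0$ on $(0,x^{\star}]$, is indispensable — so that the trapped trajectory is unique rather than a one-parameter family, and making rigorous the endpoint dichotomy, namely that a sub-separatrix solution genuinely limits to $-1$ while a super-separatrix one reaches $0$ at strictly positive $x_{\rm min}$ instead of also being drawn into the corner. Both require a careful local analysis near $\gamma_1=0$ and near $\gamma_1=-1$ to supplement the global antifunnel argument.
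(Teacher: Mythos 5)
Your proposal is correct in substance and reaches the theorem by a route that is recognizably the same argument in different clothing. The paper proves existence by a shooting argument: it defines the two open, disjoint, ordered sets ${\rm I}_1$ (initial data whose solutions reach $\gamma_1=0$) and ${\rm I}_2$ (initial data whose solutions hit the nullcline $\gamma_c^{+}$), shows each is non-empty by explicit estimates (Propositions \ref{prop:Ione} and \ref{prop:Itwo}), and takes a point in neither; it proves uniqueness via the kernel identity (\ref{solutionQCD}), whose kernel $K$ is precisely $\exp\bigl(\int\partial_{\gamma_1}f\bigr)$ evaluated between two solutions, and whose divergence as $x\to0$ is exactly your dispersivity integral $\int\frac{|P'(0)|}{\gamma_c^{+}(z)^2}\,{\rm d}z\sim\frac{1}{|P'(0)|x}$. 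Your antifunnel between $\gamma_c^{+}(x)$ and $P'(0)x$ uses the same two comparison curves (the paper's Proposition \ref{prop:Ione} is the statement that $P'(0)x$ is a fence, via $R[P'(0)x]>0$), and your fence computations on both edges check out, as does $\partial_{\gamma_1}f=(\gamma_1^2+P)/(x\gamma_1^2)<0$ on the strip $\gamma_c^{+}(x)\le\gamma_1<0$ (since $\gamma_c^{+}(x)^2+P(x)=2P-\gamma_c^{+}<0$ for $P<0$). What your packaging buys is that existence of an inhabitant of the narrowing antifunnel comes for free from connectedness, without the quantitative content of Proposition \ref{prop:Itwo}; what the paper's packaging buys is that its uniqueness estimate uses only the lower bound $\gamma_i\ge\gamma_c^{+}$ and therefore applies at once to \emph{every} solution with $\lim_{x\to0}\gamma_1=0$, not merely to those trapped in $R$.

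The one place where your sketch is genuinely thinner than the paper is the claim that a solution which escapes above the line $P'(0)x$ must reach $\gamma_1=0$ at a strictly positive $x_{\rm min}$ rather than sliding into the corner $(0,0)$ from above that line. This is needed twice: for the last clause of the theorem, and to upgrade ``unique solution staying in $R$'' to ``unique solution tending to $0$''. Your justification --- ``$f\to-\infty$ as $\gamma_1\to0^{-}$'' --- is not sufficient by itself, because $-P(x)$ vanishes simultaneously as $x\to0$, so the singular term $-P(x)/(x\gamma_1)$ is only of size $|P'(0)|/\gamma_1$ and one must rule out a balance in which $\gamma_1$ and $x$ go to zero together above the line. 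The paper closes this with the explicit lower bound of Proposition \ref{prop:Ione}, $\gamma_1(x)\ge\frac{x}{x_0}\bigl(P'(0)x_0+(\gamma_1(x_0)-P'(0)x_0)\,\ed^{\,1/(|P'(0)|x)-1/(|P'(0)|x_0)}\bigr)$, whose right-hand side tends to $+\infty$ as $x\to0$; alternatively, your own dispersivity bound applied on the strip between $P'(0)x$ and $0$ yields the same conclusion. You flag this point (and the analogous local analysis near $\gamma_1=-1$, where your ansatz $v'\approx v/x+P'(0)$ correctly reproduces the paper's $\gamma_1=-1+{\cal O}(x\ln x)$) as requiring care, so I regard it as an acknowledged incompleteness rather than an error, but it is the step that carries the real analytic weight and should be written out.
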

Note that since $\frac{\sqrt{1+4P(x)}-1}{2}=P'(0)x+{\cal O}(x^2)$
as $x\to0$, (\ref{eqn:sandwich}) shows that
$\gamma_1^{\star}(x)=P'(0)x+{\cal O}(x^2)$ as $x\to0$.

As we will prove in Proposition \ref{prop:negglobal}) of Section
\ref{sec:technicalities} below, all solutions of Theorem
\ref{thm:completechar} can be continued as $x\to\infty$ by only
adding the H3 assumptions. In Proposition
\ref{prop:doublevalued}, we will show that solutions that satisfy
$\gamma_1(x_{\rm min})=0$ for some $0<x_{\rm min}<x^{\star}$ can
be continued in the first quadrant (becoming double-valued) by
reverting to the so-called `running coupling' formulation of
(\ref{DSeqnQCD}) (see also \cite{QED}). In particular these
solutions will satisfy $\gamma_1(x_0)>0$ for some $x_0>x_{\rm
min}$.

Our second main result concerns the asymptotic behavior as
$x\to\infty$ of solutions that enter the first quadrant, or
attain the value $-1$ somewhere. This last condition can be
verified under additional assumptions on $P$ such as S1, S2 or
S3.
\begin{proposition}
\label{prop:minusone}
Assume $P(x)$ satisfies H1-H3 and that one of the two
following statements holds:
\begin{enumerate}
\item $-1<\gamma_1(x^{\star})<0$ and $P(x)$ satisfies S1 or S3,
\item $\gamma_1(x^{\star})\leq\gamma_1^{\star}(x^{\star})$ and
$P(x)$ satisfies S2.
\end{enumerate}
Then there exists $x_0>x^{\star}$ such that the corresponding
solution $\gamma_1(x)$ satisfies $\gamma_1(x_0)=-1$.
\end{proposition}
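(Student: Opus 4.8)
The plan is to reduce all three cases to a single energy-type estimate. Writing $t=\ln x$ and $E=\gamma_1^2$, equation (\ref{DSeqnQCD}) becomes $\tfrac12\dot E=\gamma_1+\gamma_1^2-P$ with $\dot{}={\rm d}/{\rm d}t$; completing the square gives the identity
\begin{equs}
\dot E=2\left(\gamma_1+\tfrac12\right)^2-\frac{1+4P(e^t)}{2}~,
\end{equs}
and in particular the one-sided bound $\dot E\geq-\tfrac12(1+4P)$, valid for every $x>0$ and $\gamma_1\neq0$. Before using it I would record two elementary facts. By H3 the numerator of $f$ tends to $-P>0$ as $\gamma_1\to0^-$, so $f\to-\infty$ there and a solution with $\gamma_1(x^{\star})<0$ can never reach $0$ and stays negative. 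At $\gamma_1=-1$ the numerator is $-P>0$ while $x\gamma_1<0$, so $f<0$: the line $\gamma_1=-1$ is crossed transversally and downward. Consequently it suffices to prove that $E$ attains the value $1$; since $E(x^{\star})<1$ (we take $-1<\gamma_1(x^{\star})$, as a solution with $\gamma_1(x^{\star})\leq-1$ decreases and stays below $-1$), the intermediate value theorem then produces $x_0>x^{\star}$ with $\gamma_1(x_0)=-1$. As long as $\gamma_1$ stays in $(-1,0)$ the field $f$ is bounded, so Proposition \ref{prop:negglobal} guarantees existence on the intervals used below.

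For the S3 part of case~1 I would integrate $\dot E\geq-\tfrac12(1+4P)$ over $[x_c,x_d]$ in the $x$ variable, obtaining
\begin{equs}
\gamma_1(x_d)^2-\gamma_1(x_c)^2\geq-\int_{x_c}^{x_d}\frac{1+4P(z)}{2z}\,{\rm d}z\geq1~.
\end{equs}
Since $\gamma_1(x_c)^2\geq0$ this forces $\gamma_1(x_d)^2\geq1$, so $E$ reaches $1$ somewhere in $[x_c,x_d]$. The S1 part reduces to this: under S1 one has $-\tfrac12(1+4P(z))/z\sim2cz^{p-1}$ as $z\to\infty$, whose integral diverges, so for $x_d$ large enough the interval $[x^{\star},x_d]$ satisfies the S3 inequality and the same conclusion applies.

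The delicate case is case~2 (hypothesis S2), which I expect to be the main obstacle. Concavity of $P$ with $P(0)=0$ gives the tangent-line bound $P(z)\leq P'(0)z$, so on $[-\tfrac1{4P'(0)},-\tfrac1{P'(0)}]$ (where this bound already forces $P<-\tfrac14$, using $x_c\geq-\tfrac1{P'(0)}$) one computes
\begin{equs}
-\int_{-1/(4P'(0))}^{-1/P'(0)}\frac{1+4P(z)}{2z}\,{\rm d}z\geq\frac32-\ln2~,
\end{equs}
which is strictly smaller than $1$. Thus the crude bound $\dot E\geq-\tfrac12(1+4P)$ does not by itself suffice, and the missing energy must be extracted from the discarded term $2(\gamma_1+\tfrac12)^2$. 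This is where the hypothesis $\gamma_1(x^{\star})\leq\gamma_1^{\star}(x^{\star})$ enters: by uniqueness for (\ref{DSeqnQCD}) solutions cannot cross, so $\gamma_1(x)\leq\gamma_1^{\star}(x)$ for all $x\geq x^{\star}$, pinning the solution below the separatrix so that $\int2(\gamma_1+\tfrac12)^2\,{\rm d}t$ can be bounded below by the deficit $1-(\tfrac32-\ln2)$. Making this quantitative — controlling how much `time' the sub-separatrix solution spends away from $-\tfrac12$ and handling the interplay with the position of $x^{\star}$ relative to $-\tfrac1{4P'(0)}$ — is the step requiring the most care; a cleaner alternative worth trying is to run the comparison directly against $\gamma_1^{\star}$ and show the separatrix itself already carries enough energy.
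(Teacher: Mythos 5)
Your treatment of case~1 is correct and is essentially the paper's own argument: the paper derives the same differential inequality $\frac{{\rm d}}{{\rm d}x}\gamma_1(x)^2\geq-\frac{1+4P(x)}{2x}$ (discarding the square $(\gamma_1+\frac12)^2$), integrates it over $[x_c,x_d]$, and invokes S3; it phrases the conclusion as a contradiction rather than via the intermediate value theorem, and it likewise disposes of S1 by noting it implies S3. The global existence and sign-preservation facts you cite are indeed supplied by Proposition \ref{prop:negglobal}.

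Case~2, however, is a genuine gap, and you have correctly diagnosed that it is one: your proof stops at a sketch whose key quantitative step is missing. Worse, the sketch as proposed cannot be completed. The only upper-bound information available from the pinning $\gamma_1(x)\leq\gamma_1^{\star}(x)$ is (after extending Corollary \ref{cor:asympfreecor} past $x^{\star}$ using S2) $\gamma_1(x)\leq P'(0)x$, and feeding this back into your energy identity over $[a,4a]$ with $a=-\frac{1}{4P'(0)}$ yields at best
\begin{equs}
E(4a)\geq \underbrace{\textstyle\frac{1}{16}}_{E(a)}
+\underbrace{\textstyle\frac{3}{2}-\ln 2}_{\text{crude term}}
+\underbrace{\textstyle\frac{\ln 2}{2}-\frac{1}{4}}_{\text{square term on }[2a,4a]}
=\textstyle\frac{21}{16}-\frac{\ln 2}{2}\approx 0.966<1~,
\end{equs}
so the deficit you identify cannot be recovered this way; since S2 only guarantees concavity up to $x_c$, which may equal $-1/P'(0)=4a$ exactly, there is no room to integrate further. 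The paper's proof abandons energy bookkeeping entirely and works pointwise: take $\gamma_2(x)=P'(0)x$ as comparison function in the kernel identity (\ref{solutionQCD}). Under S2 the concavity bound $P(x)\leq P'(0)x$ holds on all of $[0,x_c]$, whence $R[\gamma_2](x)\geq 0$ there; combined with the initial ordering $\gamma_1(x^{\star})\leq\gamma_1^{\star}(x^{\star})<P'(0)x^{\star}$ (Corollary \ref{cor:asympfreecor}), both terms on the right of (\ref{solutionQCD}) are nonpositive, so $\gamma_1(x)\leq P'(0)x$ for all $x\in[x^{\star},x_c]$. Then S2's requirement $x_c\geq-\frac{1}{P'(0)}$ finishes the proof at once: $\gamma_1(x_c)\leq P'(0)x_c\leq-1$, and continuity gives the crossing point $x_0$. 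Note that this is exactly why S2 demands $x_c\geq-1/P'(0)$ rather than merely concavity on some interval: the tangent line itself must be given room to reach the value $-1$. Your closing remark (``run the comparison directly against $\gamma_1^{\star}$'') points in the right direction, but the correct comparison object is the tangent line, not the separatrix, and the mechanism is the sign of $R[\gamma_2]$ in (\ref{solutionQCD}), not an energy estimate.
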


To be able to state our asymptotic result as $x\to\infty$, we
need first to introduce the {\em slope function} ${\rm
S}_{P}(x_0,x)$. This function is given by
\begin{equs}
{\rm S}_{P}(x_0,x)=
\left(
\frac{\gamma_1(x_0)^2}{x_0^2}
+2
\int_{x_0}^{x}
\frac{-P(z)}{z^3}{\rm d}z
\right)^{\frac{1}{2}}~,
\end{equs}
Note that, if ${\cal D}(P)<\infty$, the slope function ${\rm
S}_{P}(x_0,x)$ goes to a finite value as $x\to\infty$ for any
$x_0$.

We can now completely describe the asymptotic behavior as
$x\to\infty$ of solutions of (\ref{DSeqnQCD}):
\begin{theorem}
\label{thm:asxtoinfty}
Assume $P(x)$ satisfies H1-H3. If there exists $x_0>0$ such
that either $\gamma_1(x_0)>0$ or $\gamma_1(x_0)\leq-1$ then
\begin{equs}[3]
 x~S_{P}(x_0,x)
\leq
\gamma_1(x)
&\leq 
\phantom{-}
x~
\myl{18}
S_{P}(x_0,x)+
\frac{1}{x_0}\myr{18}-1 &
\mbox{~~~if~~~}&\gamma_1(x_0)>0~,\\
-x~S_{P}(x_0,x)
\leq
\gamma_1(x)
&\leq 
-x~
\myl{18}
S_{P}(x_0,x)-
\frac{1}{x_0}\myr{18}-1 &
\mbox{~~~if~~~}&\gamma_1(x_0)\leq-1
~.
\end{equs}
Furthermore, if ${\cal D}(P)<\infty$ and $\gamma_1(x_0)\leq-1$ or 
$\gamma_1(x_0)>0$, there exists $s>0$ such that
\begin{equs}
\lim_{x\to\infty}\frac{\gamma_1(x)}{x}=
\left\{
\begin{array}{rl}
-s<0 & \mbox{if}~~\gamma_1(x_0)\leq -1\\[2mm]
 s>0 & \mbox{if}~~\gamma_1(x_0)> 0
\end{array}\right.
~.
\end{equs}
If $\gamma_1(x_0)\leq-1$, the convergence towards the limit is
given by
\begin{equs}
\my{|}{14}\frac{\gamma_1(x)}{x}+s\my{|}{14}
\leq C\int_{x}^{\infty}
\frac{-P(z)}{z^3}{\rm d}z~.
\label{eqn:linearatinfone}
\end{equs}
If ${\cal D}(P)<\infty$ and $\gamma_1(x_0)>0$, then
(\ref{eqn:linearatinfone}) also hold, with $-s$ replaced by $s$.
\end{theorem}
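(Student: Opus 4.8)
The backbone of the argument is an exact first integral. The plan is to introduce $W(x)=\gamma_1(x)^2/x^2$ and differentiate it along (\ref{DSeqnQCD}); using $\gamma_1\gamma_1'=(\gamma_1+\gamma_1^2-P)/x$, the $\gamma_1^2$ terms cancel and leave $W'=2(\gamma_1-P)/x^3$. Since ${\rm S}_P(x_0,\cdot)^2$ has derivative $-2P/x^3$ and equals $W$ at $x_0$, subtracting and integrating gives the identity
\[
\frac{\gamma_1(x)^2}{x^2}={\rm S}_P(x_0,x)^2+2\int_{x_0}^x\frac{\gamma_1(z)}{z^3}\,{\rm d}z ,
\]
from which everything is read off once the sign of $\gamma_1$ is under control. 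First I would fix that sign: by H3 the numerator of (\ref{DSeqnQCD}) is strictly positive on $\{\gamma_1>0\}$ and on $\{\gamma_1\le-1\}$, so both regions are forward invariant and $\gamma_1$ is monotone on each. Hence the integrand above has a fixed sign, and taking the appropriate square root yields both lower bounds at once: $\gamma_1\ge x\,{\rm S}_P$ when $\gamma_1(x_0)>0$, and $\gamma_1\ge-x\,{\rm S}_P$ when $\gamma_1(x_0)\le-1$.

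For the upper bounds I would compare $h=\gamma_1/x$, which obeys $h'=1/x^2-P/(x^3h)$, with ${\rm S}_P$, which obeys ${\rm S}_P'=-P/(x^3{\rm S}_P)$. In the region $\gamma_1>0$,
\[
h'-{\rm S}_P'-\frac1{x^2}=\frac{-P}{x^3}\,\frac{{\rm S}_P-h}{h\,{\rm S}_P}\le0 ,
\]
the sign being forced by the lower bound $h\ge{\rm S}_P$ just obtained; integrating from $x_0$ (where $h(x_0)={\rm S}_P(x_0,x_0)$) and using $\int_{x_0}^x z^{-2}{\rm d}z=1/x_0-1/x$ gives $\gamma_1\le x({\rm S}_P+1/x_0)-1$, the $-1$ coming from $x\cdot(-1/x)$. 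Repeating the computation for $-h$ in the region $\gamma_1\le-1$ — where the $1/x^2$ term enters with the opposite sign and $|h|\le{\rm S}_P$ fixes the inequality — produces the companion bound $\gamma_1\le-x({\rm S}_P-1/x_0)-1$.

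Finally, suppose $\cD(P)<\infty$, so ${\rm S}_P(x_0,x)\to{\rm S}_\infty<\infty$; the squeeze makes $h$ bounded, and then the identity forces $h^2$ to converge. As $h$ has constant sign, $h\to\pm s$ with $s\ge0$. Positivity is clean: in the region $\gamma_1\le-1$ the upper bound gives $|h|\ge{\rm S}_P-1/x_0+1/x\to{\rm S}_\infty-1/x_0$, while ${\rm S}_\infty^2=|h(x_0)|^2+2\int_{x_0}^\infty(-P)/z^3{\rm d}z>1/x_0^2$ by H3 and $|\gamma_1(x_0)|\ge1$, so $s\ge{\rm S}_\infty-1/x_0>0$ (in the region $\gamma_1>0$ monotonicity already gives $s\ge h(x_0)>0$). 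For the rate I would integrate the $h$-equation from $x$ to $\infty$, obtaining the exact relation
\[
\frac{\gamma_1(x)}{x}\mp s=-\frac1x-\int_x^\infty\frac{-P(z)}{z^3\,h(z)}\,{\rm d}z
\]
(upper sign for $\gamma_1(x_0)>0$, lower for $\gamma_1(x_0)\le-1$); since $h\to\pm s\ne0$ the integral is $O\big(\int_x^\infty(-P)/z^3{\rm d}z\big)$, which is (\ref{eqn:linearatinfone}) up to the elementary term $1/x$.

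The step I expect to be the main obstacle is precisely the absorption of that $1/x$: equivalently, one must show $\gamma_1(x)=\mp sx-1+o(1)$ with the remainder controlled by $\int_x^\infty(-P)/z^3{\rm d}z$, and this domination of the elementary $1/x$ contribution by the tail integral is exactly where the decay/growth hypotheses on $P$ (such as S1) enter — the two-sided bounds alone only pin $\gamma_1/x$ down to within an interval of width $1/x_0$, so the sharp rate genuinely requires this extra input.
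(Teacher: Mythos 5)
Your proposal is correct and takes essentially the same route as the paper: your exact identity for $\gamma_1(x)^2/x^2$ is the computation of $\frac{1}{2}\frac{{\rm d}}{{\rm d}x}(\gamma_1(x)^2)$ in Propositions \ref{prop:doublevaluedinfinity} and \ref{prop:negglobal} kept as an equality rather than an inequality, and your integrated equation for $h=\gamma_1/x$ is precisely the integral formulation (\ref{eqn:usualintegral}), which the paper uses in the form (\ref{eqn:limitlimitohmylimit}) to extract the slope and the rate. The residual $1/x$ term you flag at the end is a genuine observation, but it is not an extra gap relative to the paper: the paper passes from (\ref{eqn:limitlimitohmylimit}) to (\ref{eqn:linearatinfoneagain}) without addressing it, and indeed the stated rate holds cleanly for $(\gamma_1(x)+1)/x+s$ (whose derivative is exactly $-P(x)/(x^3h(x))$), while for $\gamma_1(x)/x+s$ one picks up an additional $O(1/x)$ unless $\int_x^{\infty}(-P(z))z^{-3}{\rm d}z$ dominates $1/x$, as it does under a hypothesis such as S1 with $p\geq1$.
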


We want to stress here that the slope value $s$ depends on the
actual solution under consideration. Also, for solutions that
eventually enter the first quadrant, the value of the slope $s$
has no reason to be the same along the two branches of the
solution (the one in the first quadrant, and the one in the
fourth). Also, note that the result depends only on the
assumption $\gamma_1(x_0)=-1$.

We now state the existence and uniqueness of {\em confinement
solutions} under hypothesis S4.
\begin{theorem}
\label{thm:confinement}
Assume $P(x)$ satisfies H1, H2 and S4. Then there exist a unique
solution $\gamma_1^{c}(x)$ of (\ref{DSeqnQCDfi}) such that
\begin{equs}
\lim_{x\to\infty}\gamma_1^{c}(x)=
\lim_{x\to\infty}-\frac{1+\sqrt{1+4P(x)}}{2}\equiv
\gamma_{\infty}~.
\end{equs}
Furthermore, $\gamma_1^c(x)$ satisfies the usual trichotomy as
$x$ decreases: either $\gamma_1^c(0)=-1$, or $\gamma_1^c(0)=0$,
or $\gamma_1^c(x)$ cannot be continued for $x<x_{\rm min}$ for
some $x_{\rm min}>0$ where $\gamma_1^c(x_{\rm min})=0$. Finally,
if there exists $x_{\rm max}$ such that $'P(x)>0$ for all
$x>x_{\rm max}$ (hence $P(x)$ is strictly increasing towards
$P_{\infty}$), then
\begin{equs}
-\frac{\sqrt{1+4P_{\infty}}+1}{2}
\leq
\gamma_1^{c}(x)
\leq -\frac{\sqrt{1+4P(x)}+1}{2}
\end{equs}
for all $x>x_{\rm max}$.
\end{theorem}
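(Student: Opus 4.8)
The plan is to treat (\ref{DSeqnQCDfi}) as a slow--fast shooting problem organised around the two branches $\gamma_\pm(x)=\tfrac12\big(-1\pm\sqrt{1+4P(x)}\big)$ of zeros of the numerator $N(\gamma,x)=\gamma+\gamma^2-P(x)$. Note that $\gamma_-(x)$ is exactly the upper bound claimed in the theorem and that $\gamma_\infty=\lim_{x\to\infty}\gamma_-(x)$ by S4. The first fact I would record is the linearisation of $f$ along $\gamma_-$: since $N(\gamma_-,x)=0$, one has $\partial_\gamma f(\gamma_-(x),x)=\frac{1+2\gamma_-(x)}{x\,\gamma_-(x)}=\frac{-\sqrt{1+4P(x)}}{x\,\gamma_-(x)}>0$, of size $\sim c/x$ with $c=\frac{\sqrt{1+4P_\infty}}{|\gamma_\infty|}$, so $\gamma_-$ is a slowly \emph{repelling} quasi-static curve for increasing $x$, while it drifts only at rate $\gamma_-'(x)=-P'(x)/\sqrt{1+4P(x)}=o(1/x)$ because $xP'(x)\to0$ in S4. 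Any solution converging to $\gamma_\infty$ must therefore be a separatrix, which I would pin down by shooting in the datum $\gamma_0=\gamma_1(x_0)$ at a fixed large $x_0>x_r$ (so that $P>-\tfrac14$ and both branches are real on $[x_0,\infty)$).

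For existence and uniqueness I would split the admissible data into an ``up'' set $U$ and a ``down'' set $D$. The key dynamical input, immediate from the sign of $N$ together with the positive linearised rate, is: a solution that is ever a fixed distance $\delta>0$ above $\gamma_-$ (with $\gamma_1<0$, where $N<0$ forces $\gamma_1'>0$) is driven monotonically up into the first quadrant, while one ever a fixed distance below $\gamma_-$ (where $N>0$ forces $\gamma_1'<0$) decreases to $-1$ and, by Theorem~\ref{thm:asxtoinfty}, grows linearly. These define $U$ and $D$. They are open by continuous dependence; $D$ is nonempty since $\gamma_0\le-1$ lies in it (Theorem~\ref{thm:asxtoinfty}), and $U$ is nonempty since any $\gamma_0$ just above $\gamma_-(x_0)$ lies in it. Because distinct solutions of (\ref{DSeqnQCDfi}) cannot cross, $\gamma_1(x_0;\gamma_0)$ is monotone in $\gamma_0$, so $U$ and $D$ are intervals meeting at a single value $\gamma_1^{c}(x_0):=\sup D=\inf U$, which defines $\gamma_1^{c}$. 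The strict positivity of the linearised rate forbids two distinct bounded trajectories, giving uniqueness.

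It then remains to identify the limit. By construction $\gamma_1^{c}\notin U\cup D$, so it is never a fixed distance from $\gamma_-$: for every $\delta>0$ there is an $X_\delta$ with $|\gamma_1^{c}(x)-\gamma_-(x)|\le\delta$ for $x\ge X_\delta$, whence $\gamma_1^{c}(x)-\gamma_-(x)\to0$ and therefore $\gamma_1^{c}(x)\to\gamma_\infty$. The crux, and the main obstacle, is the uniform statement that every trajectory a fixed distance from $\gamma_-$ actually escapes for all large $x$: this is a balance between the outward rate $\sim c/x$ and the drift $\gamma_-'=o(1/x)$ of the nullcline itself, and it is exactly here that the condition $xP'(x)\to0$ of S4 is indispensable (it is also what rules out a second convergent solution).

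Finally, the trichotomy as $x$ decreases is not new: $\gamma_1^{c}$ is a solution of (\ref{DSeqnQCD}), so continuing it backwards and invoking Theorem~\ref{thm:completechar} and Proposition~\ref{prop:negglobal} yields the three alternatives $\gamma_1^{c}(0)=-1$, $\gamma_1^{c}(0)=0$, or a first zero $\gamma_1^{c}(x_{\rm min})=0$ at some $x_{\rm min}>0$. The explicit sandwich under the extra assumption that $P$ is increasing for $x>x_{\rm max}$ then upgrades the squeeze above to exact barriers: there $\gamma_-$ is strictly decreasing and $N(\gamma_\infty,x)=P_\infty-P(x)>0$, so the regions $\{\gamma_-(x)<\gamma_1<0\}$ and $\{\gamma_1<\gamma_\infty\}$ are genuinely forward invariant (at $\gamma_1=\gamma_-$ one has $\tfrac{d}{dx}(\gamma_1-\gamma_-)=-\gamma_-'>0$, and at $\gamma_1=\gamma_\infty$ one has $\gamma_1'<0$), and $\gamma_1^{c}$, lying in neither, satisfies $\gamma_\infty\le\gamma_1^{c}(x)\le\gamma_-(x)=-\tfrac12\big(1+\sqrt{1+4P(x)}\big)$ for all $x>x_{\rm max}$.
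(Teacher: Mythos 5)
Your route is genuinely different from the paper's. The paper (Section \ref{sec:confsol}) does not shoot at all: it converts the boundary condition at infinity into the integral equation $\gamma_1^c(x)=-1+\int_1^\infty \frac{P(xt)}{t^2\gamma_1^c(xt)}\,{\rm d}t$, writes $\gamma_1^c=\gamma_\infty+h$, and shows that the resulting map ${\cal T}$ is a contraction on a small ball of the Banach space with norm $\|f\|_{x_0}=\sup_{x\geq x_0}|f(x)|+x|f'(x)|$; the smallness of the source $\|P-P_\infty\|_{x_0}$ for large $x_0$ is exactly where the S4 conditions $P(x)-P_\infty\to0$ and $xP'(x)\to0$ enter. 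That packaging delivers existence and uniqueness in one stroke, near $x=\infty$, and entirely sidesteps the trapping estimate your argument hinges on. The trichotomy (continue backwards and invoke Theorem \ref{thm:completechar}) and the final nullcline sandwich are handled in your proposal essentially as in the paper, and your uniqueness mechanism, once made precise via (\ref{solutionQCD}) with $R[\gamma_2]=0$, is arguably cleaner than the paper's (the kernel $K$ grows like $x^{1+P_\infty/\gamma_\infty^2}$ with $\gamma_\infty^2+P_\infty=\frac{1}{2}\sqrt{1+4P_\infty}\,(1+\sqrt{1+4P_\infty})>0$, so two distinct solutions cannot both remain bounded).

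Two points need repair before your version is a proof. First, a solution that climbs a fixed distance above $\gamma_-$ is \emph{not} ``driven monotonically up into the first quadrant'': for $x>x_r$ the upper branch $\gamma_+=\frac{-1+\sqrt{1+4P}}{2}$ is \emph{attracting} from both sides in the region $\gamma_1<0$ (between the branches $N<0$ and $x\gamma_1<0$ give $\gamma_1'>0$; for $\gamma_+<\gamma_1<0$ one gets $\gamma_1'<0$), so such solutions are captured near $\gamma_+$ rather than expelled to positive $\gamma_1$. This does not sink the argument --- all you need is that a solution trapped in $\{\gamma_1\geq\gamma_-+\delta\}$ has $\liminf_{x\to\infty}\gamma_1(x)\geq\gamma_\infty+\delta>\gamma_\infty$ --- but the set $U$ must be characterized by that property, not by entry into the first quadrant. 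Second, the step you yourself label ``the crux'' is asserted, not proved: you must show that for every $\delta>0$ there is an $X_\delta$ such that $\{\gamma_1\geq\gamma_-+\delta\}$ and $\{\gamma_1\leq\gamma_--\delta\}$ are forward invariant for $x\geq X_\delta$, i.e.\ that $\delta\,\frac{\sqrt{1+4P(x)}}{x|\gamma_-(x)|}$ dominates $|\gamma_-'(x)|=\frac{|P'(x)|}{\sqrt{1+4P(x)}}$, which reduces to $-xP'(x)<\delta\cdot\mathrm{const}$ for large $x$. Without this lemma neither the exhaustiveness of the decomposition into $U$, $D$ and the squeezed set, nor the convergence $\gamma_1^c(x)-\gamma_-(x)\to0$, is established; with it written out (a two-line computation from S4), your argument closes and gives an alternative, more geometric proof of the theorem.
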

As already noted above, we cannot show that
$\gamma_1^{\star}(x)=\gamma_1^c(x)$ without additional hypotheses
on $P(x)$. We can however note that the two types of solutions
are compatible: the confinement solution satisfies the integral
equation
\begin{equs}
\gamma_1^c(x)=-1
+\int_{1}^{\infty}\frac{P(xt)}{t^2\gamma_1^c(xt)}{\rm d}t
~,
\label{eqn:integralforconf}
\end{equs}
whose r.h.s.\ converges to $0$ as $x\to0$ if
$P(x)/\gamma_1^{c}(x)\to1$ as $x\to0$. Characterizing the set of
functions $P(x)$ for which the confinement solution and the
asymptotically free one are the same is a difficult problem, see
e.g. figure \ref{fragile} for an example with an `artificial'
$P(x)$. A necessary condition is that $P(x)$ makes an excursion
below $-\frac{1}{4}$ on (at least) one interval so that the
nullclines, i.e.\ the location in $(x,\gamma_1)$-plane where
$\gamma_1'(x)=0$,
\begin{equs}
\gamma_{\pm}(x)=
\frac{\pm\sqrt{1+4P(x)}-1}{2}
\end{equs}
show a gap as in figure \ref{fragile}. However if the gap is `too
wide', we have $\gamma_1^{\star}(x)=\gamma_{\infty}$ for some
finite $x$, and $\gamma_1^{c}(x_{\rm min})=0$ for some $x_{\rm
min}>0$, whereas if the gap is `too small', $\gamma_1^{\star}(x)$
intersects the nullcline at some $x>x_r$, and hence cannot reach
$\gamma_{\infty}$ as $x\to\infty$, while $\gamma_1^{c}(x)$
intersects the nullcline at some $x<x_l$, and hence cannot reach
$0$ as $x\to0$.
\begin{figure}
\begin{center}
\unitlength1mm
\begin{picture}(80,110)(0,0)
\put(0,0){
\epsfig{file=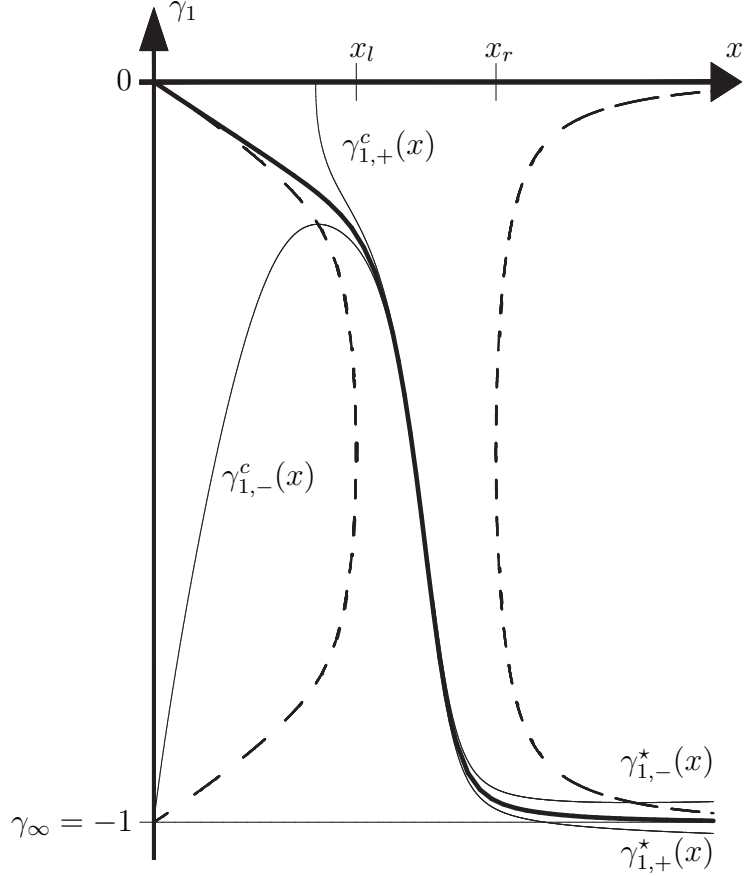,width=8cm}
}
\put(-16,4){$\gamma_{\infty}=-1$}
\put(-2,102.5){$0$}
\put(5,112.5){$\gamma_1$}
\put(79,107){$x$}
\put(29,107){$x_l$}
\put(47,107){$x_r$}
\put(12,50){$\gamma_{1,-}^{c}(x)$}
\put(28,94){$\gamma_{1,+}^{c}(x)$}
\put(65,0){$\gamma_{1,+}^{\star}(x)$}
\put(65,12){$\gamma_{1,-}^{\star}(x)$}
\end{picture}
\end{center}
\caption{A `fragile' solution: the solid bold curve is the
solution combining asymptotic freedom and confinement for some
artificial $P(x)$ with $P_{\infty}=0$. The dashed curves are the
nullclines $\gamma_1+\gamma_1^2-P(x)=0$. The solid curves
$\gamma_{1,\pm}^{c}$ and $\gamma_{1,\pm}^{\star}$ correspond to
confinement and asymptotically free solutions with $P(x)$
perturbed so that the gap $|x_r-x_l|$ is slightly larger (`$+$'
subscripts) or smaller (`$-$' subscripts). In both cases, the
asymptotically free solution and the confinement solution do not
match.}
\label{fragile}
\end{figure}

We conclude this section by explaining the terminology {\em
asymptotically free}, {\em confinement} and {\em strong
confinement} for our solutions. These come from the study of the
inverse Gluon propagator, 
\begin{equs}
P^{-1}(x,Q^2)=Q^2G(x,L)~~~\mbox{with}~~~
L=\ln(Q^2/\mu^2)~.
\end{equs}
In Section (\ref{sec:uncorrected}) below, we solve the RGE
equation expressing the scale invariance of $G(x,L)$.
In particular, in Theorem \ref{thm:transport}, we show that if
$\gamma_1(x)\to\gamma_{\infty}$ as $x\to\infty$, then
\begin{equs}
G(x,L)=\frac{X(L,x)}{x}~,
\end{equs}
where $X(t,x)$ is the running coupling, i.e.\ the solution of
\begin{equs}
\frac{{\rm d}X(t,x)}{{\rm
d}t}=X(t,x)\gamma_1(X(t,x))~~~\mbox{with}~~~X(t=0,x)=x~.
\end{equs}
The possible large $|t|$ behavior of $X(t,x)$ are given by
\begin{equs}
X(t,x)\simeq\left\{
\begin{array}{lll}
x_{\infty}\ed^{\gamma_{\infty}t} &~~\mbox{as}~~t\to-\infty &
\mbox{if}~~~{\displaystyle\lim_{x\to\infty}}\gamma_1(x)=\gamma_{\infty}~,\\[4mm]
-\frac{1}{P'(0)t} &~~\mbox{as}~~t\to\infty &
\mbox{if}~~~{\displaystyle\lim_{x\to0}}\frac{\gamma_1(x)}{x}=P'(0)~,
\\[4mm]
x_0\ed^{-t} &~~\mbox{as}~~t\to\infty &
\mbox{if}~~~{\displaystyle\lim_{x\to0}}\gamma_1(x)=-1~,
\end{array}
\right.
\end{equs}
where $x_{\infty}$ and $x_0$ are some positive functions of $x$.
Thus, in all cases where $\gamma_1(x)$ can be continued to $x=0$,
$G(x,L)\to0$ in the ultraviolet regime $L\to\infty$, and moreover
\begin{equs}
P^{-1}(x,Q^2)\to 
\left\{
\begin{array}{lll}
0&~~~\mbox{as}~~~L\to\infty
&\mbox{if}~~~{\displaystyle\lim_{x\to0}}\frac{\gamma_1(x)}{x}=P'(
0)~,\\[4mm]
+x_0&~~~\mbox{as}~~~L\to\infty
&\mbox{if}~~~{\displaystyle\lim_{x\to0}}\gamma_1(x)=-1~,
\end{array}
\right.
\end{equs}
hence the terminology {\em asymptotic freedom} for the solution
that satisfies $\gamma_1(x)=xP'(0)+{\cal O}(x^2)$ as $x\to0$. In
the infrared regime, however,
\begin{equs}
G(x,L)\simeq\frac{x_{\infty}}{x}
\ed^{\gamma_{\infty}L}
=
\frac{x_{\infty}}{x}
\left(
\frac{Q^2}{\mu^2}
\right)^{\gamma_{\infty}}
~~~\mbox{as}~~~L\to-\infty~,
\end{equs}
and hence we find
\begin{equs}
\lim_{L\to-\infty}P^{-1}(x,Q^2)
=
\left\{
\begin{array}{cl}
\mu^2~\frac{x_{\infty}}{x} &
~~~\mbox{if}~~~\displaystyle\lim_{x\to\infty}\gamma_1(x)=-1\\[3mm]
\infty&
~~~\mbox{if}~~~\displaystyle\lim_{x\to\infty}\gamma_1(x)<-1
\end{array}
~,
\right.
\label{eqn:confiture}
\end{equs}
a finite mass gap if $\gamma_{\infty}=-1$.

\section{The Gluon propagator, confinement and mass gaps in
QCD}\label{gap}

\subsection{Corrections from dispersion
relations}\label{sec:Dirk}

In QED, the coupling is weak at low energy or momentum transfer.
In our previous work \cite{QED}, we could hence define boundary
conditions at low energy, and studied the behavior of
$\sum_k\gamma_k L^k$ for $L>0$. In particular, for
$\gamma_1=\gamma_1(\bar{x}(L))$, a continuation to $L<0$ was
never needed by the choice of our renormalization conditions. On
the other hand, for large $L\gg 0$ we could establish a
separatrix, which possibly avoids a Landau pole at any finite
positive $L$. We conjectured that this might be the solution
chosen by Nature, and further detailed analysis of its properties
awaits more analysis of the function $P(x)$. Should it turn out
that $P(x)$ is such that the separatrix will not avoid a Landau
pole ({\em i.e.}\ turns to infinity at finite $L$), we will have to turn
to dispersion relations to understand the non-perturbative
corrections coming with such a pole, as recognized by Shirkov and
collaborators early on \cite{ShirkovEarly}.

For QCD, we again fix a small coupling, but this time large
momentum transfer, $L\gg 0$ for our boundary conditions. We are
now interested in a continuation to $L\ll0$, in particular we are
interested in $L\to-\infty$. Under very mild assumptions on
$P(x)$, and certainly by any experience from perturbative
approximations of the theory, we expect the anomalous dimension
$\gamma_1$ to go below the value $-1$ at some finite coupling
$x$, and hence $\bar{x}(L)$ to turn to infinity at some finite
negative $L$. Shifting that $L$ to zero essentially defines the
scale $\Lambda_{\mathrm{QCD}}$, and we are interested for that
shifted $L_\Lambda$ to study the regime $L_\Lambda<0$, in
particular $L_\Lambda\to -\infty$. To consider such a limit based
from an approach formulated for $L_\Lambda>0$, we will use
dispersion relations. Our approach is motivated again by Shirkov
and collaborators work \cite{Shirkov}.

On general grounds, we know that $\bar{x}(L_\Lambda)$ and
$G(x,L_\Lambda)$ can be treated by an unsubtracted dispersion
relation \cite{OZ}:
\begin{equs}
f_{\mathrm{disp}}(Q^2)=\int_0^\infty \frac{\Im
(f(\sigma))}{\sigma+Q^2-i\eta}{\rm d}\sigma~.
\end{equs}
The inverse propagator needs a subtracted dispersion relation,
which leads back to an unsubtracted dispersion relation for $G$
\cite{OZ}.

The Dyson--Schwinger equations themselves are supposed to hold
for the whole theory regardless of the sign of $L_\Lambda$.
Similar, the renormalization group equations for the running
coupling are supposed to hold. Our derivation which turned the
Dyson--Schwinger equations into a ODE was valid for
$L>-L_\Lambda$, hence remain valid, after shifting, for
$L_\Lambda>0$.

Continuing to $L_\Lambda<0$ will generate non-perturbative
corrections to $\gamma_1(x)$, and hence $\beta(x)$, determined
from the requirement that equations of motion, renormalization
group flow and analyticity properties of field theory are what
they are supposed to be.

Any perturbative approximation is in accordance with these
properties of field theory only up to the order considered. When
we study solutions of Dyson--Schwinger equations, we demand {\em
accord} with these properties as a guide to find the necessary
non-perturbative corrections in the region $L_\Lambda<0$.

We hence will start by first applying a dispersion relation to
analyze $\bar{x}(L_\Lambda)$. That leads to a corrected
$\gamma_{1,\mathrm{disp}}$ due to the fact that at $L_\Lambda=0$
we find that $\bar{x}=\infty$. Combining this with the assumption
that the uncorrected $\gamma_1(x)$ is driven below -1 at finite
$x$ allows an easy estimate of the corrected
$\gamma_{1,\mathrm{disp}}$ and also allows for consistency with
the direct analysis of $G$ by dispersion methods.

Let us now start with a study of $\bar{x}(L_\Lambda)$. We start
our considerations by boundary conditions such that $G(x,L)=1$ at
some very high momentum transfer $\mu^2\gg 0$, $L=\ln(Q^2/\mu^2)$,
which determines a suitably small $x$ in agreement with say deep
inelastic scattering experiments \cite{exp}.

We have
\begin{equs}
\frac{{\rm d}\bar{x}(x,L)}{{\rm d}L}=\bar{x}\gamma_1(\bar{x})\Leftrightarrow
\int_x^{\bar{x}(x,L)}\frac{1}{u\gamma_1(u)}du=L~,
\end{equs}
and assume that the integral
\begin{equs}
\int_x^\infty \frac{1}{u\gamma_1(u)}{\rm d}u<\infty~.
\end{equs}
In particular, we assume that $\gamma_1(x)<-1$ for some positive
finite $x$, in accordance with our previous discussions and
experimental evidence. We thus define 
\begin{equs}
\Lambda_{\mathrm{QCD}}^2=\mu^2+e^{\left\{ \int_x^\infty
\frac{1}{u\gamma_1(u)}{\rm d}u \right\}}~.
\end{equs}
Setting $L_\Lambda=\ln(Q^2/\Lambda_{\mathrm{QCD}}^2)$ to absorb the dependence
on $x,\mu^2$, we get
\begin{equs}
-\int_{\bar{x}(L_\Lambda)}^\infty
\frac{1}{u\gamma_1(u)}{\rm d}u=L_\Lambda~.
\label{eqn:defLl}
\end{equs}
This equation defines $L_\Lambda(\bar{x})$ as well as the inverse
function $\bar{x}(L_\Lambda)$. Since $\bar{x}(0)=+\infty$, we
cannot trust our solution for $Q^2<\Lambda^2$. To
remedy this, we use our previous result that for large $x$,
$\gamma_1(x)\to -sx$, under our present assumptions. 

Following the conventions of Shirkov \cite{Shirkov}, we define a
running coupling in accordance with the expected analytic
behavior of field theory using a dispersion relation:
\begin{equs}
\bar{x}_{\mathrm{disp}}(Q^2)=\frac{1}{\pi}\int_0^\infty 
\frac{\Im(\bar{x}(\ln(\sigma/\Lambda^2)))}{\sigma+Q^2-i\eta}
{\rm d}\sigma~.
\label{eqn:dispersedef}
\end{equs}
The pole at $-Q^2$ gives us back the uncorrected
$\bar{x}(L_\Lambda)$. But by assumption, there is a further pole
in the complex $\sigma$-plane, located at $L_\Lambda(\infty)=0$.
To study the contribution from that pole, we first note that as
$\bar{x}\to\infty$, we have
\begin{equs}
L_{\Lambda}=
-\int_{\bar{x}(L_\Lambda)}^\infty
\frac{1}{u\gamma_1(u)}{\rm d}u\simeq
\int_{\bar{x}(L_{\Lambda})}^\infty \frac{1}{su^2}~{\rm
d}u
=\frac{1}{s\bar{x}(L_{\Lambda})}~,
\end{equs}
and hence $\bar{x}(L_{\Lambda})\simeq\frac{1}{sL_{\Lambda}}$ near
$L_{\Lambda}=0$. Feeding this relation into
(\ref{eqn:dispersedef}) gives (see \cite{Shirkov})
\begin{equs}
\bar{x}_{\mathrm{disp}}(L_{\Lambda})=\bar{x}(L_{\Lambda})
+\frac{1}{s(1-\ed^{L_{\Lambda}})}
~.
\end{equs}
If we were to identify $s$ with the one-loop coefficient of the
$\beta$-function, this would reproduce Shirkov's analysis for
one-loop QCD, see \cite{Shirkov}, where Shirkov also notes that 
$s$ seem not to vary much at low loop orders. Note that the
correction to $\bar{x}(L_{\Lambda})$ goes to the finite value
$1/s$ in the infrared limit $L_{\Lambda}\to-\infty$.

Now, $\gamma_1$ also obeys an unsubtracted dispersion relation.
As $\gamma_1(x)$ is finite for all finite $x$ and depends on $L$
only through $\bar{x}$, the dispersion integral will correct
$\gamma_1(\bar{x}(L))\simeq -s\bar{x}(L_{\Lambda})$ by 
\begin{equs}
\gamma_{1,\mathrm{disp}}(\bar{x}(L_\Lambda)) =
\gamma_1(\bar{x}(L_{\Lambda}))
-\frac{1}{1-\ed^{L_{\Lambda}}}~.
\end{equs}
Note now that the correction to $\gamma_1$ goes to $-1$ as
$L_{\Lambda}\to-\infty$. Using (\ref{eqn:recu}) and
$x\gamma_1(x)\partial_x=\partial_L$, we get 
\begin{equs}
\gamma_{k,{\rm disp}}(\bar{x}(L_{\Lambda}))
=\gamma_{k}(\bar{x}(L_{\Lambda}))+
\gamma_{k,{\rm corr}}(L_{\Lambda})~~~\mbox{with}~~~
\gamma_{k,{\rm
corr}}(L_{\Lambda})\to\frac{(-1)^k}{k!}~~\mbox{as}~~~~
L_{\Lambda}\to-\infty~.
\end{equs}
As such, the correction from the dispersion relation to the
function $G(x,L_{\Lambda})$ satisfies
\begin{equs}
G_{\rm corr}(x,L_{\Lambda})\to
\sum_{k=0}^{\infty}\frac{(-1)^k}{k!}L_{\Lambda}^k
=\ed^{-L_{\Lambda}}=
\frac{\Lambda^2}{Q^2}~~~\mbox{as}~~~L_{\Lambda}\to-\infty~,
\end{equs}
which gives an inverse propagator satisfying
\begin{equs}
\lim_{L_{\Lambda}\to-\infty}
P^{-1}(x,Q^2)=-\Lambda^2~.
\end{equs}
This gives a finite and renormalization group invariant
positivity-violating mass gap. This compares nicely with the
results of Gracey et.al.\ \cite{john}.

Note that we rely completely on the assumption that physical
quantities in massless field theory have neither poles nor branch
cuts off the negative (in our conventions) real axis and the
result that the solutions we get from our ODEs for the
uncorrected $\bar{x},\gamma_k,G(x,L_\Lambda)$ are in accordance
with these requirements but for the isolated pole at
$L_\Lambda=0$.

If a future analysis of $P(x)$ justifies the trust in field
theory expressed in this section remains to be seen. We are
content having identified a clean mechanism for the generation of
a mass gap, based on the consequences of the Hopf algebra
structure underlying local field theory, and assuming analyticity
properties in accordance with the underlying axiomatic structure
of local quantum fields.

Let us add a few comments which put our results in context. We  rely on the
assumptions on the function $P(x)$ outlined above. Note that the ODE for the anomalous dimension $\gamma_1(x)$ 
takes into account all the iterations of superficially divergent graphs into each other, hence all renormalon
ambiguities met in the context of resummation of a perturbative series are taken care of. We hope this will make $P(x)$, which
has an asymptotic expansion related to a weighted skeleton expansion, amenable to more constructive methods of analysis in the future,
in contrast to $\gamma_1(x)$ whose resummation as a perturbative series faces such ambiguities.

On the other hand, we emphasize that our analysis avoids any truncation of Dyson--Schwinger equations, and any assumptions made 
on the infrared behaviour and infrared powercounting of QCD amplitudes. It thus complements the approaches available so far in the literature,
see for example \cite{IRFischer}. While other methods based on Dyson--Schwinger equations attempt to solve them in some justifiable limit or truncation, we establish properties  of the solution of the full equations depending  on assumptions made for  $P(x)$.
\subsection{Solving The Renormalization Group Equation for
$G(x,L)$}
\label{sec:uncorrected}

Our goal in this section is to analyze the dressing function
$G(x,L)$ that modulates the free inverse propagator. On general
grounds, this function solves the Renormalization Group equation
\begin{equs}
\myl{12}
-\partial_L
+x\gamma_1(x)\partial_x
-s\gamma_1(x)
\myr{12}G(x,L)=0~~~\mbox{and}~~~G(x,0)=1~~~\mbox{for}~~~x\geq0~,
\label{eqn:rge}
\end{equs}
where $s=\pm1$ distinguishes between writing the propagator as
\begin{equs}
P(x,Q^2)=\frac{G(x,\ln(Q^2/\mu^2))^{s}}{Q^2}
\end{equs}
with $s=1$ or $s=-1$. Note that the difference of (\ref{eqn:rge})
from the usual RGE equation
\begin{equs}
\myl{12}
\mu\partial_{\mu}+\tilde{\beta}(g)\partial_g+\tilde{\gamma}(g)
\myr{12}G=0
\end{equs}
is merely a matter of convention on the definition of $\beta$,
$\gamma_1$ and $x$. In particular, (\ref{eqn:rge}) agrees with
two-loop computation of the Gluon propagator.

We will here make the following (minimal) hypothesis on
$\gamma_1(x)$:
\begin{hypothesis}
\label{hyp:hypopopo}
The function $\gamma_1(x)$ is a negative ${\cal
C}^1([0,\infty),(-\infty,0])$ function whose only possible zero
is at $x=0$, where $\gamma_1(x)=-d x^{q_0}+{\cal O}(x^{q_0+1})$
with $q_0\geq0$ and $d>0$.
\end{hypothesis}
Note that the results of Section \ref{sec:QCD} give $q_0=1$ and
$d=-P'(0)$ for the asymptotically free solution $\gamma_1(x)$, in
accordance with perturbation theory. For the solutions satisfying
$\gamma_1(0)=-1$ however, we get $q_0=0$ and $d=1$.

We first note that one can attempt to solve (\ref{eqn:rge}) by
writing
\begin{equs}
G(x,L)=1-s\sum_{k=1}^{\infty}\gamma_k(x)L^k~~~~\mbox{with}~~~~
\gamma_{k}(x)=
\frac{\gamma_1(x)}{k}
\myl{12}
x\partial_x-s
\myr{12}\gamma_{k-1}(x)~~~k\geq2~,
\label{eqn:seriessol}
\end{equs}
since one has (at least formally) that
\begin{equs}
\myl{12}
-\partial_L
+x\gamma_1(x)\partial_x
-s\gamma_1(x)
\myr{12}G(x,L)&=
-s\gamma_1(x)+s
\sum_{k=1}^{\infty}k\gamma_k(x)L^{k-1}
-\gamma_1(x)\myl{12}
x\partial_x
-s
\myr{12}
\gamma_k(x)L^k\\
&=
s\sum_{k=2}^{\infty}
\left(
\gamma_k(x)
-\frac{\gamma_1(x)}{k}
\myl{12}
x\partial_x-s
\myr{12}
\gamma_{k-1}(x)
\right)kL^k=0~.
\label{eqn:seriestwo}
\end{equs}
This approach naturally raises the (difficult) question of
convergence of the series in (\ref{eqn:seriessol}) and
(\ref{eqn:seriestwo}). We will use instead an alternative way of
solving (\ref{eqn:rge}) that avoids these convergence problems.
In particular, while the series (\ref{eqn:seriessol}) necessarily
converge on a symmetric interval of the form $(-L_0,L_0)$ for
some (possibly infinite) $L_0\geq0$, our approach will give a
solution to (\ref{eqn:rge}) that is defined on an interval of
the form $(-L_0,\infty)$ for the same $L_0$. As such, the
approach below shows that the limit $L\to\infty$ of $G(x,L)$
makes sense also if the series solution converges only on a
finite interval. 

Our method is based on the fact that (\ref{eqn:rge}) can be
transformed into a {\em linear transport equation} by
appropriately factorizing $G(x,L)$ into one part that cancels the
term involving no derivatives and one part that
solves a genuine transport equation of the form
\begin{equs}
(-\partial_L+x\gamma_1(x)\partial_x)H(x,L)=0~.
\label{eqn:truetransport}
\end{equs}
As such, it is important at first to consider the characteristics
curve of (\ref{eqn:truetransport}). For each fixed
$x>0$, we first define $X(t,x)$ as the solution of the running
coupling equation
\begin{equs}
\frac{{\rm d}X(t,x)}{{\rm
d}t}=X(t,x)\gamma_1(X(t,x))~~~\mbox{with}~~~X(t=0,x)=x~.
\label{eqn:RGEt}
\end{equs}
Since $\gamma_1(x)$ is assumed to be ${\cal C}^1$, solutions of this
equation exist at least locally around $t=0$. For further
reference, we denote by ${\cal D}(x)$ the maximal interval of
existence of the solution of (\ref{eqn:RGEt}) for a fixed $x$.

Then for each fixed $(x,L)\in{\bf R}^{+}\times{\bf R}$, we define
the characteristic curve ${\cal C}(x,L)$ as
\begin{equs}
{\cal C}(x,L)&=
\my{\{}{14}
(X(t,x),L-t)~~~\mbox{with}~~~t\in{\cal D}(x)
\my{\}}{14}~,\\
&=
\my{\{}{18}
\left(X,L-
\int_x^{X}\frac{{\rm d}z}{z\gamma_1(z)}
\right)~~~\mbox{with}~~~X\in{\bf R}^{+}
\my{\}}{18}~.
\label{eqn:intermsofX}
\end{equs}
Note that both above formulations of the characteristic curve
${\cal C}(x,L)$ are equivalent. The characteristics corresponding
to different values of $L$ are {\em vertical} translations of the
same curve in the $(x,L)$-plane. By hypothesis
\ref{hyp:hypopopo}, we get that the characteristics are all
asymptotically vertical as $X\to0$, and, as a function of $t$, we
have
\begin{equs}
X(t,x)\simeq
\left\{
\begin{array}{cll}
\ed^{-ct} & ~~~\mbox{as}~~~t\to\infty&~~~\mbox{if}~~~q_0=0\\
(ct)^{\frac{1}{-q_0}}& ~~~\mbox{as}~~~t\to\infty&~~~\mbox{if}~~~q_0>0
\end{array}
\right.~.
\label{eqn:ttoinf}
\end{equs}
The behavior of the characteristics as
$X\to\infty$ depends on the asymptotic behavior of $\gamma_1(x)$ as
$x\to\infty$. In all cases, ${\cal C}(x,L)$ approaches
$(\infty,L-L_{\infty}(x))$ as $X\to\infty$, where $L_{\infty}(x)$
is the {\em possibly infinite} quantity defined by
\begin{equs}
L_{\infty}(x)=\int_x^{\infty}\frac{{\rm d}z}{z\gamma_1(z)}~.
\end{equs}
This shows that the maximal interval of existence for solutions
of (\ref{eqn:RGEt}) is ${\cal D}(x)=(L_{\infty}(x),\infty)$. The
results of Section \ref{sec:QCD} show that $L_{\infty}(x)$ is
finite if there exists $x_0>0$ where $\gamma_1(x_0)=-1$, and
infinite otherwise. If $L_{\infty}(x)>-\infty$, the
characteristic ${\cal C}(x,L)$ intersects the line $L=0$ if and
only if $L>L_{\infty}(x)$. If $-1<\gamma_1(x)<0$ for all $x>0$,
then $L_{\infty}(x)=-\infty$ for all $x>0$, and {\em all}
characteristic curves cross the line $L=0$. The generic shape of
the characteristics curves is displayed in figure
\ref{alltogether}.
\begin{figure}
\begin{center}
\unitlength1mm
\begin{picture}(145,145)(0,0)
\put(-10,0){
\epsfig{file=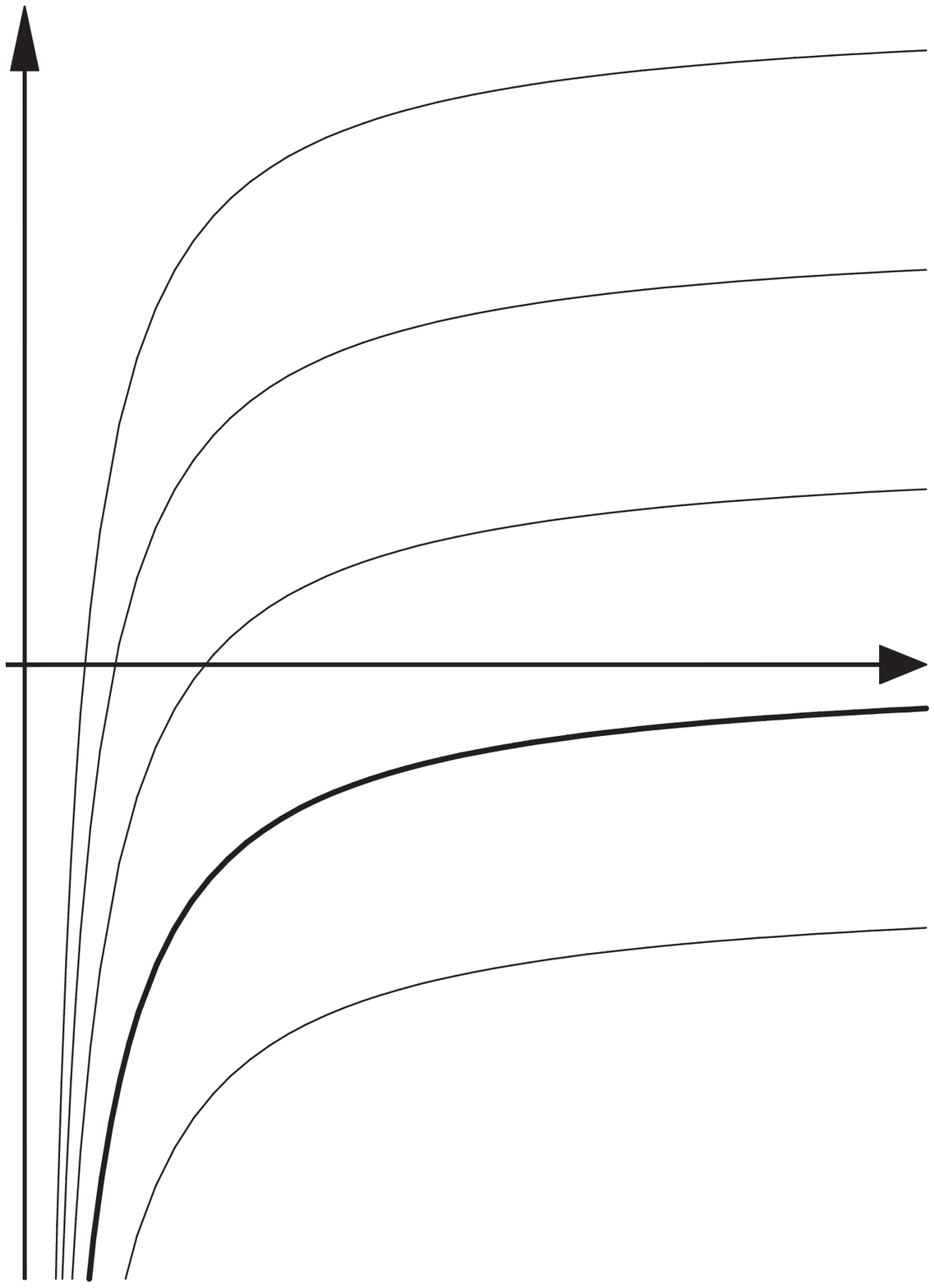,height=10cm}
\hspace{10mm}
\epsfig{file=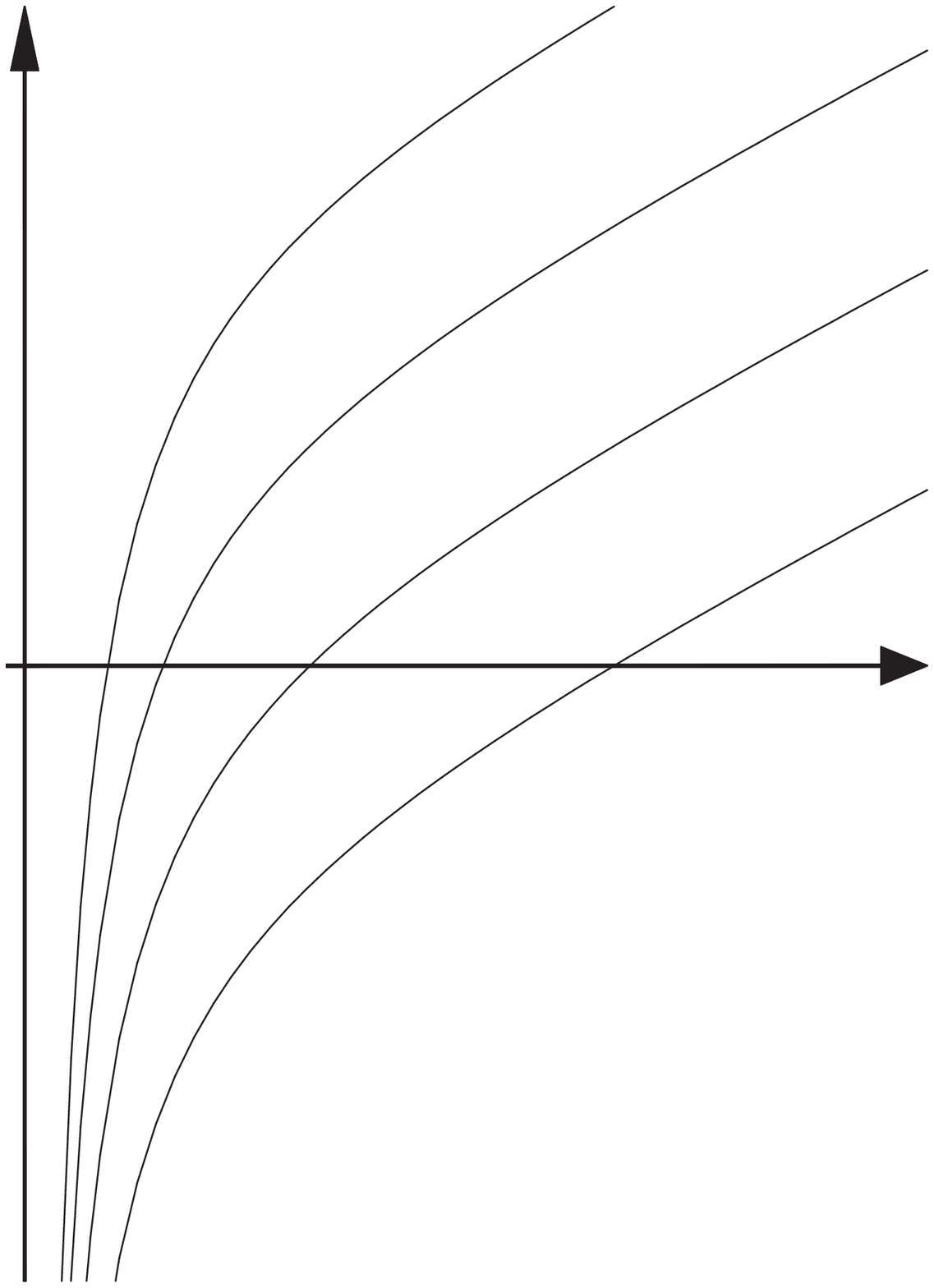,height=10cm}
}
\put(-5.5,97){$L$}
\put(60,51){$x$}
\put(79,97){$L$}
\put(145,51){$x$}
\put(25,36){$L=L_{\infty}(x)$}
\end{picture}
\end{center}
\caption{Generic shape of characteristics curves in
$(x,L)$-plane. The left panel shows the case where
$L_{\infty}(x)$ is finite, the right panel when it is infinite.
In the left panel, the characteristics cross the line $L=0$ if
and only if they are above the (bold) curve $L=L_{\infty}(x)$,
and all characteristics are asymptotically horizontal as
$x\to\infty$. In the right panel, all characteristics cross the
line $L=0$, as they link $(0,-\infty)$ to $(\infty,\infty)$.}
\label{alltogether}
\end{figure}

We can now give the solution of (\ref{eqn:rge}) in accordance (for $s=-1$, as expected) with Bogoliubov-Shirkov (\cite{BogShir}, App.IX, eq.(27)):
\begin{theorem}
\label{thm:transport}
Assume $\gamma_1(x)$ satisfies Hypothesis \ref{hyp:hypopopo}.
Then the solution of (\ref{eqn:rge}) is given by
\begin{equs}
G(x,L)&=\left(
\frac{x}{X(L,x)}
\right)^s
\label{eqn:simplephys}
\end{equs}
for all $(x,L)$ such that $L_{\infty}(x)<L<\infty$ and $x>0$.
\end{theorem}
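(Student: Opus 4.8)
The plan is to verify directly that the proposed formula $G(x,L)=(x/X(L,x))^s$ solves the transport-type equation \reff{eqn:rge} with the correct boundary condition, exploiting the factorization structure already advertised in the text: one factor kills the zeroth-order term $-s\gamma_1(x)G$, and the remaining part solves the genuine transport equation \reff{eqn:truetransport}. First I would check the boundary condition: at $L=0$ we have $X(0,x)=x$ by the initial condition in \reff{eqn:RGEt}, so $G(x,0)=(x/x)^s=1$, as required. Next I would compute the two partial derivatives of $G$. The key computational inputs are that $X(L,x)$ satisfies $\partial_L X(L,x)=X\gamma_1(X)$ (this is exactly \reff{eqn:RGEt} with $t=L$), together with the identity governing how $X(L,x)$ depends on its second argument $x$.

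The heart of the calculation is establishing the $x$-derivative of $X(L,x)$. I would obtain it from the second (integral) representation of the characteristic in \reff{eqn:intermsofX}: since the characteristic curve is invariant along the flow, $X(L,x)$ is constant along characteristics, which encodes the relation $\int_x^{X(L,x)}\frac{{\rm d}z}{z\gamma_1(z)}=L$. Differentiating this implicit relation with respect to $x$ at fixed $L$ gives
\begin{equs}
\frac{1}{X\gamma_1(X)}\,\partial_x X(L,x)-\frac{1}{x\gamma_1(x)}=0~,
\end{equs}
so that $\partial_x X(L,x)=\frac{X\gamma_1(X)}{x\gamma_1(x)}$. With $\partial_L X=X\gamma_1(X)$ and this expression for $\partial_x X$ in hand, a direct substitution into $G=(x/X)^s$ shows that the combination $-\partial_L G+x\gamma_1(x)\partial_x G$ produces exactly the term $s\gamma_1(x)G$ needed to cancel the remaining piece of the operator, so that \reff{eqn:rge} holds.

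The main obstacle is not the algebra but the justification of the domain statement and the differentiability required to run the implicit-differentiation argument. Specifically, I must confirm that for $(x,L)$ with $L_{\infty}(x)<L<\infty$ and $x>0$ the running coupling $X(L,x)$ is well-defined, strictly positive, and $\mathcal{C}^1$ jointly in $(x,L)$, so that $G$ is differentiable and the formal manipulation is legitimate. This follows from Hypothesis \ref{hyp:hypopopo}, which guarantees $\gamma_1\in\mathcal{C}^1$ and $\gamma_1<0$ away from $x=0$, hence $z\gamma_1(z)\neq0$ on $(0,\infty)$; standard ODE theory then gives existence, uniqueness, positivity, and smooth dependence on initial data, with maximal interval of existence $\mathcal{D}(x)=(L_{\infty}(x),\infty)$ as already identified in the text preceding the theorem. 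The only delicate point is the behavior as $X\to0$ (the boundary $L\to\infty$) and as $X\to\infty$ (the boundary $L\to L_{\infty}(x)$), but since the theorem restricts to the open interval $L_{\infty}(x)<L<\infty$, these endpoints are excluded and the argument goes through on the stated open domain.
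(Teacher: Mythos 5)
Your proposal is correct, but it proves the theorem by a different route than the paper. The paper first factorizes $G(x,L)=x^sH(x,L)$, observes that $H$ then solves the pure transport equation $(-\partial_L+x\gamma_1(x)\partial_x)H=0$ with $H(x,0)=x^{-s}$, and concludes by the method of characteristics: $H$ is constant along ${\cal C}(x,L)$, and the characteristic through $(x,L)$ meets the line $L=0$ at the point $X(L,x)$ precisely when $L>L_\infty(x)$, whence $H(x,L)=X(L,x)^{-s}$. You instead verify the closed-form answer directly, which requires the extra ingredient $\partial_xX(L,x)=\frac{X\gamma_1(X)}{x\gamma_1(x)}$ obtained by implicit differentiation of $\int_x^{X(L,x)}\frac{{\rm d}z}{z\gamma_1(z)}=L$; your algebra is right (one finds $\partial_LG=-s\gamma_1(X)G$ and $x\gamma_1(x)\partial_xG=s\gamma_1(x)G-s\gamma_1(X)G$, so the operator in \reff{eqn:rge} annihilates $G$), and your discussion of the domain and of joint ${\cal C}^1$ dependence is adequate. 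The trade-off: your verification is more elementary and avoids introducing $H$, but it only establishes that \reff{eqn:simplephys} \emph{is a} solution of \reff{eqn:rge}; the paper's characteristics argument simultaneously shows that any solution is constant along characteristics and hence \emph{must} equal $X(L,x)^{-s}x^s$ wherever the characteristic reaches $L=0$, which is what justifies the definite article in ``the solution'' and explains why the formula fails for $L\leq L_\infty(x)$. If you keep your verification route, you should append a one-line uniqueness remark (e.g.\ that the difference of two solutions is constant along characteristics and vanishes on $L=0$) to fully match the strength of the stated theorem.
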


We want to stress here that the relation (\ref{eqn:simplephys})
hold only for pair of values of $(x,L)$ satisfying
$L_{\infty}(x)<L<\infty$. For pairs $(x,L)$ with $L\leq
L_{\infty}(x)$, $G(x,L)$ can only be determined from
$G(x,0)=1$ if one specifies $\gamma_1(x)$ for $x<0$ as well.

Assuming $X(t,x)$ to be analytic for all $t\in{\cal D}(x)$, it is
a straightforward computation (see Section (\ref{sec:alternate}))
to show that (\ref{eqn:seriessol}) is the Taylor series expansion
of (\ref{eqn:simplephys}) at $L=0$. As such, the series solution
(\ref{eqn:seriessol}) is expected {\em not} to converge for
$|L|>L_{\infty}(x)$. However, the function provided by
(\ref{eqn:simplephys}) is defined for
unbounded positive $L$ and do solve (\ref{eqn:rge}) for all such
$L$. If $L_{\infty}(x)>-\infty$, Theorem \ref{thm:transport} thus
gives $G(x,L)$ for values of $(x,L)$ for which the series
formulation fails.

\begin{proof}[proof of Theorem \ref{thm:transport}]
We first set
\begin{equs}
G(x,L)=x^sH(x,L)
\label{eqn:splitG}
\end{equs}
for all $x>0$. Substitution into
(\ref{eqn:rge}) gives
\begin{equs}
\myl{12}
-\partial_L
+x\gamma_1(x)\partial_x
\myr{12}H(x,L)=0~~~\mbox{and}~~~H(x,0)=
x^{-s}
~~~\mbox{if}~~~x>0~.
\label{eqn:eqnforH}
\end{equs}
We then note that $H(x,L)$ is constant along the characteristic
curve ${\cal C}(x,L)$, for we have
\begin{equs}
\frac{{\rm d}}{{\rm d}t}
\myl{12}
H(X(t,x),L-t)\myr{12}
&=
-D_2 H(X(t,x),L-t)+
D_1 H(X(t,x),L-t)\frac{{\rm d}X(t,x)}{{\rm d}t}
\\
&=\left[\myl{12}
-\partial_L
+x\gamma_1(x)\partial_x
\myr{12}H(x,L)\right]_{x=X(t,x),L=L-t}=0~.
\end{equs}
If the curve ${\cal C}(x,L)$ crosses the line $L=0$ in
the $(x,L)$-plane, it does so at $t=L$, and we get
\begin{equs}
H(x,L)=H(X(L,x),0)=
X(L,x)^{-s}~,
\end{equs}
which completes the proof.
\end{proof}

\subsection{The series solution for $G(x,L)$}
\label{sec:alternate}

Throughout this section, we consider $\gamma_1$ to be a fixed
solution of (\ref{DSeqnQCD}) that exists for all $x\geq0$. We
first pick $x_0>0$ and $t_0\in{\bf R}$. These values are
arbitrary. We then introduce
\begin{equs}
T(x)=t_0+\int_{x_0}^{x}\frac{{\rm d}z}{\gamma_1(z)z}~.
\label{eqn:overLdef}
\end{equs}
We then note that
\begin{equs}
\lim_{x\to\infty}T(x)=
\left\{
\begin{array}{ll}
T_{\infty}=-\infty&\mbox{if} -1<\gamma_1(x)\leq0~~\forall
x\geq0~,\\[2mm]
T_{\infty}>-\infty&
 \mbox{if}~~\exists x_0>0\mbox{~~with~~}\gamma_1(x_0)=-1
\end{array}
\right.
~.
\label{eqn:ovL}
\end{equs}
Since $T'(x)=\frac{1}{\gamma_1(x)x}<0$,
$x\mapsto T(x)$ is an invertible map from $\real^{+}$
to $[T_{\infty},\infty)$. Moreover, we have $T(0)=\infty$ and
$T(\infty)=T_{\infty}$. The inverse map $\tilde{x}(t)$ (the
solution of $t=T(\tilde{x}(t))$) satisfies the `running
coupling equations'
\begin{equs}[3]
\frac{{\rm d}\tilde{\gamma}_1(t)}{{\rm
d}t}&=\tilde{\gamma}_1(t)+\tilde{\gamma}_1(t)^2-P(\tilde{x}(t))~~~\mbox{and}~~~
\frac{{\rm d}\tilde{x}(t)}{{\rm d}t}=\tilde{x}(t)\tilde{\gamma}_1(t)~,
\label{eqn:runn}
\end{equs}
where $\tilde{\gamma}_1(t)=\gamma_1(\tilde{x}(t))$. Since
$T(\infty)=T_{\infty}$, $\tilde{x}(t)$ diverges as $t\to
T_{\infty}$ (and so does $\tilde{\gamma_1}(t)$ if
$\gamma_1(x)=-1$ for some $x>0$).

Consider now the series solution (\ref{eqn:seriessol}). We first
introduce the functions $S_k$ such that
\begin{equs}
\gamma_k(x)=\frac{1}{k!}~x^s~S_k(T(x))
\end{equs}
for $k\geq1$. Substitution into (\ref{eqn:seriessol}) gives
\begin{equs}
S_{k}(T(x))=S_{k-1}'(T(x))
=\myl{18}\frac{{\rm d}}{{\rm
d}t}S_{k-1}(t)\myr{18}\my{|}{18}_{t=T(x)}~.
\end{equs}
Since $s=\pm1$, we find from (\ref{eqn:runn}) that
$S_1(t)=\frac{\tilde{\gamma}_1(t)}{\tilde{x}(t)^s}=-s\frac{{\rm
d}}{{\rm d}t}(\tilde{x}(t)^{-s})$. Using $s^2=1$, we find
\begin{equs}
-s\gamma_k(x)=
\frac{x^s}{k!}~
\myl{18}
\frac{{\rm d}^{k}}{{\rm d}t^{k}}
\myl{12}
\frac{1}{\tilde{x}(t)^{s}}
\myr{12}\myr{18}\my{|}{18}_{t=T(x)}
\label{eqn:solrec}
\end{equs}
for all $k\geq1$. Since $\tilde{x}(T(x))=x$, the r.h.s.\ of
(\ref{eqn:solrec}) is equal to $1$ when $k=0$, and so
\begin{equs}
G(x,L)=
1+\sum_{k=1}^{\infty}-s\gamma_k(x)L^k=
x^s
\sum_{k=0}^{\infty}
\frac{L^k}{k!}~
\myl{18}
\frac{{\rm d}^{k}}{{\rm d}t^{k}}
\myl{12}
\frac{1}{\tilde{x}(t)^{s}}
\myr{12}\myr{18}\my{|}{18}_{t=T(x)}
\label{eqn:sersolf}
~.
\end{equs}
We now fix $0<x<\infty$. Since $x$ is finite, $T(x)>T_{\infty}$,
and since the above series is a Taylor series of
$\tilde{x}(t)^{-s}$ at
$t=T(x)$, the series converges and takes the value
$\tilde{x}(t+L)^{-s}$ for small values of $L$ {\em if we assume
$z(t)$ to be analytic at $t=T(x)$}. We thus find
\begin{equs}
G(x,L)&=
\left(\frac{x}{\tilde{x}(T(x)+L)}\right)^s
\label{eqn:invpropsimp}
~,
\end{equs}
at least for sufficiently small $L$. This formula is the same as
the one of Theorem \ref{thm:transport}: 
$\tilde{x}(T(x)+t)$ solves
\begin{equs}
\frac{{\rm d}\tilde{x}(T(x)+t)}{{\rm d}t}
=\tilde{x}(T(x)+t)\gamma_1(\tilde{x}(T(x)+t))~,
\label{eqn:tildex}
\end{equs}
with initial condition $\tilde{x}(T(x))=x$ at $t=0$, and so
$\tilde{x}(T(x)+t)=X(t,x)$ by uniqueness of solutions
of (\ref{eqn:RGEt}) and (\ref{eqn:tildex}). 

The question of convergence of (\ref{eqn:sersolf}) for large $L$
depends on the (in-)finiteness of $T_{\infty}$. If
$T_{\infty}$ is finite, $\tilde{x}(t)$ has a pole as
$t\to T_{\infty}$, and (\ref{eqn:sersolf}) must diverge for
\begin{equs}
L\leq T_{\infty}-T(x)
=\int_x^{\infty}\frac{{\rm
d}z}{z\gamma_1(z)}
=L_{\infty}(x)
\end{equs}
as in Section \ref{sec:uncorrected}. However, since the series
(\ref{eqn:sersolf}) can only converge for $L$ on symmetric
intervals, it is also divergent for $L\geq-L_{\infty}(x)>0$. In
that case, we have to revert to Section \ref{sec:uncorrected} for
the validity of (\ref{eqn:invpropsimp}).

\section{Technical details for QCD}\label{sec:technicalities}

In this section, we present the technical details proving the
statements of Section \ref{sec:QCD} on QCD. In our analysis of
(\ref{DSeqnQCD}), we use mainly two tools: integral
representations of the solutions and null-clines. Our strategy is
directly inspired from \cite{reaction} and our previous work
\cite{QED} on QED.

There are two types of integral equations one can write for
(\ref{DSeqnQCD}). The first one (see also \cite{QED}) reads
\begin{equs}
\gamma_1(x)=\frac{x(1+\gamma_1(x^{\star}))}{x^{\star}}-1
-x\int_{x^{\star}}^{x}\frac{P(z)}{z^2\gamma_1(z)}{\rm d}z~.
\label{eqn:usualintegral}
\end{equs}
For the second one, we let $\gamma_1(x)$ be a solution of
(\ref{DSeqnQCD}) and $\gamma_2(x)$ be {\em any} function. Then
for all $x_0,x\in{\cal I}$, where ${\cal I}$ is the common
interval where $\gamma_1$ and $\gamma_2$ are defined, we have
\begin{equs}
\label{solutionQCD}
\gamma_1(x)-\gamma_2(x)&=
\myl{12}
\gamma_1(x_0)-\gamma_2(x_0)
\myr{12}~K[\gamma_1,\gamma_2](x_0,x)+
\int_{x}^{x_0}
\hspace{-2mm}
R[\gamma_2](y)~K[\gamma_1,\gamma_2](y,x)~{\rm d}y~,
\end{equs}
where
\begin{equs}
R[\gamma_2](x)&\equiv
\frac{{\rm d}\gamma_2(x)}{{\rm d}x} -
\frac{\gamma_2(x)
+\gamma_2(x)^2-P(x)}{x\gamma_2(x)}
~,\\
K[\gamma_1,\gamma_2](x_0,x)&\equiv
\left(
\frac{x}{x_0}
\right)
\exp\myl{18}
-
\int_{x}^{x_0}
\frac{P(z)}{z\gamma_1(z)\gamma_2(z)}
{\rm d}z
\myr{18}~.
\end{equs}
The null-clines are defined as the locations in
$(x,\gamma_1)$-plane where solutions satisfy $\gamma_1'(x)=0$).
These are given by the graph of the two functions
\begin{equs}
\gamma_c^{\pm}(x)=\frac{\pm\sqrt{1+4P(x)}-1}{2}~.
\end{equs}
In particular, as $P(0)=0$ by hypothesis H1, and
$P(x)>-\frac{1}{4}$ for $x\in[0,x^{\star}]$, the null-clines
extend at least up to the line $x=x^{\star}$ in the
$(x,\gamma_1)$-plane. On the null-clines, the second derivative
of $\gamma_1$ is given by
\begin{equs}
\gamma_1''(x)=
\frac{{\rm d}}{{\rm d}x}
\left(
f(\gamma_c^{\pm}(x),x)
\right)
=
\frac{P'(x)}{|x\gamma_c^{\pm}(x)|}~.
\label{eqn:onnull}
\end{equs}
By hypotheses H1 and H2, we have
\begin{equs}
P'(x)&= P'(0) +
\int_0^{x}
P''(z){\rm d}z
\leq P'(0)<0~~~\forall x\in[0,x^{\star}]~.
\label{eqn:Pp}
\end{equs}
Hence by (\ref{eqn:onnull}), solutions of (\ref{DSeqnQCD}) can
have at most one local maximum in the interval
$x\in[0,x^{\star}]$, and no local minimum. For further reference,
we also note that by hypotheses H1 and H2,
\begin{equs}
P(x)&= P'(0)x +
\int_0^{x}
\myl{18}
\int_0^{y}
P''(z){\rm d}z\myr{18}~{\rm d}y
\leq P'(0)x~~~\forall x\in[0,x^{\star}]~.
\label{eqn:Pconcave}
\end{equs}
In particular, we have
\begin{equs}
\min_{x\in[0,x^{\star}]}\frac{-P(x)}{x}=-P'(0)=|P'(0)|>0~,
\label{eqn:oldH3}
\end{equs}
and since $P(x^{\star})>-\frac{1}{4}$ by hypothesis on
$x^{\star}$, we have $x^{\star}<-\frac{1}{4P'(0)}<\infty$.

\subsection{Existence, uniqueness and properties of the
asymptotic freedom solution}

We can now establish the existence, uniqueness and properties of
the solution with asymptotic freedom.
\begin{theorem}
\label{thm:existenceanduniquenessatzero}
Under the hypotheses {\rm H1} and {\rm H2}, there exists a unique
value $\gamma_1^{\star}(x^{\star})$ such that the corresponding
solution $\gamma_1^{\star}(x)$ of (\ref{DSeqnQCD}) exists for all
$x\in[0,x^{\star}]$ and satisfies
${\displaystyle\lim_{x\to0}}\gamma_1^{\star}(x)=0$. 
\end{theorem}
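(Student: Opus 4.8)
The plan is a backward shooting argument from $x^{\star}$ with shooting parameter $\gamma_0=\gamma_1(x^{\star})\in(-1,0)$, organised around the null-clines $\gamma_c^{\pm}$ and the comparison identities (\ref{eqn:usualintegral})--(\ref{solutionQCD}). On the open region $\{x>0,\ \gamma_1<0\}$ the field $f$ is locally Lipschitz, so solutions are unique and cannot cross; hence $\gamma_0\mapsto\gamma_1(\cdot\,;\gamma_0)$ is order preserving on every common domain. I first record the direction field: $\gamma_1'>0$ strictly between the null-clines ($\gamma_c^-<\gamma_1<\gamma_c^+$) and $\gamma_1'<0$ above $\gamma_c^+$ and below $\gamma_c^-$. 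Three one-sided barriers follow. Since $f(-1,x)=P(x)/x<0$ on $(0,x^{\star}]$ (using $P<0$ there, by (\ref{eqn:Pconcave})), the line $\gamma_1=-1$ is crossed only downward as $x$ increases, so every solution with $\gamma_0>-1$ stays above $-1$ for all smaller $x$. On $\gamma_c^+$ the solution slope $0$ exceeds the null-cline slope $(\gamma_c^+)'=P'(x)/\sqrt{1+4P(x)}<0$ (by (\ref{eqn:Pp})), so $\gamma_c^+$ can be crossed only upward; thus a solution lying below $\gamma_c^+$ stays below it (hence below $0$) as $x\downarrow0$, and $\gamma_c^-$ is handled the same way. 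Consequently every solution with $\gamma_0\in(-1,0)$ either meets $\gamma_1=0$ at some $x_{\rm min}>0$, or is defined on all of $(0,x^{\star}]$ and remains trapped in the strip $-1<\gamma_1<0$.

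Next I identify the admissible limits at the singular corner. Because each barrier can be crossed at most once, a trapped solution changes region only finitely often as $x\downarrow0$ and is therefore eventually monotone, so $L=\lim_{x\to0}\gamma_1(x)$ exists in $[-1,0]$. If $L\in(-1,0)$ then near $x=0$ one has $f(\gamma_1,x)\to(1+L)/x$, whence $\gamma_1(x)\sim(1+L)\ln x\to-\infty$, contradicting $\gamma_1\to L$; therefore $L\in\{-1,0\}$. A one-order-finer expansion at the two surviving corners, using $P(x)=P'(0)x+{\cal O}(x^2)$, shows that $L=0$ forces $\gamma_1(x)=P'(0)x+{\cal O}(x^2)$ (the rate already visible in (\ref{eqn:sandwich})); this precise rate is what the uniqueness step will consume.

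Let $A=\{\gamma_0\in(-1,0):\gamma_1(x;\gamma_0)=0\text{ for some }x_{\rm min}>0\}$ and $B_{-1}=\{\gamma_0\in(-1,0):\lim_{x\to0}\gamma_1(x;\gamma_0)=-1\}$. The set $A$ is nonempty, since as $\gamma_0\to0^-$ one has $f\to-\infty$ and the backward solution climbs to $0$ at some $x_{\rm min}$ near $x^{\star}$; it is open, because there $\gamma_1$ reaches $0$ with a vertical tangent, so in the $x(\gamma_1)$ formulation the crossing is transverse and persists under perturbation of $\gamma_0$; and it is upward closed (by the ordering) and bounded below by $\gamma_c^+(x^{\star})>-1$ (the $\gamma_c^+$-barrier prevents solutions starting at or below $\gamma_c^+(x^{\star})$ from ever reaching $0$). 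Hence $A=(\gamma^{\star},0)$ for a single $\gamma^{\star}\in(-1,0)$. A trapping-region argument near the corner $(0,-1)$ (a narrow funnel $-1<\gamma_1\le-1+\eta$, $0<x\le\delta$ that solutions cannot leave downward as $x\downarrow0$) shows $B_{-1}$ is open, and it is nonempty and downward closed. Since $A$ and $B_{-1}$ are disjoint, open, and nonempty while $(-1,0)$ is connected, their complement is nonempty; in particular $\gamma^{\star}\notin A$ (as $A$ is open) and $\gamma^{\star}\notin B_{-1}$ (else a neighbourhood of $\gamma^{\star}$ would lie in $B_{-1}$ and miss $A$). By the previous paragraph the $\gamma^{\star}$-solution is therefore defined on all of $[0,x^{\star}]$ with $\lim_{x\to0}\gamma_1^{\star}(x)=0$, which is the asserted existence.

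For uniqueness, suppose two solutions $\gamma_1<\gamma_2$ of (\ref{DSeqnQCD}) both tend to $0$. Feeding them into the difference identity (\ref{solutionQCD}) with $x_0=x^{\star}$, where $R[\gamma_2]\equiv0$ because $\gamma_2$ solves the equation, gives $\gamma_2(x)-\gamma_1(x)=[\gamma_2(x^{\star})-\gamma_1(x^{\star})]\,K[\gamma_1,\gamma_2](x^{\star},x)$. By the rate established above, the integrand $-P(z)/(z\gamma_1(z)\gamma_2(z))\sim1/(|P'(0)|z^2)$ is positive and non-integrable at $0$, so $K[\gamma_1,\gamma_2](x^{\star},x)\to+\infty$ as $x\to0$, whereas the left-hand side tends to $0$; this forces $\gamma_2(x^{\star})=\gamma_1(x^{\star})$, i.e.\ the two solutions coincide. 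Hence the separatrix value $\gamma_1^{\star}(x^{\star})=\gamma^{\star}$ is unique. I expect the genuine difficulty to sit entirely at the singular corner $x=0$: pinning the separatrix limit to $0$ rather than $-1$, proving $B_{-1}$ open, and establishing the kernel blow-up all rest on the sharp $P'(0)x$ rate of approach, which must be extracted from the degenerate equilibrium of (\ref{DSeqnQCD}) rather than assumed, and on controlling the comparison kernel $K$ uniformly down to a point where $f$ itself is singular.
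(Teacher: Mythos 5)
Your argument is correct and follows essentially the paper's own strategy: existence via a shooting argument in which two disjoint open sets of initial data (solutions reaching $\gamma_1=0$ at a positive $x_{\rm min}$, and solutions descending to $-1$) cannot exhaust the connected interval of admissible initial conditions, and uniqueness via the comparison identity (\ref{solutionQCD}) with $R[\gamma_2]=0$, where the kernel $K$ blows up like $\exp(c/x)$ because any limit-zero solution is pinched above the null-cline $\gamma_c^{+}(x)=P'(0)x+{\cal O}(x^2)$, so the bounded difference of two such solutions forces their initial values to coincide. The only structural difference is your choice of the second open set: the paper uses the set of solutions that touch $\gamma_c^{+}$ (whose nonemptiness is the technical Proposition \ref{prop:Itwo}), whereas you use the set with limit $-1$, proving openness by a funnel near $(0,-1)$ and pinning the admissible limits to $\{-1,0\}$ via the $(1+L)\ln x$ divergence — a clean and equally valid variant, though your nonemptiness claims for both sets are stated more briefly than the paper's Propositions \ref{prop:Ione} and \ref{prop:Itwo} establish them.
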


\begin{proof}
We first prove that the solution is unique. Namely, assume
{\em ab absurdum} that $\gamma_1$ and $\gamma_2$ are two
solutions of (\ref{DSeqnQCD}) on $[0,x^{\star}]$ satisfying
${\displaystyle\lim_{x\to0}}\gamma_i(x)=0$. Since
$\gamma_i(x^{\star})<0$ and
${\displaystyle\lim_{x\to0}}\gamma_i(x)=0$, we necessarily have
$\gamma_i(x)\geq\gamma_c^{+}(x)$ for all $x\in[0,x^{\star}]$.
Since $\gamma_2$ is a solution, we can apply (\ref{solutionQCD})
with $R[\gamma_2]=0$, and using (\ref{eqn:oldH3}), we get
\begin{equs}
|\gamma_1(x)-\gamma_2(x)|\geq|\gamma_1(x^{\star})-
\gamma_2(x^{\star})|
\frac{x}{x^{\star}}
\exp\myl{18}
\int_{x}^{x^{\star}}
\frac{|P'(0)|}{\gamma_c^{+}(z)^2}
{\rm d}z
\myr{18}~.
\label{eqn:rhsdi}
\end{equs}
Since $\gamma_c^{+}(z)=P'(0)z+{\cal O}(z^2)$ as $z\to0$, the
r.h.s.\ of (\ref{eqn:rhsdi}) diverges as $x\to0$, which
contradicts ${\displaystyle\lim_{x\to0}}\gamma_i(x)=0$. This
shows that there can be at most one solution of (\ref{DSeqnQCD})
satisfying ${\displaystyle\lim_{x\to0}}\gamma_1(x)=0$.

To prove the existence of this solution, we define the two sets
\begin{equs}
{\rm I}_1 &= \{ \gamma_1(x^{\star})
\in]\gamma_c^{+}(x^{\star}),0[
~\mbox{s.t.}~\exists
x_{\rm min}\in]0,x^{\star}[\mbox{ with }\gamma_1(x_{\rm min})=0\}~,\\
{\rm I}_2 &= \{ \gamma_1(x^{\star})\in]\gamma_c^{+}(x^{\star}),0[~\mbox{s.t.}~\exists
x_{1}\in]0,x^{\star}[\mbox{ with }\gamma_1(x_1)=\gamma_c^{+}(x_1)\}~.
\end{equs}
We will prove in Propositions \ref{prop:Ione} and \ref{prop:Itwo}
below that these sets are non-empty. Continuity of solutions
w.r.t.\ initial conditions imply that they are open, while
(\ref{solutionQCD}) with $R[\gamma_2]=0$ shows that solutions are
{\em ordered}, which imply that each ${\rm I}_i$ is a single
interval. Now since ${\rm I}_1$ and ${\rm I}_2$ are disjoint open
intervals, there exist at least one initial condition
$\gamma_1^{\star}(x^{\star})$ that is in neither sets, and hence
the corresponding solution $\gamma_1^{\star}(x)$ satisfies
$\lim_{x\to0}\gamma_1^{\star}(x)=0$.
\end{proof}

To establish that ${\rm I}_1$ is non-empty, we show that initial
conditions at $x=x^{\star}$ sufficiently close to the $x$-axis
necessary cross it at some $x_{\rm min}<x^{\star}$.
\begin{proposition}
\label{prop:Ione}
For all $\gamma_0\in[P'(0)x^{\star},0[$, there exists $x_{\rm
min}\in]0,x^{\star}[$ such that the solution of (\ref{DSeqnQCD})
only exists on $x\in[x_{\rm min},x^{\star}]$ and $\gamma_1(x_{\rm
min})=0$.
\end{proposition}
\begin{proof}
Pick $\gamma_2(x)=P'(0) x$. Then by (\ref{eqn:Pconcave}), we have
\begin{equs}
R[\gamma_2](x)
=
\frac{1}{x^2P'(0)}
\int_0^{x}
\myl{18}
\int_0^{y}
P''(z){\rm d}z\myr{18}~{\rm d}y
>0~~~\forall x\in[0,x^{\star}].
\end{equs}
Applying (\ref{solutionQCD}), we get
\begin{equs}
\gamma_1(x)=P'(0)x+
\int_{x}^{x^{\star}}
\hspace{-2mm}
R[\gamma_2](y)~K[\gamma_1,\gamma_2](y,x)~{\rm d}y> P'(0)x~~~
\forall x<x^{\star}~.
\label{eqn:abovepp}
\end{equs}
In particular, $\gamma_1$ cannot cease to exist by diverging to
$-\infty$ at a finite $x<x^{\star}$. We also see from
(\ref{eqn:abovepp}) that there exists $x_0<x^{\star}$ with 
$\gamma_1(x_0)>P'(0)x_0$. Using $R[\gamma_2]>0$,
(\ref{eqn:oldH3}) and $\gamma_1(x)\geq P'(0) x$ in
(\ref{solutionQCD}), we find
\begin{equs}
\gamma_1(x)&\geq P'(0) x+
\left(\gamma_1(x_0)-P'(0) x_0\right)
\frac{x}{x_0}
\exp\myl{18}
\int_{x}^{x_0}
\frac{-1}{P'(0) z^2}
{\rm d}z
\myr{18}
\\
&\geq 
\frac{x}{x_0}\left(
P'(0) x_0
+
(\gamma_1(x_0)-P'(0) x_0)
\ed^{\frac{1}{|P'(0)| x}-\frac{1}{|P'(0)| x_0}}
\right)~.
\end{equs}
The proof is completed since $\gamma_1(x_0)-P'(0)x_0>0$.
\end{proof}

Before proving that the interval ${\rm I}_2$ is
non-empty, we can establish an additional property of the
asymptotically free solution $\gamma_1^{\star}$:
\begin{corollary}
\label{cor:asympfreecor}
The solution $\gamma_1^{\star}$ of Theorem
\ref{thm:existenceanduniquenessatzero} satisfies
\begin{equs}
\gamma_c^{+}(x)<\gamma_1(x)< P'(0)x~.
\label{eqn:sandrap}
\end{equs}
for all $x\in]0,x^{\star}]$.
\end{corollary}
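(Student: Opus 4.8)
The plan is to obtain the two inequalities separately, in both cases exploiting the defining property of the separatrix established in Theorem \ref{thm:existenceanduniquenessatzero}: the initial value $\gamma_1^{\star}(x^{\star})$ lies in $]\gamma_c^{+}(x^{\star}),0[$ but belongs to neither ${\rm I}_1$ nor ${\rm I}_2$, so the associated solution meets neither the $x$-axis nor the null-cline $\gamma_c^{+}$ on the open interval $]0,x^{\star}[$. I would first record the elementary compatibility fact $\gamma_c^{+}(x)<P'(0)x<0$ on $]0,x^{\star}]$, which follows from $P(x)\leq P'(0)x$ (equation (\ref{eqn:Pconcave})), the monotonicity of $u\mapsto\sqrt{1+4u}$, and the algebraic inequality $\tfrac{1}{2}(\sqrt{1+4u}-1)\leq u$ for $u<0$; this guarantees the claimed sandwich is consistent.

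For the lower bound I would argue by confinement to a region. Consider the open set $\{(x,\gamma_1):\gamma_c^{+}(x)<\gamma_1<0\}$ for $x\in]0,x^{\star}[$. At $x=x^{\star}$ the point $(x^{\star},\gamma_1^{\star}(x^{\star}))$ lies in this set. As $x$ decreases the solution is continuous, so it can only leave the region by touching one of the two boundaries: touching $\gamma_1=0$ at an interior $x_{\rm min}$ would place $\gamma_1^{\star}$ in ${\rm I}_1$, and touching $\gamma_1=\gamma_c^{+}$ at an interior $x_1$ would place it in ${\rm I}_2$; both are excluded. The region is bounded and $f$ is regular away from $\gamma_1=0$, so the solution cannot cease to exist inside it. Hence $\gamma_c^{+}(x)<\gamma_1^{\star}(x)<0$ throughout $]0,x^{\star}[$, and the endpoint $x=x^{\star}$ is already covered since $\gamma_1^{\star}(x^{\star})\in]\gamma_c^{+}(x^{\star}),0[$.

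For the upper bound I would first pin down the endpoint and then propagate. Proposition \ref{prop:Ione} shows that any solution with $\gamma_1(x^{\star})\in[P'(0)x^{\star},0[$ crosses the $x$-axis, hence lies in ${\rm I}_1$; since $\gamma_1^{\star}\notin{\rm I}_1$ we must have $\gamma_1^{\star}(x^{\star})<P'(0)x^{\star}$. Suppose the inequality $\gamma_1^{\star}(x)<P'(0)x$ failed somewhere, and let $x_1<x^{\star}$ be the largest point with $\gamma_1^{\star}(x_1)=P'(0)x_1$ (it exists, is closed, and is strictly below $x^{\star}$). Applying the integral identity (\ref{solutionQCD}) with the comparison function $\gamma_2(x)=P'(0)x$ --- for which $R[\gamma_2]>0$ by (\ref{eqn:Pconcave}) and the kernel $K[\gamma_1^{\star},\gamma_2]>0$ --- based at $x_1$, the boundary term vanishes and I obtain $\gamma_1^{\star}(x)=P'(0)x+\int_{x}^{x_1}R[\gamma_2](y)K[\gamma_1^{\star},\gamma_2](y,x)\,{\rm d}y>P'(0)x$ for $x<x_1$. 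Then the exponential lower bound of Proposition \ref{prop:Ione} (the blow-up of the kernel as $x\to0$, driven by $-P(z)/z\geq|P'(0)|$ from (\ref{eqn:oldH3})) forces $\gamma_1^{\star}$ to reach $0$ at a finite $x_{\rm min}>0$, placing it in ${\rm I}_1$ --- a contradiction. Hence $\gamma_1^{\star}(x)<P'(0)x$ on all of $]0,x^{\star}]$, completing the sandwich (\ref{eqn:sandrap}).

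The main obstacle is the upper bound: the integral identity (\ref{solutionQCD}) based directly at $x^{\star}$ produces a negative boundary term against a positive integral term, whose signs compete, so one cannot conclude $\gamma_1^{\star}<P'(0)x$ in one stroke. The transversality encoded in $R[\gamma_2]>0$ --- which forces $\tfrac{{\rm d}}{{\rm d}x}(\gamma_1^{\star}-P'(0)x)=-R[\gamma_2]<0$ at any crossing, so that $\gamma_1^{\star}-P'(0)x$ can change sign in only one direction as $x$ decreases --- is precisely what makes the single-crossing contradiction argument go through. The lower bound, by contrast, is essentially immediate once the separatrix is known to avoid both ${\rm I}_1$ and ${\rm I}_2$.
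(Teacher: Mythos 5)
Your proposal is correct, and it draws on the same toolbox as the paper (Proposition \ref{prop:Ione}, the sets ${\rm I}_1,{\rm I}_2$, and the integral identity (\ref{solutionQCD})), but it packages both halves differently from the paper's own three-line proof. For the lower bound the paper never mentions ${\rm I}_1,{\rm I}_2$: it uses the curvature facts (\ref{eqn:onnull})--(\ref{eqn:Pp}) --- on $[0,x^{\star}]$ every critical point of a solution is a strict local maximum, so solutions admit at most one local maximum and no local minimum --- whence a solution that touches $\gamma_c^{+}$ can never climb back to $0$ as $x\to0$. Your region-confinement argument (the separatrix, lying in neither ${\rm I}_1$ nor ${\rm I}_2$, can touch neither boundary of $\{\gamma_c^{+}(x)<\gamma_1<0\}$) is equally valid; it is in fact the mechanism already powering the existence half of Theorem \ref{thm:existenceanduniquenessatzero}, and it implicitly uses the identification --- justified by the uniqueness statement --- of the theorem's solution with the one launched from an initial condition outside ${\rm I}_1\cup{\rm I}_2$. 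For the upper bound, the paper's ``follows immediately from Proposition \ref{prop:Ione}'' means the proposition can be re-based at any $x_0\in\,]0,x^{\star}]$ (hypotheses H1, H2 restrict to subintervals), so $\gamma_1^{\star}(x_0)\in[P'(0)x_0,0[$ would force a zero of $\gamma_1^{\star}$ below $x_0$, which is impossible for the asymptotically free solution. Your largest-crossing-point argument --- vanishing boundary term in (\ref{solutionQCD}) based at $x_1$, positivity of $R[\gamma_2]$, then the kernel blow-up via (\ref{eqn:oldH3}) --- is exactly this re-basing carried out by hand; it is correct and makes the word ``immediately'' rigorous at the cost of repeating the proof of Proposition \ref{prop:Ione} rather than citing it. In short: same skeleton, with your lower bound trading the paper's nullcline-curvature argument for the separatrix's defining avoidance property, and your upper bound unpacking what the paper leaves as a one-line citation.
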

\begin{proof}
To get the lower bound in (\ref{eqn:sandrap}), we recall that
solutions can have at most one local maximum in $]0,x^{\star}]$
and no local minimum. Since
${\displaystyle\lim_{x\to0}}\gamma_1^{\star}(x)=0$,
$\gamma_1(x)>\gamma_c^{+}(x)$ for all $x\in]0,x^{\star}]$. The
upper bound follows immediately from Proposition \ref{prop:Ione}.
\end{proof}

We now show that initial conditions sufficiently close (but
above) the null-cline $\gamma_c^{+}(x^{\star})$ necessarily cross
it at some $x<x^{\star}$.

\begin{proposition}
\label{prop:Itwo}
There exist $\epsilon_1\ll1$ sufficiently small such that the
solution of (\ref{DSeqnQCD}) with
$\gamma_1(x^{\star})=\gamma_c^{+}(x^{\star})(1-\epsilon_1)$
satisfies $\gamma_1(x)=\gamma_c(x)$ for some
$0<x<x^{\star}$.
\end{proposition}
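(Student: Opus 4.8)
The plan is to exhibit $\gamma_c^+$ as a \emph{near}-solution of (\ref{DSeqnQCD}) and to show that it is genuinely crossed by a solution started just above it, by a shooting argument. The point to exploit is that $\gamma_c^+$ is \emph{not} itself a solution: its defect is
\begin{equs}
R[\gamma_c^+](x)=(\gamma_c^+)'(x)-f(\gamma_c^+(x),x)=(\gamma_c^+)'(x)=\frac{P'(x)}{\sqrt{1+4P(x)}}<0
\end{equs}
for all $x\in]0,x^\star]$, where I used $f(\gamma_c^+,x)=0$ on the null-cline together with $P'(x)<0$ from (\ref{eqn:Pp}) and $1+4P(x)>0$ on $[0,x^\star]$. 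Thus $\gamma_c^+$ is strictly decreasing, and — since it is not a solution — a trajectory that touches it is pushed to its underside as $x$ decreases.

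First I would analyze the reference solution $\gamma_1^0$ of (\ref{DSeqnQCD}) with datum $\gamma_1^0(x^\star)=\gamma_c^+(x^\star)$, which exists on a left neighbourhood of $x^\star$ because $(x^\star,\gamma_c^+(x^\star))$ avoids the singular loci $x=0$ and $\gamma_1=0$. Setting $\Delta(x)=\gamma_1^0(x)-\gamma_c^+(x)$, one has $\Delta(x^\star)=0$ and, since $\gamma_1^0$ sits on the null-cline at $x^\star$,
\begin{equs}
\Delta'(x^\star)=f(\gamma_c^+(x^\star),x^\star)-(\gamma_c^+)'(x^\star)=-(\gamma_c^+)'(x^\star)>0~.
\end{equs}
Hence $\Delta(x)<0$ for $x$ slightly below $x^\star$, and I fix such an $x_2\in]0,x^\star[$ together with a definite gap $\delta:=\gamma_c^+(x_2)-\gamma_1^0(x_2)>0$. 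This is just the quantitative form of the statement that on the null-cline $\gamma_1''=P'/|x\gamma_c^+|<0$ by (\ref{eqn:onnull}), so $\gamma_c^+(x^\star)$ is a local maximum of $\gamma_1^0$, which therefore drops below the rising null-cline as $x$ decreases.

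Next I would transport this gap to the perturbed solution by continuous dependence on initial data, exactly as used in the proof of Theorem \ref{thm:existenceanduniquenessatzero}. On the fixed compact interval $[x_2,x^\star]$ the solution $\gamma_1^0$ stays bounded away from both $\gamma_1=0$ and $x=0$, so $f$ is Lipschitz there, and the solution $\gamma_1^{\epsilon_1}$ with $\gamma_1^{\epsilon_1}(x^\star)=\gamma_c^+(x^\star)(1-\epsilon_1)$ satisfies $|\gamma_1^{\epsilon_1}(x_2)-\gamma_1^0(x_2)|<\delta/2$ once $\epsilon_1$ is small enough. In particular $\gamma_1^{\epsilon_1}(x_2)<\gamma_c^+(x_2)$, whereas at the other end $\gamma_1^{\epsilon_1}(x^\star)=\gamma_c^+(x^\star)(1-\epsilon_1)>\gamma_c^+(x^\star)$ because $\gamma_c^+(x^\star)<0$. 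The intermediate value theorem applied to $x\mapsto\gamma_1^{\epsilon_1}(x)-\gamma_c^+(x)$ then yields $x_1\in]x_2,x^\star[\subset]0,x^\star[$ with $\gamma_1^{\epsilon_1}(x_1)=\gamma_c^+(x_1)$, which is the claim. The only delicate point is the validity of continuous dependence, i.e.\ keeping the relevant trajectories off the singular set of $f$; this is automatic, since $x_2$ may be taken arbitrarily close to $x^\star$, confining all solutions to an $O(\epsilon_1)$ tube around $\gamma_c^+(x^\star)<0$.

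A self-contained alternative, avoiding the abstract continuous-dependence theorem, takes $\gamma_2=\gamma_c^+$ in the identity (\ref{solutionQCD}) with $x_0=x^\star$, giving
\begin{equs}
\gamma_1^{\epsilon_1}(x)-\gamma_c^+(x)=-\epsilon_1\gamma_c^+(x^\star)\,K[\gamma_1^{\epsilon_1},\gamma_c^+](x^\star,x)+\int_{x}^{x^\star}(\gamma_c^+)'(y)\,K[\gamma_1^{\epsilon_1},\gamma_c^+](y,x)\,{\rm d}y~,
\end{equs}
in which the boundary term is a positive $O(\epsilon_1)$ quantity while, for any fixed $x<x^\star$, the integral is strictly negative and bounded away from $0$ uniformly for small $\epsilon_1$ (as $\gamma_1^{\epsilon_1}$ stays $O(\epsilon_1)$-close to $\gamma_c^+$, so $K$ is controlled). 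Hence for $\epsilon_1$ small the right-hand side is negative at that fixed $x$, and its sign change against the value $-\epsilon_1\gamma_c^+(x^\star)>0$ at $x=x^\star$ again produces the crossing.
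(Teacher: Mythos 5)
Your argument is correct, and it takes a genuinely different route from the paper's. The paper works entirely with the perturbed solution itself: it traps $\gamma_1$ in the strip between $\gamma_c^{+}(x^{\star})$ and $\gamma_c^{+}(x^{\star})(1-2\epsilon_1)$, derives the explicit differential inequality $\gamma_1'\geq -\frac{2\epsilon_1}{1-2\epsilon_1}\frac{P(x^{\star})}{\gamma_c^{+}(x^{\star})}$, integrates it to a logarithmic bound, and then splits into two cases according to whether the solution exits the strip through its bottom (whereupon the rising null-cline is above it) or lingers long enough for the null-cline to overtake it; this yields, in principle, computable smallness conditions on $\epsilon_1$. You instead perturb off the boundary solution $\gamma_1^0$ launched exactly on the null-cline, observe via the defect $R[\gamma_c^{+}]=(\gamma_c^{+})'<0$ that $\gamma_1^0$ strictly dips below $\gamma_c^{+}$ immediately (a first-order computation consistent with (\ref{eqn:onnull})), and transport the resulting definite gap by continuous dependence on initial data over the compact interval $[x_2,x^{\star}]$, where the singular set $\{x=0\}\cup\{\gamma_1=0\}$ is safely avoided; the intermediate value theorem finishes it. This is cleaner and more conceptual, at the cost of being non-quantitative in $\epsilon_1$. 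One caveat: your ``self-contained alternative'' via (\ref{solutionQCD}) is not quite self-contained, since the uniform control of $K[\gamma_1^{\epsilon_1},\gamma_c^{+}]$ presumes that $\gamma_1^{\epsilon_1}$ stays $O(\epsilon_1)$-close to $\gamma_c^{+}$ on $[x,x^{\star}]$, which is itself a continuous-dependence (or bootstrap) statement rather than a consequence of the identity; since your primary argument is complete, this does not affect the proof, but that last paragraph should either be bootstrapped properly or dropped.
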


\begin{proof}
Let $0<\epsilon_1<\frac{1}{2}$ and
$\gamma_1(x^{\star})=\gamma_c^{+}(x^{\star})(1-\epsilon_1)$.
By continuity of solutions and since
$\gamma_c^{+}(x^{\star})<\gamma_1(x^{\star})<\gamma_1(x^{\star})
(1-2\epsilon_1)$, there exists $0<x_1<x^{\star}$ such that
\begin{equs}
\left\{
\begin{array}{l}
\gamma_c^{+}(x^{\star})<\gamma_1(x)<
\gamma_c^{+}(x^{\star})(1-2\epsilon_1)~~~\forall
x\in]x_1,x^{\star}]~~~\mbox{and}\\[2mm]
\gamma_1(x_1)=\gamma_c^{+}(x^{\star})
~~\mbox{ or }~~\gamma_1(x_1)=\gamma_c^{+}(x^{\star})(1-2\epsilon_1)
\end{array}
\right.
\label{eqn:apriori}~.
\end{equs}
Using these inequalities and hypothesis H2, we get
\begin{equs}
\frac{{\rm d}\gamma_1}{{\rm d}x}\geq
-\frac{2\epsilon_1}{1-2\epsilon_1}
\frac{P(x^{\star})}{\gamma_c^{+}(x^{\star})}
~~~\forall x\in[x_1,x^{\star}]~,
\end{equs}
which, upon integration, gives
\begin{equs}
\gamma_1(x)\leq
\gamma_c^{+}(x^{\star})(1-\epsilon_1)+
\frac{2\epsilon_1}{1-2\epsilon_1}
\frac{P(x^{\star})}{\gamma_c^{+}(x^{\star})}
\ln\left(
\frac{x^{\star}}{x}
\right)
~~~\forall x\in[x_1,x^{\star}]~.
\label{eqn:derivedapriori}
\end{equs}
Let now $x_2$ be the value at which the r.h.s.\ of
(\ref{eqn:derivedapriori}) attains
$\gamma_c^{+}(x^{\star})(1-2\epsilon_1)$, namely
\begin{equs}
x_2=x^{\star}
\exp\left(
\frac{-\gamma_c^{+}(x^{\star})^2(1-2\epsilon_1)}{2|P(x^{\star})|}
\right)<x^{\star}~.
\end{equs}
We now consider the two alternatives $x_2<x_1$ and $x_2\geq x_1$.

Assume first that $x_2<x_1$. Then (\ref{eqn:derivedapriori})
shows that
$\gamma_1(x_1)<\gamma_c^{+}(x^{\star})(1-2\epsilon_1)$, and thus
by definition of $x_1$ (see (\ref{eqn:apriori})), we have
$\gamma_1(x_1)=\gamma_c^{+}(x^{\star})$ and since by H2,
$\gamma_c^{+}(x)$ increases as $x\to0$, there exists an
$x\in[x_1,x^{\star}]$ with  $\gamma_1(x)=\gamma_c^{+}(x)$.

Consider now the other possible case, namely $x_2\geq x_1$. Since
now $\gamma_c^{+}(x^{\star})\leq\gamma_1(x)\leq
\gamma_c^{+}(x^{\star})(1-2\epsilon_1)$ for all
$x\in[x_2,x^{\star}]$, and we conclude from
(\ref{eqn:derivedapriori}) that
\begin{equs}
\gamma_1(x)-\gamma_c^{+}(x)\leq
\gamma_c^{+}(x^{\star})-\gamma_c^{+}(x)
-\epsilon_1\gamma_c^{+}(x^{\star})+
\frac{2\epsilon_1}{1-2\epsilon_1}
\frac{P(x^{\star})}{\gamma_c^{+}(x^{\star})}
\ln\left(
\frac{x^{\star}}{x}
\right)~.
\label{eqn:finif}
\end{equs}
The proof is completed by noting that for $x$ very close to (but
strictly less than) $x^{\star}$,
$\gamma_c^{+}(x^{\star})-\gamma_c^{+}(x)$ is negative, while the
last two (positive terms) in (\ref{eqn:finif}) can be made
arbitrarily small by picking $\epsilon_1$ small enough.
\end{proof}

\subsection{Behavior towards $x=0$ of non-asymptotically free
solutions}

We first consider solutions of (\ref{DSeqnQCD}) corresponding to
initial conditions
$\gamma_1(x^{\star})<\gamma_1^{\star}(x^{\star})$. We have the
following result.
\begin{proposition}
Let $\gamma_0<\gamma_1^{\star}(x^{\star})$. The corresponding
solution $\gamma_1(x)$ of (\ref{DSeqnQCD}) exists for all
$x\in[0,x^{\star}]$ and satisfies $\gamma_1(x)=-1+{\cal
O}(x\ln(x))$ as $x\to0$.
\end{proposition}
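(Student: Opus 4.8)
The plan is to combine a null-cline trapping argument (to pin down the limit $\gamma_1(x)\to-1$ and to guarantee existence down to $x=0$) with the integral equation (\ref{eqn:usualintegral}) (to extract the precise rate). First I would record the phase portrait implied by (\ref{DSeqnQCD}): in the strip $\gamma_c^{-}(x)<\gamma_1<\gamma_c^{+}(x)$ the numerator $\gamma_1+\gamma_1^2-P=(\gamma_1-\gamma_c^{+})(\gamma_1-\gamma_c^{-})$ is negative while $x\gamma_1<0$, so $\gamma_1'(x)>0$ there; on the two null-clines $\gamma_1'=0$ and, by (\ref{eqn:onnull}) together with $P'(x)<0$, one has $\gamma_1''<0$. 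Using $(\gamma_c^{+})'<0$ and $(\gamma_c^{-})'>0$ on $[0,x^{\star}]$, a tangency analysis of $\gamma_1(x)-\gamma_c^{\pm}(x)$ at a contact point shows that, as $x$ decreases, a trajectory once strictly between the two null-clines stays strictly between them. Hence the strip is forward-invariant under decreasing $x$, which also yields existence on all of $(0,x^{\star}]$, since $\gamma_1$ is then bounded and bounded away from $0$, so it can neither blow up nor reach the singularity $\gamma_1=0$.

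Next I would argue that the trajectory actually enters this strip. Since $\gamma_0<\gamma_1^{\star}(x^{\star})$, the ordering of solutions furnished by (\ref{solutionQCD}) with $R[\gamma_2]=0$ gives $\gamma_1(x)<\gamma_1^{\star}(x)$ wherever both exist. If the trajectory never dropped to or below $\gamma_c^{+}$, it would remain trapped with $\gamma_c^{+}(x)<\gamma_1(x)<\gamma_1^{\star}(x)$; as both $\gamma_c^{+}(x)\to0$ and $\gamma_1^{\star}(x)\to0$ by Corollary \ref{cor:asympfreecor}, the squeeze would force $\lim_{x\to0}\gamma_1(x)=0$, contradicting the uniqueness of the asymptotically free solution in Theorem \ref{thm:existenceanduniquenessatzero}. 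Therefore $\gamma_1(x_1)\leq\gamma_c^{+}(x_1)$ for some $x_1\in(0,x^{\star}]$, and by the previous paragraph the trajectory lies in the strip for all smaller $x$ (the case of an initial condition below $\gamma_c^{-}$ is handled by the same tangency analysis, which moves it into the strip). Being monotone (decreasing as $x\to0$) and bounded below by $\gamma_c^{-}(x)\to-1$, it has a limit $L\in[-1,0)$. To exclude $L\in(-1,0)$ I would use that in that case $\gamma_1+\gamma_1^2-P\to L(1+L)<0$ and $x\gamma_1\to0^{-}$, so $\gamma_1'(x)=\frac{1+L}{x}(1+o(1))$ with $1+L>0$; integrating this lower bound for $\gamma_1'$ forces $\gamma_1(x)\to-\infty$ as $x\to0$, a contradiction. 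Hence $L=-1$.

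Finally, with $\gamma_1(x)\to-1$ secured, I would read off the rate from (\ref{eqn:usualintegral}). Writing $\gamma_1(x)+1=\frac{x(1+\gamma_0)}{x^{\star}}+x\int_{x}^{x^{\star}}\frac{P(z)}{z^2\gamma_1(z)}\,{\rm d}z$ and using $P(z)=P'(0)z+{\cal O}(z^2)$ together with $\gamma_1(z)\to-1$, the integrand behaves like $\frac{-P'(0)}{z}(1+o(1))=\frac{|P'(0)|}{z}(1+o(1))$, so the integral is ${\cal O}(\ln(1/x))$ and the entire right-hand side is ${\cal O}(x\ln(1/x))$, giving $\gamma_1(x)=-1+{\cal O}(x\ln x)$.

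The main obstacle is the middle paragraph: making the trapping/tangency argument fully rigorous (so that the strip is genuinely invariant and the trajectory really enters it rather than escaping through $\gamma_1=0$ or blowing up at a finite $x$), and cleanly excluding a limit strictly between $-1$ and $0$ via the derivative-blow-up estimate. Once the qualitative limit $\gamma_1\to-1$ is in hand, the extraction of the ${\cal O}(x\ln x)$ rate from (\ref{eqn:usualintegral}) is routine.
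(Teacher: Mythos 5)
Your proposal is correct and follows essentially the same route as the paper: trap the solution using the null-clines (the paper uses the ``at most one local maximum, no local minimum'' consequence of $P''\leq 0$ where you use strip invariance and a tangency argument), then read off the $-1+{\cal O}(x\ln x)$ behaviour from the integral representation (\ref{eqn:usualintegral}); your intermediate step identifying the limit $-1$ by monotonicity is harmless but redundant, since the integral equation already yields it. The only small inaccuracy is the parenthetical claim that a trajectory starting below $\gamma_c^{-}$ is necessarily ``moved into the strip''---it may remain below $\gamma_c^{-}(x)$ all the way to $x=0$ while both tend to $-1$---but since there $|\gamma_1(x)|\geq|\gamma_c^{-}(x)|\geq\frac{1}{2}$, the same integral-equation estimate applies and the conclusion is unaffected.
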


\begin{proof}
Note first that since $\gamma_0<\gamma_1^{\star}(x^{\star})$, 
there always exists $x_0\leq x^{\star}$ such that
$\gamma_1(x_0)\leq \gamma_c^{+}(x_0)$. Namely, if this does not
already hold at $x^{\star}$, the proof of Theorem
\ref{thm:existenceanduniquenessatzero} shows that it will
eventually hold at some smaller value of $x$. Since solutions can
have at most one local maximum and no local minimum in
$[0,x^{\star}]$, we then have
\begin{equs}
\min(-1,\gamma_1(x_0))\leq\gamma_1(x)\leq\gamma_c^{+}(x_0)~~~~~
\forall x\in[0,x_0]~.
\end{equs}
In particular, these solutions exist for all values of
$x\in[0,x^{\star}]$. We then apply (\ref{eqn:usualintegral}) and
get
\begin{equs}
-1+Cx
+\frac{x}{|\gamma_c^{+}(x_0)|}
\int_x^{x_0}
\frac{-P(z)}{z^2}{\rm d}z
\leq \gamma_1(x)
\leq
-1+Cx
+\frac{x}{\max(1,|\gamma_c^{+}(x_0)|)}
\int_x^{x_0}
\frac{-P(z)}{z^2}{\rm d}z~,
\end{equs}
where $C=\frac{1+\gamma_1(x_0)}{x_0}$. By hypotheses H1 and H2
(see also (\ref{eqn:Pconcave}), $P(x)=P'(0)x+{\cal O}(x^2)$
it is then straightforward to prove that $\gamma_1(x)=-1+{\cal
O}(x(1+\ln(x)))$ as $x\to0$, which completes the proof.
\end{proof}

We now consider solutions of (\ref{DSeqnQCD}) corresponding to
initial conditions $\gamma_1(x^{\star})$ in the interval
$]\gamma_1^{\star}(x^{\star}),0[$. We have the following result.
\begin{proposition}
\label{prop:doublevalued}
Let $\gamma_0\in]\gamma_1^{\star}(x^{\star}),0[$. The corresponding
solution $\gamma_1(x)$ of (\ref{DSeqnQCD}) satisfies
$\gamma_1(x_{\rm min})=0$ for some $x_{\rm min}\in]0,x^{\star}[$.
It can then be continued and enters the first quadrant (becoming
double-valued), and thus satisfies $\gamma_1(x_0)>0$ for 
$x>x_{\rm min}$.
\end{proposition}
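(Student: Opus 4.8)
The plan is to split the argument into two parts: first showing that the solution must reach $\gamma_1=0$ at some $x_{\rm min}\in(0,x^{\star})$, and then continuing it past that crossing by passing to the running-coupling formulation.

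For the first part I would argue by ordering and uniqueness. Since $\gamma_0>\gamma_1^{\star}(x^{\star})$, applying (\ref{solutionQCD}) with $R[\gamma_2]=0$ (the ordering of solutions) gives $\gamma_1(x)>\gamma_1^{\star}(x)$ for every $x$ in the common interval of existence. As long as $\gamma_1(x)<0$ the solution is trapped in the strip $\gamma_1^{\star}(x)<\gamma_1(x)<0$, hence bounded (because $\gamma_1^{\star}$ is continuous on $[0,x^{\star}]$ with $\gamma_1^{\star}(0)=0$), so it cannot blow up and can only fail to continue by reaching $\gamma_1=0$. Suppose, for contradiction, that it never does, i.e.\ $\gamma_1(x)<0$ for all $x\in(0,x^{\star}]$. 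Then the squeeze $\gamma_1^{\star}(x)<\gamma_1(x)<0$ together with $\lim_{x\to0}\gamma_1^{\star}(x)=0$ forces $\lim_{x\to0}\gamma_1(x)=0$; setting $\gamma_1(0)=0$ produces a second solution on $[0,x^{\star}]$ with vanishing limit at the origin, contradicting the uniqueness part of Theorem \ref{thm:existenceanduniquenessatzero} since $\gamma_1(x^{\star})=\gamma_0\neq\gamma_1^{\star}(x^{\star})$. Hence there exists $x_{\rm min}\in(0,x^{\star})$ with $\gamma_1(x_{\rm min})=0$.

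For the continuation I would pass to the parametric system (\ref{eqn:runn}), whose vector field $(\tilde{x}\tilde{\gamma}_1,\,\tilde{\gamma}_1+\tilde{\gamma}_1^2-P(\tilde{x}))$ is $C^1$ near $(x_{\rm min},0)$ by H1 and therefore generates a unique trajectory through that point, even though the graph formulation (\ref{DSeqnQCD}) is singular there. At the crossing one has $\tfrac{{\rm d}\tilde{x}}{{\rm d}t}=\tilde{x}\tilde{\gamma}_1=0$ and $\tfrac{{\rm d}\tilde{\gamma}_1}{{\rm d}t}=-P(x_{\rm min})$, and the decisive sign is $-P(x_{\rm min})>0$, which follows from (\ref{eqn:Pconcave}), giving $P(x)\leq P'(0)x<0$ on $(0,x^{\star}]$. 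Thus $\tilde{\gamma}_1$ crosses zero strictly increasingly while $\tilde{x}$ attains a local minimum equal to $x_{\rm min}$, so for parameter values just past the crossing one gets $\tilde{\gamma}_1>0$ and $\tilde{x}>x_{\rm min}$. Reading this back in the $(x,\gamma_1)$-plane, the solution re-enters the region $x>x_{\rm min}$ with $\gamma_1>0$, becoming double-valued, and there is $x_0>x_{\rm min}$ with $\gamma_1(x_0)>0$, as claimed.

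The main obstacle is the passage through the singular point $(x_{\rm min},0)$: one must justify that the smooth parametric trajectory of (\ref{eqn:runn}) genuinely continues the graph solution of (\ref{DSeqnQCD}), i.e.\ that the lower branch obtained for $x$ slightly above $x_{\rm min}$ coincides, after reparametrization by $t$ through ${\rm d}x/{\rm d}t=x\gamma_1$, with the trajectory approaching $(x_{\rm min},0)$ from below, and that the vertical tangent (${\rm d}\gamma_1/{\rm d}x\to-\infty$ as $\gamma_1\to0^-$) is exactly the regular turning point of the parametrized curve. Everything else reduces to a routine use of the ordering inequality, the squeeze, and the local behaviour of a $C^1$ planar flow near a point of nonzero velocity.
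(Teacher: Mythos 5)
Your proof is correct and follows the same overall skeleton as the paper's: the ordering inequality (\ref{solutionQCD}) with $R[\gamma_2]=0$ to trap the solution strictly above $\gamma_1^{\star}$, a contradiction argument forcing a zero crossing at some $x_{\rm min}>0$, and passage to the running-coupling system (\ref{eqn:runn}) to continue through the point where (\ref{DSeqnQCD}) is singular. The one genuine difference is the mechanism of the contradiction. The paper stays quantitative: combining (\ref{eqn:oldH3}) with the explicit kernel bound (\ref{eqn:happa}), it shows that a solution staying negative down to $x=0$ would be driven to $+\infty$ as $x\to0$. You instead argue qualitatively, squeezing $\gamma_1^{\star}(x)<\gamma_1(x)<0$ to conclude $\lim_{x\to0}\gamma_1(x)=0$ and then invoking the uniqueness part of Theorem \ref{thm:existenceanduniquenessatzero}. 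Both are sound and are close relatives, since the uniqueness proof in that theorem is itself exactly the kernel-divergence estimate; your route reuses the established theorem instead of repeating the computation, at the cost of hiding the actual mechanism (exponential growth of $K$) that the paper displays. Note that both versions share the same unspoken technicality: one must know that a trapped solution which fails to continue actually attains the limit $\gamma_1\to0$ at $x_{\rm min}$ (monotonicity near the crossing, where $f\to-\infty$, settles this), and neither proof spells it out. In the continuation step you are in fact more careful than the paper: the sign computation $-P(x_{\rm min})>0$ via (\ref{eqn:Pconcave}), the identification of $x_{\rm min}$ as a strict local minimum of $\tilde{x}(t)$, and the branch-matching issue you flag at the end are all passed over in silence in the paper, which merely notes that (\ref{eqn:runn}) is nonsingular at $\tilde{\gamma}_1=0$ and that $P<0$ pushes the trajectory into the first quadrant.
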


\begin{proof}
Let $x_{\rm min}\geq0$ be the minimal value such that $\gamma_1(x)$
exists $\forall x\in[x_{\rm min},x^{\star}]$. We claim that
$x_{\rm min}>0$ and $\gamma_1(x_{\rm min})=0$. Namely, since
$\gamma_1(x^{\star})>\gamma_1^{\star}(x^{\star})$, we have by
(\ref{solutionQCD}) with $R[\gamma_2]=0$ that
\begin{equs}
\gamma_1(x)=\gamma_1^{\star}(x)+(\gamma_1(x^{\star})-
\gamma_1^{\star}(x^{\star}))
\frac{x}{x^{\star}}
\exp\myl{18}
\int_{x}^{x^{\star}}
\frac{-P(z)}{z\gamma_1^{\star}(z)\gamma_1(z)}
{\rm d}z
\myr{18}>\gamma_1^{\star}(x)
\label{eqn:happa}
\end{equs}
for all $x\in[x_{\min},x^{\star}]$. Assuming {\em ab
absurdum} that $x_{\rm min}=0$ and $\gamma_1(x)<0$ for all 
$x\in[0,x^{\star}]$ leads to a contradiction, for then we would
have $\gamma_1(x)>\gamma_1^{\star}(x)\geq\gamma_c^{+}(x)$ for all
$x\in[0,x^{\star}]$, and using (\ref{eqn:oldH3}) and
(\ref{eqn:happa}), we get
\begin{equs}
\gamma_1(x)\geq\gamma_1^{\star}(x)+(\gamma_1(x^{\star})-
\gamma_1^{\star}(x^{\star}))
\frac{x}{x^{\star}}
\exp\myl{18}
\int_{x}^{x^{\star}}
\frac{|P'(0)|}{\gamma_c^{+}(z)^2}
{\rm d}z
\myr{18}
\end{equs}
which goes to $+\infty$ as $x\to0$. So $x_{\rm min}>0$ and
$\gamma_1(x_{\rm min})=0$.
Although (\ref{DSeqnQCD}) is singular at $\gamma_1(x_{\rm
min})=0$, these solutions can be continued in the first quadrant
by reverting to the so-called `running coupling' formulation of
(\ref{DSeqnQCD}) (see also (\ref{eqn:runn}) and \cite{QED}).
Namely, we introduce a new independent variable $t$, and write 
$x=X(t)$ and $\gamma_1(X(t))=\tilde{\gamma}_1(t)$, getting
\begin{equs}[3]
\frac{{\rm d}\tilde{\gamma}_1(t)}{{\rm
d}t}&=\tilde{\gamma}_1(t)+\tilde{\gamma}_1(t)^2-P(X(t))~~~~~ &
\tilde{\gamma}_1(t_0)&=0~,\\
\frac{{\rm d}X(t)}{{\rm d}t}&=X(t)\tilde{\gamma}_1(t) & X(t_0)&=x_{\rm
min}~.
\end{equs}
These equations are {\em not} singular at $\tilde{\gamma}_1=0$, and thus
solutions will exist (at least locally around $t=t_0$). Since
$P(x)<0$, the solution to these equations will satisfy
$\tilde{\gamma}_1(t)=\gamma_0>0$ and $X(t)=x_0>x_{\rm min}$ for
some finite $t>t_0$. 
\end{proof}

\subsection{Behavior as $x\to\infty$}

We first show that solutions in the first quadrant are global,
and satisfy appropriate estimates as $x\to\infty$.

\begin{proposition}
\label{prop:doublevaluedinfinity}
Let $\gamma_1(x_0)>0$, and assume $P(x)$ satisfies {\rm H3}. The
corresponding solution $\gamma_1(x)$ exists for all $x\geq
x_{0}$, and satisfies
\begin{equs}
0<
x S_{P}(x_0,x)
\leq
\gamma_1(x)
\leq
xS_{P}(x_0,x)+
\frac{x}{x_0}-1
\end{equs}
for all $x\geq x_0$.
\end{proposition}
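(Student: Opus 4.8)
The plan is to exploit two auxiliary quantities built from $\gamma_1$: the function $W(x)=\gamma_1(x)^2/x^2$ for the lower bound, and $V(x)=\gamma_1(x)/x$ for the upper bound and the continuation argument. First I would record a sign fact that underlies everything: along any solution the right-hand side of (\ref{DSeqnQCD}) is strictly positive whenever $\gamma_1>0$, since by H3 we have $-P(x)>0$, so the numerator $\gamma_1+\gamma_1^2-P=\gamma_1(1+\gamma_1)+(-P)$ and the denominator $x\gamma_1$ are both positive. Hence, starting from $\gamma_1(x_0)>0$, the solution is strictly increasing on the set where it is positive and can never return to $0$; in particular $\gamma_1(x)\geq\gamma_1(x_0)>0$ throughout its interval of existence, so all the manipulations below are legitimate there.

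For the lower bound I would differentiate $W$. Using (\ref{DSeqnQCD}) to substitute $\gamma_1\gamma_1'=(\gamma_1+\gamma_1^2-P)/x$, the $\gamma_1^2$ contributions cancel and one finds $W'(x)=\frac{2\gamma_1(x)}{x^3}+\frac{-2P(x)}{x^3}$. Since $\frac{d}{dx}S_{P}(x_0,x)^2=\frac{-2P(x)}{x^3}$ and $W(x_0)=S_{P}(x_0,x_0)^2$, the surplus term $2\gamma_1/x^3>0$ gives $W(x)\geq S_{P}(x_0,x)^2$ upon integrating from $x_0$. Taking square roots (all quantities nonnegative, and $S_{P}>0$ because $\gamma_1(x_0)^2/x_0^2>0$) yields $0<x\,S_{P}(x_0,x)\leq\gamma_1(x)$.

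The upper bound comes from $V$. A short computation from (\ref{DSeqnQCD}) gives $V'(x)=\frac{1}{x^2}+\frac{-P(x)}{x^2\gamma_1(x)}$. Now I feed in the lower bound just proved, $\gamma_1\geq x\,S_{P}$, to estimate $\frac{-P}{x^2\gamma_1}\leq\frac{-P}{x^3 S_{P}}=S_{P}'(x_0,x)$, the last identity being $2S_{P}S_{P}'=-2P/x^3$. Thus $V'\leq x^{-2}+S_{P}'$, and integrating from $x_0$ while using $V(x_0)=S_{P}(x_0,x_0)$ makes the boundary terms telescope to $V(x)\leq\frac{1}{x_0}-\frac{1}{x}+S_{P}(x_0,x)$. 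Multiplying by $x$ gives precisely $\gamma_1(x)\leq x\,S_{P}(x_0,x)+\frac{x}{x_0}-1$.

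Finally, for global existence on $[x_0,\infty)$ I would argue by contradiction on the maximal interval $[x_0,x_{\max})$. The a priori bounds above are valid throughout this interval, and on any finite sub-interval $S_{P}(x_0,\cdot)$ is finite (since $P$ is continuous and $-P(z)/z^3$ is integrable away from $0$), so $\gamma_1$ stays pinned between $\gamma_1(x_0)>0$ and a finite upper bound. Hence $\gamma_1$ can neither blow up nor approach the singular set $\gamma_1=0$ as $x\to x_{\max}^{-}$, forcing $x_{\max}=\infty$. I expect the only genuinely delicate point to be the \emph{ordering}: the lower bound must be established first and then recycled to control the $1/\gamma_1$ term in $V'$. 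Once that dependency is noticed, every individual step reduces to a one-line differentiation or estimate.
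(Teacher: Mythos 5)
Your proof is correct and follows essentially the same route as the paper: the lower bound from the differential inequality for $\gamma_1^2$ (your $W$ is just $\gamma_1^2/x^2$), then feeding that lower bound back in to control the $-P/\gamma_1$ term for the upper bound, and finally reading off global existence from the two bounds. The only cosmetic difference is that you re-derive the upper bound by integrating $V'=x^{-2}-P/(x^2\gamma_1)$ with $V=\gamma_1/x$, whereas the paper inserts the lower bound directly into its integral formulation (\ref{eqn:usualintegral}) — these are the same computation.
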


\begin{proof}
We first note that
\begin{equs}
\frac{1}{2}
\frac{{\rm d}}{{\rm d}x}
\myl{14}
\gamma_1(x)^2
\myr{14}=
\gamma_1(x)
\frac{{\rm d}\gamma_1}{{\rm d}x}=\frac{\gamma_1(x)}{x}
+\frac{\gamma_1(x)^2-P(x)}{x}
\geq\frac{\gamma_1(x)^2-P(x)}{x}~.
\end{equs}
By integration, we find
\begin{equs}
\gamma_1(x)\geq
x\sqrt{
\frac{\gamma_1(x_0)^2}{x_0^2}
+2
\int_{x_0}^{x}
\frac{-P(z)}{z^3}{\rm d}z
}
=x S_{P}(x_0,x)
>0~.
\label{eqn:tratra}
\end{equs}
This shows that solutions cannot cease to exist by reaching
$\gamma_1(x)=0$ at some $x>x_{0}$. Inserting
(\ref{eqn:tratra}) into (\ref{eqn:usualintegral}) gives
\begin{equs}
\gamma_1(x)&\leq 
\frac{x(1+\gamma_1(x_0))}{x_0}-1
+
x
\int_{x_0}^{x}
\frac{-P(z)}{z^3S_{P}(x_0,z)}{\rm d}z
=x S_{P}(x_0,x)+\frac{x}{x_0}-1~,
\label{eqn:rhsbpgz}
\end{equs}
since
\begin{equs}
\frac{{\rm d}S_{P}(x_0,x)}{{\rm d}x}
=-\frac{P(x)}{x^3S_P(x_0,x)}~.
\end{equs}
The proof is completed since (\ref{eqn:rhsbpgz}) shows that
solutions cannot cease to exist by diverging to $\infty$ at a
finite $x\geq x_0$ either.
\end{proof}

We now turn to the fate of any type of solutions of
(\ref{DSeqnQCD}) with $\gamma_1(x^{\star})<0$ as $x\to\infty$.
Our first result is that these solutions are global, {\em i.e.},
they can be extended as $x\to\infty$. In particular, the
asymptotically free $\gamma_1^\star(x)$ is global.

\begin{proposition}
\label{prop:negglobal}
Let $\gamma_1(x^{\star})<0$ and assume $P(x)$ satisfies {\rm H3}.
The corresponding solution of (\ref{DSeqnQCD}) exists for all
$x\geq x^{\star}$, and satisfies
\begin{equs}
-xS_P(x^{\star},x)\leq \gamma_1(x)
\leq 
\max\left(\gamma_1(x^{\star}),
\sup_{z\in[x^{\star},x]}
\frac{\sqrt{1+4P(z)}-1}{4}\right)<0
\end{equs}
for all $x\geq x^{\star}$. In particular, if ${\cal
D}(P)<\infty$, solutions grow at most linearly as $x\to\infty$.
\end{proposition}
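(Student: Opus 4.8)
The plan is to work on the maximal interval $[x^{\star},X_{\max})$ on which the solution of (\ref{DSeqnQCD}) with $\gamma_1(x^{\star})<0$ extends to the right, prove both bounds there, and then deduce $X_{\max}=\infty$ from an a priori estimate. First I would check that $\gamma_1(x)<0$ on all of $[x^{\star},X_{\max})$: if $\gamma_1$ reached $0$ at some first $x_0>x^{\star}$, then as $x\to x_0^-$ the numerator $\gamma_1+\gamma_1^2-P$ of $f$ would tend to $-P(x_0)>0$ (by H3) while the denominator $x\gamma_1\to0^-$, so $\gamma_1'=f\to-\infty$; thus $\gamma_1$ is steeply \emph{decreasing} near $x_0^-$ and cannot climb to $0$ from below, a contradiction. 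In particular the singularity of (\ref{DSeqnQCD}) at $\gamma_1=0$ is never hit.

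For the lower bound I would reuse the device of Proposition \ref{prop:doublevaluedinfinity} with the sign of $\gamma_1/x$ reversed. With $u=\gamma_1^2$, the identity $\tfrac12 u'=\gamma_1\gamma_1'=\tfrac{\gamma_1}{x}+\tfrac{\gamma_1^2-P}{x}$ and $\gamma_1<0$ give $u'\le\tfrac{2(u-P)}{x}$. The function $v(x)=x^2 S_P(x^{\star},x)^2$ satisfies the \emph{equality} $v'=\tfrac{2(v-P)}{x}$ with $v(x^{\star})=\gamma_1(x^{\star})^2=u(x^{\star})$, so $w=u-v$ obeys $w(x^{\star})=0$ and $w'\le\tfrac{2w}{x}$. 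Since $(w/x^2)'=(w'-2w/x)/x^2\le0$, the quantity $w/x^2$ is non-increasing and hence $w\le0$; that is $\gamma_1^2\le x^2 S_P(x^{\star},x)^2$, which for $\gamma_1<0$ reads exactly $\gamma_1(x)\ge -x S_P(x^{\star},x)$.

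For the upper bound I would run a barrier argument against $\beta(x)=\max\bigl(\gamma_1(x^{\star}),\,\sup_{z\in[x^{\star},x]}\tfrac{\sqrt{1+4P(z)}-1}{4}\bigr)$, which is non-decreasing, is $<0$ for each finite $x$ (as $P<0$ keeps $\sqrt{1+4P}<1$), and lies strictly above the upper null-cline since $\tfrac{\sqrt{1+4P}-1}{4}=\tfrac12\gamma_c^{+}>\gamma_c^{+}$ (because $\gamma_c^{+}<0$). Suppose $\gamma_1(x_2)>\beta(x_2)$ and let $x_1<x_2$ be the last point with $\gamma_1(x_1)=\beta(x_1)$. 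On $(x_1,x_2]$ one has $\gamma_1<0$ and $\gamma_1>\beta$: where the null-clines are real this forces $\gamma_1>\tfrac12\gamma_c^{+}>\gamma_c^{+}$, so $\gamma_1$ sits above the upper null-cline and $\gamma_1'<0$; where $P(z)<-\tfrac14$ the numerator $\gamma_1+\gamma_1^2-P$ has negative discriminant hence is positive, so again $\gamma_1'<0$. Thus $\gamma_1$ is strictly decreasing on $(x_1,x_2]$ and $\gamma_1(x_2)<\gamma_1(x_1)=\beta(x_1)\le\beta(x_2)$, contradicting $\gamma_1(x_2)>\beta(x_2)$. Hence $\gamma_1(x)\le\beta(x)<0$.

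Finally, on any finite $[x^{\star},X]\subset[x^{\star},X_{\max})$ the two bounds trap $\gamma_1$ in the compact set $-X S_P(x^{\star},X)\le\gamma_1\le\beta(X)<0$, where $f$ is continuous and bounded (both singularities, at $x=0$ and at $\gamma_1=0$, being avoided); standard continuation then gives $X_{\max}=\infty$. The last assertion follows since ${\cal D}(P)<\infty$ makes $S_P(x^{\star},x)$ converge as $x\to\infty$, whence $|\gamma_1(x)|\le x S_P(x^{\star},x)\le Cx$. I expect the upper bound to be the delicate step: one must simultaneously control the sign of $\gamma_1$, accommodate the intervals where $P<-\tfrac14$ (on which the null-clines are complex, so the barrier formula only constrains the regions where $\gamma_1$ could increase), and argue cleanly past the non-smooth, merely non-decreasing barrier $\beta$.
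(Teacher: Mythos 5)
Your proposal is correct and follows essentially the same route as the paper: the lower bound via the differential inequality for $\gamma_1^2$ compared against $x^2S_P(x^{\star},x)^2$, and the upper bound via a barrier at (half of) the upper null-cline value, on which the vector field points strictly downward. Your formulation with the non-decreasing barrier $\beta(x)$ and a last-crossing argument is a slightly more carefully organized version of the paper's \emph{ab absurdum} argument with the constant barrier $\gamma_{\rm max}$ on $[x^{\star},x_{\rm max}]$, and your separate preliminary step showing $\gamma_1$ cannot reach $0$ is implicit in the paper's contradiction.
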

\begin{proof}
For the lower bound, note first that, as in the proof of
Proposition \ref{prop:doublevalued}, we have
\begin{equs}
\frac{1}{2}
\frac{{\rm d}}{{\rm d}x}
\myl{14}\gamma_1(x)^2\myr{14}=
\gamma_1(x)\frac{{\rm d}\gamma_1}{{\rm d}x}=\frac{\gamma_1(x)}{x}
+\frac{\gamma_1(x)^2-P(x)}{x}
\leq\frac{\gamma_1(x)^2-P(x)}{x}~,
\end{equs}
which gives $\gamma_1(x)\geq-xS_P(x^{\star},x)$ upon integration.
This shows that solutions cannot diverge to $-\infty$ at a finite
$x>x^{\star}$. Now suppose {\em ab absurdum} that there exists
$x_{\rm max}<\infty$ such that $\gamma_1(x_{\rm max})=0$. By
hypothesis H1-H3, we have
$-\frac{1}{4}<\sup_{x\in[x^{\star},x_{\rm max}]}P(x)<0$, and thus
\begin{equs}
\gamma_{\rm max}=
\max\left(\gamma_1(x^{\star}),
\sup_{z\in[x^{\star},x_{\rm max}]}
\frac{\sqrt{1+4P(z)}-1}{4}\right)
\end{equs}
satisfies $-\frac{1}{4}<\gamma_{\rm max}<0$. Note then that
$f(\gamma_1,x)$ is strictly negative along the
$\gamma_1=\gamma_{\max}$ line since
\begin{equs}
\sup_{x\in[x^{\star},x_{\rm max}]}
xf(\gamma_{\rm max},x)
\leq
\left(1+\gamma_{\rm max}-\frac{\delta}{\gamma_{\rm max}}\right)
\leq-\frac{1}{4}~.
\end{equs}
Since $\gamma_1(x^{\star})\leq \gamma_{\rm max}$, this shows that
$\gamma_1(x)\leq\gamma_{\rm max}$ for all $x\in[x^{\star},x_{\rm
max}]$, contradicting the {\em ab absurdum} assumption. Hence
solutions exist globally as $x\to\infty$.
\end{proof}

Our second result concern the asymptotics of some of these
solutions as $x\to\infty$. Namely, we can estimate the growth of
solutions that are somewhere less than $-1$.
\begin{proposition}
\label{prop:growthunder}
Assume $P(x)$ satisfies {\rm H3} and $\gamma_1(x^{\star})<0$. If
the corresponding solution of (\ref{DSeqnQCD}) satisfies
$\gamma_1(x_0)\leq -1$ for some $x_0\geq x^{\star}$, then
\begin{equs}
-x S_{P}(x_0,x)
\leq
\gamma_1(x)&\leq
-xS_{P}(x_0,x)
+\frac{x}{x_0}-1<0
\end{equs}
for all $x\geq x_0$.
\end{proposition}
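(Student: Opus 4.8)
The plan is to follow the same two-sided strategy as in Proposition~\ref{prop:doublevaluedinfinity}, but transported to the fourth quadrant where $\gamma_1<0$, and to lean on Proposition~\ref{prop:negglobal} for the two facts I need for free. Since $\gamma_1(x^{\star})<0$ and H3 holds, Proposition~\ref{prop:negglobal} already guarantees that the solution exists for all $x\geq x^{\star}$ and that $\gamma_1(z)<0$ throughout that range. As $x_0\geq x^{\star}$, this gives $\gamma_1(z)<0$ for every $z\in[x_0,x]$ with $x\geq x_0$, which is the only structural input the estimates below require; I never need to track that $\gamma_1$ stays below $-1$.

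For the lower bound I would argue exactly as in the first half of Proposition~\ref{prop:negglobal}, now based at $x_0$. Using (\ref{DSeqnQCD}) and $\gamma_1<0$,
\begin{equs}
\frac{1}{2}\frac{{\rm d}}{{\rm d}x}\myl{14}\gamma_1(x)^2\myr{14}
=\frac{\gamma_1(x)}{x}+\frac{\gamma_1(x)^2-P(x)}{x}
\leq\frac{\gamma_1(x)^2-P(x)}{x}~,
\end{equs}
which rearranges to ${\rm d}(\gamma_1(x)^2/x^2)/{\rm d}x\leq -2P(x)/x^3$. Integrating from $x_0$ to $x$ gives $\gamma_1(x)^2/x^2\leq S_P(x_0,x)^2$, and since $\gamma_1<0$ this is exactly $\gamma_1(x)\geq -xS_P(x_0,x)$.

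For the upper bound I would feed this lower bound into the integral equation (\ref{eqn:usualintegral}), which by the same derivation holds with $x_0$ in place of $x^{\star}$ (no singularity is crossed, as $\gamma_1<0$ on $[x_0,x]$):
\begin{equs}
\gamma_1(x)=\frac{x(1+\gamma_1(x_0))}{x_0}-1
-x\int_{x_0}^{x}\frac{P(z)}{z^2\gamma_1(z)}{\rm d}z~.
\end{equs}
The delicate point is the sign bookkeeping. Both $P(z)<0$ and $\gamma_1(z)<0$, so $\gamma_1(z)\geq -zS_P(x_0,z)$ yields $1/\gamma_1(z)\leq -1/(zS_P(x_0,z))$ upon taking reciprocals of negative quantities, and multiplying by $P(z)/z^2<0$ reverses the inequality once more to give $P(z)/(z^2\gamma_1(z))\geq -P(z)/(z^3S_P(x_0,z))$. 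Then, using ${\rm d}S_P(x_0,x)/{\rm d}x=-P(x)/(x^3S_P(x_0,x))$ together with $S_P(x_0,x_0)=|\gamma_1(x_0)|/x_0$, the integral evaluates to $S_P(x_0,x)-|\gamma_1(x_0)|/x_0$; combining the constant terms (the $\gamma_1(x_0)/x_0$ contributions cancel) produces $\gamma_1(x)\leq -xS_P(x_0,x)+x/x_0-1$. Finally, the bound is negative because $\gamma_1(x_0)\leq -1$ forces $S_P(x_0,x)\geq|\gamma_1(x_0)|/x_0\geq 1/x_0$, whence $-xS_P(x_0,x)+x/x_0-1\leq -1<0$. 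The main obstacle is purely this sign-reversal bookkeeping in the upper-bound step; everything else is parallel to Propositions~\ref{prop:doublevaluedinfinity} and~\ref{prop:negglobal}.
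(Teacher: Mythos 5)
Your proposal is correct and follows essentially the same route as the paper: the lower bound is the energy estimate of Proposition~\ref{prop:negglobal} re-based at $x_0$, and the upper bound comes from inserting that lower bound into the integral formulation (\ref{eqn:usualintegral}); your careful sign bookkeeping and the cancellation of the $\gamma_1(x_0)/x_0$ terms are exactly the details the paper leaves as ``immediate.''
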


\begin{proof}
The lower bound is already contained in Proposition
\ref{prop:negglobal}. The upper bound then follows immediately
from the lower bound and the integral formulation
(\ref{eqn:usualintegral}). 
\end{proof}

The condition $\gamma_1(x_0)\leq-1$ is essential in Proposition
\ref{prop:growthunder} to guarantee that the upper bound is
indeed negative. We now give possible scenarios that guarantee
solutions indeed reach $\gamma_1=-1$.

\begin{proposition}
\label{prop:minusonerap}
Assume one of the two following statements holds:
\begin{enumerate}
\item $-1<\gamma_1(x^{\star})<0$ and $P(x)$ satisfies S1 or S3,
\item $-1<\gamma_1(x^{\star})\leq\gamma_1^{\star}(x^{\star})$ and
$P(x)$ satisfies S2.
\end{enumerate}
Then there exists $x_0>x^{\star}$ such that the corresponding
solution $\gamma_1(x)$ satisfies $\gamma_1(x_0)=-1$.
\end{proposition}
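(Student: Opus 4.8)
The plan is to argue by contradiction, assuming the solution stays strictly above $-1$, and to track the single scalar quantity $\gamma_1(x)^2$. By Proposition~\ref{prop:negglobal} the solution exists for all $x\geq x^{\star}$ and satisfies $\gamma_1(x)<0$ there, so the contradiction hypothesis is precisely that $-1<\gamma_1(x)<0$ for all $x\geq x^{\star}$. The engine of the proof is the differential inequality
\[
\frac{{\rm d}}{{\rm d}x}\gamma_1(x)^2=\frac{2\left(\gamma_1(x)+\gamma_1(x)^2-P(x)\right)}{x}\geq-\frac{1+4P(x)}{2x}~,
\]
which follows from the identity $\tfrac12\tfrac{{\rm d}}{{\rm d}x}\gamma_1^2=\tfrac{\gamma_1+\gamma_1^2-P}{x}$ already used in the proofs of Propositions~\ref{prop:doublevalued} and~\ref{prop:negglobal}, together with the elementary bound $\gamma_1+\gamma_1^2\geq-\tfrac14$ valid for every real $\gamma_1$. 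This inequality holds with no restriction on the sign of $1+4P$; it becomes effective precisely where $P(x)<-\tfrac14$, i.e. where its right-hand side is positive. Integrating it turns each of S1, S2, S3 into a statement that $\gamma_1^2$ is forced up to $1$, at which point $\gamma_1\leq-1$ and an intermediate-value argument (using $\gamma_1(x^{\star})>-1$) produces the crossing.

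For the S3 case I would integrate the inequality from $x_c$ to $x_d$. Discarding the nonnegative starting term $\gamma_1(x_c)^2$ gives
\[
\gamma_1(x_d)^2\geq\gamma_1(x_c)^2-\int_{x_c}^{x_d}\frac{1+4P(z)}{2z}\,{\rm d}z\geq-\int_{x_c}^{x_d}\frac{1+4P(z)}{2z}\,{\rm d}z\geq1~,
\]
the last step being exactly hypothesis S3. Hence $\gamma_1(x_d)^2\geq1$, and since $\gamma_1(x_d)<0$ this means $\gamma_1(x_d)\leq-1$. Because $\gamma_1(x^{\star})>-1$, $x_c>x^{\star}$, and the solution exists on all of $[x^{\star},x_d]$ by Proposition~\ref{prop:negglobal}, continuity yields an $x_0\in(x^{\star},x_d]$ with $\gamma_1(x_0)=-1$.

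The S1 case is the same mechanism run to infinity. Under S1, $P(x)=-cx^{p}+o(x^{p})\to-\infty$, so there is an $x_c$ beyond which $P(x)<-\tfrac14$ and the integrand $-\tfrac{1+4P(z)}{2z}\sim 2c\,z^{p-1}$ is positive with $\int^{\infty}z^{p-1}\,{\rm d}z=\infty$. Integrating from $x_c$ then gives $\gamma_1(x)^2\geq\gamma_1(x_c)^2+\int_{x_c}^{x}\bigl(-\tfrac{1+4P(z)}{2z}\bigr)\,{\rm d}z\to\infty$, which is incompatible with $|\gamma_1(x)|<1$; so $\gamma_1$ must attain $-1$ at some finite $x_0$.

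For the S2 case the inequality must be fed with the concavity of $P$: since S2 extends $P''\leq0$ to $[0,x_c]$ with $x_c\geq-1/P'(0)$, one has $P(x)\leq P'(0)x$ there, so $P(x)<-\tfrac14$ once $x>-1/(4P'(0))$ and in fact $P(x_c)\leq-1$. Lower-bounding the integrand by $2|P'(0)|-\tfrac1{2z}$ and integrating up to $x_c$ produces the accumulation of $\gamma_1^2$. Here the restriction $\gamma_1(x^{\star})\leq\gamma_1^{\star}(x^{\star})$ is essential: it keeps the solution in the descending regime below the separatrix, where it is not drawn up toward the upper null-cline $\gamma_c^{+}$, so that $\gamma_1^2$ genuinely increases over the concave range rather than first decaying as the solution climbs toward $\gamma_c^{+}$. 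I expect this S2 case to be the main obstacle. Unlike S3, whose integral hypothesis is tailored to deliver exactly the required increment $1$, here one must combine the concavity bound, the threshold $x_c\geq-1/P'(0)$, and the head start furnished by $\gamma_1(x^{\star})\leq\gamma_1^{\star}(x^{\star})$, while carefully handling the initial sub-interval where $P>-\tfrac14$ (on which the integrand is negative) to verify that the net accumulation of $\gamma_1^2$ still reaches $1$.
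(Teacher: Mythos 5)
Your treatment of alternative 1 (S1/S3) is correct and is essentially the paper's own argument: the same differential inequality $\frac{{\rm d}}{{\rm d}x}\gamma_1^2=\frac{2(\gamma_1+\frac12)^2}{x}-\frac{1+4P}{2x}\geq-\frac{1+4P}{2x}$, integrated over $[x_c,x_d]$ and combined with S3 (the paper likewise disposes of S1 by observing it implies S3). The only cosmetic difference is that the paper keeps the term $\gamma_1(x_c)^2=\gamma_{\rm min}^2$ to reach the strict contradiction $\gamma_1(x_d)\leq-\sqrt{1+\gamma_{\rm min}^2}<-1$, whereas you discard it and close with the intermediate value theorem; both are fine.

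Alternative 2 (S2) is where your proposal has a genuine gap, and you correctly sense it but do not close it. The mechanism you propose cannot work as stated: lower-bounding the integrand by $2|P'(0)|-\frac{1}{2z}$ and integrating from $x^{\star}$ to $x_c$ gives $2|P'(0)|(x_c-x^{\star})-\frac12\ln(x_c/x^{\star})$, and the logarithmic loss on the initial stretch where $P(z)>-\frac14$ is unbounded as $x^{\star}\to0$, so the net accumulation of $\gamma_1^2$ need not reach $1$. The crude bound $(\gamma_1+\frac12)^2\geq0$ throws away precisely the information that the hypothesis $\gamma_1(x^{\star})\leq\gamma_1^{\star}(x^{\star})$ provides. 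The paper abandons the $\gamma_1^2$ energy argument entirely for this case and instead runs a comparison against the line $\gamma_2(x)=P'(0)x$: by concavity of $P$ on $[0,x_c]$ one has $R[\gamma_2](x)>0$ there, and since Corollary \ref{cor:asympfreecor} gives $\gamma_1(x^{\star})\leq\gamma_1^{\star}(x^{\star})<P'(0)x^{\star}$, the integral representation (\ref{solutionQCD}) forces $\gamma_1(x)\leq P'(0)x$ on all of $[x^{\star},x_c]$ (the boundary term and the $R$-integral both push the solution downward). The conclusion is then immediate from the threshold in S2: $\gamma_1(x_c)\leq P'(0)x_c\leq-1$. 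Your intuition that the sub-separatrix condition "keeps the solution descending" is the right one, but it has to be implemented through this comparison with $P'(0)x$ rather than through the pointwise-in-$\gamma_1$ worst-case bound on $\frac{{\rm d}}{{\rm d}x}\gamma_1^2$.
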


\begin{proof}
We consider the alternative 1. first. Note that S1 implies S3, so
we can use hypothesis S3 only. As shown in Proposition
\ref{prop:negglobal}, any solution starting at
$\gamma_1(x^{\star})<0$ exists for all values of $x\geq
x^{\star}$. In particular, $\gamma_1(x_{c})=\gamma_{\rm min}<0$.
If $\gamma_{\rm min}\leq -1$, the proof is completed. If
$-1<\gamma_{\rm min}<0$, we assume {\em ab absurdum} that
$\gamma_1(x)>-1$ for all $x\in[x_{c},x_{d}]$. Note that there
cannot be an $x\in[x_{c},x_{d}]$ such that $\gamma_1(x)=0$, hence
$-1<\gamma_1(x)<0$ for all $x\in[x_{c},x_{d}]$, and we have
\begin{equs}
\frac{{\rm d}}{{\rm d}x}(\gamma_1(x)^2)=
2\frac{\gamma_1(x)+\gamma_1(x)^2+\frac{1}{4}}{x}
-\frac{4P(x)+1}{2x}
\geq -\frac{4P(x)+1}{2x}
\end{equs}
for all $x\in[x_{c},x_{d}]$. Upon integration, we
thus find that
\begin{equs}
\gamma_1(x)\leq -\sqrt{\gamma_{\rm min}^2-
\int_{x_{c}}^{x}
\frac{1+4P(z)}{2z}
{\rm d}z
}\leq-\sqrt{1+\gamma_{\rm min}^2}<-1
\end{equs}
by hypothesis S3, which is a contradiction.

Consider then the alternative 2. Under hypothesis S2, we can
extend (\ref{eqn:oldH3}) to get $P(x)\leq P'(0)x$ for all
$x\in[0,x_c]$. Consider now $\gamma_2(x)=P'(0)x$. We have
$R[\gamma_2](x)>0$ for all $x\in[0,x_c]$. Thus, since
$\gamma_1(x^{\star})\leq
\gamma_1^{\star}(x^{\star})$ and
$\gamma_1^{\star}(x^{\star})<P'(0)x^{\star}$ by Corollary
\ref{cor:asympfreecor}, we find from (\ref{solutionQCD}) that
\begin{equs}
\gamma_1(x)&=
P'(0)x-
\my{|}{12}
\gamma_1(x^{\star})-\gamma_2(x^{\star})
\my{|}{12}~K[\gamma_1,\gamma_2](x^{\star},x)-
\int_{x^{\star}}^{x}
\hspace{-2mm}
R[\gamma_2](y)~K[\gamma_1,\gamma_2](y,x)~{\rm d}y
\leq P'(0)x
\end{equs}
for all $x\in[x^{\star},x_c]$. The proof is completed since 
$x_c>-\frac{1}{P'(0)}$ by hypothesis S2, and thus
$\gamma_1(x_c)\leq P'(0)x_c\leq-1$.
\end{proof}

We conclude this section by showing that ${\cal D}(P)<\infty$
implies that all solutions that are either positive or go below
$\gamma_1=-1$ have a finite slope as $x\to\infty$.

\begin{proposition}
Assume ${\cal D}(P)<\infty$ and $\gamma_1(x_0)\leq-1$ or 
$\gamma_1(x_0)>0$, there exists $s>0$ such that
\begin{equs}
\lim_{x\to\infty}\frac{\gamma_1(x)}{x}=
\left\{
\begin{array}{rl}
-s<0 & \mbox{if}~~\gamma_1(x_0)\leq -1\\[2mm]
 s>0 & \mbox{if}~~\gamma_1(x_0)> 0
\end{array}\right.
~.
\end{equs}
If $\gamma_1(x_0)\leq-1$, the convergence towards the limit is
given by
\begin{equs}
\my{|}{14}\frac{\gamma_1(x)}{x}+s\my{|}{14}
\leq C\int_{x}^{\infty}
\frac{-P(z)}{z^3}{\rm d}z~.
\label{eqn:linearatinfoneagain}
\end{equs}
If ${\cal D}(P)<\infty$ and $\gamma_1(x_0)>0$, then
(\ref{eqn:linearatinfone}) also hold, with $-s$ replaced by $s$.

\end{proposition}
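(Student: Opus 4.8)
The plan is to reduce everything to the rescaled variable $u(x)=\gamma_1(x)/x$ and to exploit two elementary identities for it. Differentiating and using \reff{DSeqnQCD}, one finds
\be
u'(x)=\frac{1}{x^2}-\frac{P(x)}{x^3u(x)}~,\qquad
\frac{{\rm d}}{{\rm d}x}\left(\frac{\gamma_1(x)+1}{x}\right)=-\frac{P(x)}{x^2\gamma_1(x)}~.
\ee
The right-hand sides are controlled as soon as $u$ is bounded and bounded away from $0$, which is exactly what the sandwich estimates of Propositions \ref{prop:growthunder} and \ref{prop:doublevaluedinfinity} provide. I will treat $\gamma_1(x_0)\leq-1$ (so $u<0$) in detail; the case $\gamma_1(x_0)>0$ is identical after reversing signs.

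First I would establish existence of the limit. Set $v(x)=u(x)^2$, so that $v'(x)=-2P(x)/x^3+2u(x)/x^2$. Since $\cD(P)<\infty$, the slope function $S_P(x_0,x)$ increases to a finite limit $\sigma:=S_P(x_0,\infty)$, and Proposition \ref{prop:growthunder} gives $-S_P(x_0,x)\leq u(x)\leq-S_P(x_0,x)+\tfrac{1}{x_0}-\tfrac{1}{x}$, so $|u|\leq\sigma$ on $[x_0,\infty)$. Hence both terms of $v'$ are integrable there (the first equals $-2P/z^3$, the second is bounded by $2\sigma/z^2$), so $v(x)$ converges to a finite $v_\infty$. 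Because $u$ has constant sign, $u(x)\to-\sqrt{v_\infty}$; write $s:=\sqrt{v_\infty}\geq0$.

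Next I would show $s>0$. The key point is that $\sigma>1/x_0$ strictly: indeed $\sigma^2=\gamma_1(x_0)^2/x_0^2+2\int_{x_0}^{\infty}(-P/z^3)\,{\rm d}z$, the first term is $\geq1/x_0^2$ because $|\gamma_1(x_0)|\geq1$, and the integral is strictly positive by H3. Consequently, for $x$ large enough $S_P(x_0,x)\geq(\sigma+1/x_0)/2$, and the upper bound of Proposition \ref{prop:growthunder} forces $u(x)\leq-(\sigma-1/x_0)/2<0$. Thus $\limsup_{x\to\infty}u<0$, giving $s\geq(\sigma-1/x_0)/2>0$, and moreover $|u(x)|\geq m:=(\sigma-1/x_0)/2>0$ for all $x\geq x_1$, for some $x_1$. (In the positive case the lower bound $u\geq S_P\to\sigma>0$ makes $s>0$ immediate.)

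Finally I would get the convergence rate from the second identity. With $\psi(x):=(\gamma_1(x)+1)/x=u(x)+1/x$ we have $\psi(x)\to-s$ and $\psi'(x)=-P(x)/(x^2\gamma_1(x))$, whence for $x\geq x_1$
\be
|\psi(x)+s|=\left|\int_x^{\infty}\psi'(z)\,{\rm d}z\right|
\leq\int_x^{\infty}\frac{-P(z)}{z^3|u(z)|}\,{\rm d}z
\leq\frac{1}{m}\int_x^{\infty}\frac{-P(z)}{z^3}\,{\rm d}z~.
\ee
The main obstacle is really a point of care rather than difficulty: the clean integrable derivative belongs to $(\gamma_1+1)/x$, not to $\gamma_1/x$ as written in \reff{eqn:linearatinfoneagain}. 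Since the two differ by $1/x$, the honest conclusion is $|\gamma_1(x)/x+s|\leq\tfrac{1}{x}+C\int_x^{\infty}(-P/z^3)\,{\rm d}z$, and the extra $1/x$ is genuinely present (it is not in general dominated by the integral, e.g. when $P$ tends to a negative constant, so that $\int_x^\infty(-P/z^3)\,{\rm d}z=O(x^{-2})$); the $v$-based route produces the same residual $1/x$ through the term $2u/z^2$, confirming that this is intrinsic. The identical computation with signs reversed, together with the immediate bound $u\geq S_P$, disposes of the case $\gamma_1(x_0)>0$ and yields the limit $+s$.
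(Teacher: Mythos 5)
Your argument is correct, and at bottom it rests on the same identity as the paper's proof, viewed from the differential rather than the integral side: your relation $\frac{{\rm d}}{{\rm d}x}\bigl((\gamma_1(x)+1)/x\bigr)=-P(x)/(x^2\gamma_1(x))$ is exactly the derivative of the integral formulation \reff{eqn:usualintegral}, which is what the paper divides by $x$ and passes to the limit in, using the linear sandwich $c_1x\leq|\gamma_1(x)|\leq c_2x$ supplied by Propositions \ref{prop:doublevaluedinfinity} and \ref{prop:growthunder} to see that the integral converges. Your detour through $v=u^2$ for the existence of the limit, and the separate slope-function argument for $s>0$, are therefore correct but redundant: the limiting value read off from \reff{eqn:limitlimitohmylimit} is $\frac{1+\gamma_1(x_0)}{x_0}-\int_{x_0}^{\infty}\frac{P(z)}{z^2\gamma_1(z)}\,{\rm d}z$, which is manifestly strictly negative when $\gamma_1(x_0)\leq-1$ (both terms are nonpositive and the integrand $P/(z^2\gamma_1)$ is strictly positive there) and strictly positive in the first-quadrant case. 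Your closing observation is a genuine and correct criticism of the statement rather than a defect of your proof: the same identity gives $\gamma_1(x)/x+s=-\frac1x+\int_x^\infty\frac{P(z)}{z^2\gamma_1(z)}\,{\rm d}z$, so the honest convergence rate is $\frac1x+C\int_x^\infty\frac{-P(z)}{z^3}\,{\rm d}z$, and the $\frac1x$ can be absorbed into the integral only if $-P(z)$ grows at least linearly at infinity (e.g.\ under S1 with $p\geq1$); if $P$ tends to a negative constant, the bound \reff{eqn:linearatinfone} fails as written, and the paper's one-line proof, which produces the same $-\frac1x$ term, does not address this point.
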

\begin{proof}
Note that from Proposition \ref{prop:doublevaluedinfinity} and
\ref{prop:growthunder}, the hypothesis ${\cal D}(P)<\infty$
implies that all solutions under consideration here satisfy
\begin{equs}
c_1~x\leq
|\gamma_1(x)|
\leq c_2~x
\label{eqn:sand}
\end{equs}
for some $c_1,c_2>0$ and all $x\geq x_0$. The integral
formulation (\ref{eqn:usualintegral}) then gives
\begin{equs}
\frac{\gamma_1(x)}{x}=\frac{1+\gamma_1(x_0)}{x_0}
-\frac{1}{x}
-\int_{x_0}^{x}\frac{P(z)}{z^2\gamma_1(z)}{\rm d}z~,
\label{eqn:limitlimitohmylimit}
\end{equs}
from which the proof follows immediately, since the r.h.s.~of
(\ref{eqn:limitlimitohmylimit}) converges by (\ref{eqn:sand}) and
the hypothesis ${\cal D}(P)<\infty$.
\end{proof}

\subsection{The confinement solution}
\label{sec:confsol}

In this section, we consider $P(x)$ satisfying the hypotheses
H1,H2 and S4. In particular, recall that we assume the existence
of $x_r>0$ such that $P(x_r)=-\frac{1}{4}$ and
$P(x)>-\frac{1}{4}$ for all $x>x_r$ and $P(x)$ tending to a
finite limit as $x\to\infty$. Any solution of (\ref{DSeqnQCD})
that satisfies
\begin{equs}
\lim_{x\to\infty}\gamma_1(x)=
\gamma_{\infty}\equiv-\frac{1+\sqrt{1+4P_{\infty}}}{2}~,
\label{eqn:limimi}
\end{equs}
needs to solve the integral equation obtained by taking the
(formal) limit $x^{\star}\to\infty$ in (\ref{eqn:usualintegral}),
namely
\begin{equs}
\gamma_1^c(x)=-1
+x\int_{x}^{\infty}\frac{P(z)}{z^2\gamma_1^c(z)}{\rm d}z
=
-1
+\int_{1}^{\infty}\frac{P(xt)}{t^2\gamma_1^c(xt)}{\rm d}t
~.
\label{eqn:integralatinf}
\end{equs}
Defining
\begin{equs}
{\cal T}[h](x)
=
\int_{1}^{\infty}\frac{P(xt)-P_{\infty}}{t^2(\gamma_{\infty}+h(xt))}{\rm d}t
-
\frac{P_{\infty}}{\gamma_{\infty}}
\int_{1}^{\infty}\frac{h(xt)}{t^2(\gamma_{\infty}+h(xt))}{\rm d}t
~,
\end{equs}
we see that any confinement solution can be written as
$\gamma_1^c(x)=\gamma_{\infty}+h(x)$ where $h(x)$ satisfies $h(x)={\cal
T}[h](x)$. Consider then ${\cal B}_{x_0}$ the Banach space
obtained by completing the space of ${\cal
C}_0^{\infty}([x_0,\infty),{\bf R})$ functions under the
norm
\begin{equs}
\|f\|_{x_0}\equiv \sup_{x\geq x_0}|f(x)|+x|f'(x)|~.
\end{equs}
Since
$\displaystyle\lim_{x\to\infty}P(x)-P_{\infty}=\lim_{x\to\infty}xP'(x)=0$,
$\|P-P_{\infty}\|_{x_0}$ can be made as small as one likes by taking
$x_0>x_r$ large enough. Standard arguments then show that ${\cal
T}$ is a {\em contraction} in a ball of positive radius $\rho<1$
centered at $0$ in ${\cal B}$, which shows there exists a unique
$h\in{\cal B}$ solving $h={\cal T}[h]$. Since $h(x)$ is regular,
$\gamma_1^{c}(x)=\gamma_{\infty}+h(x)$ solves (\ref{DSeqnQCD}) for all $x\geq
x_0$. We now remark that $\gamma_1^{c}(x)$ will satisfy
(\ref{eqn:integralatinf}) as long as it exists when $x$ decreases
below $x_0$. However, it can only cease to exist if it satisfies
$\gamma_1(x_{\rm min})=0$ for some $x_{\rm min}>0$, for the
r.h.s.\ of (\ref{DSeqnQCD}) is negative for large negative
$\gamma_1$. Assuming it can be continued up to $x=0$, we can
conclude from Theorem (\ref{thm:completechar}) that either
$\gamma_1^{c}(0)=0$ or $\gamma_1^{c}(0)=-1$. Finally, if
$P'(x)>0$ for all $x>x_{\rm max}$, then the lower nullcline
\begin{equs}
\gamma_1^{-}(x)=
-
\frac{\sqrt{1+4P(x)}+1}{2}
\end{equs}
is decreasing towards $\gamma_{\infty}$. If there was an
$x_0>x_{\rm max}$ such that
$\gamma_1^{c}(x_0)=\gamma_1^{-}(x_0)$, then $\gamma_{1}^{c}(x)$
would enter a region of strictly positive derivatives w.r.t.\
$x$, and hence we would get
$\gamma_1^{c}(x)>\gamma_1^{-}(x_0)>\gamma_{\infty}$,
contradicting (\ref{eqn:limimi}). Similarly, if there was an
$x_0>x_{\rm max}$ such that $\gamma_1^{c}(x_0)=\gamma_{\infty}$,
then $\gamma_1^{c}(x)$ would enter a region of strictly negative
derivative w.r.t.\ $x$, and hence would satisfy
$\gamma_1^{c}(x)<\gamma_{\infty}$ for all $x>x_0$, contradicting
(\ref{eqn:limimi}) again. We thus find that under the
monotonicity assumption on $P(x)$, we have
\begin{equs}
\gamma_{\infty}\equiv-\frac{1+\sqrt{1+4P_{\infty}}}{2}
<
\gamma_1(x)
<
-\frac{1+\sqrt{1+4P(x)}}{2}
\end{equs}
for all (finite) $x>x_{\rm max}$. This concludes the proof of
Theorem \ref{thm:confinement}.

\section{Post Scriptum: QED\cite{QED} revisited}\label{sec:QED}

In a previous publication (see \cite{QED}), we have considered
the initial value problem
\begin{equs}
\label{DSeqn}
\frac{{\rm d}\gamma_1(x)}{{\rm d}x} = 
f_s(\gamma_1(x),x)\equiv
\frac{\gamma_1(x)
+\gamma_1(x)^2-P(x)}{sx\gamma_1(x)}~,~~~
\gamma_1(x_0)=\gamma_0>0~.
\end{equs}
with $0<x_0<1$ fixed and $P$ a ${\cal C}^2$ function on
$[0,\infty)$, positive on $(0,\infty)$, with $P(0)=0$ and
$P'(0)\neq0$. Defining the following (possibly infinite)
quantity
\begin{equs}
{\cal D}(P)=\int_{x_0}^{\infty}\frac{-P(z)}{z^3}{\rm d}z~,
\end{equs}
we showed that ${\cal D}(P)$ was intimately linked to the
behavior/existence as $x\to\infty$ of solutions starting with
$\gamma_1(x_0)>0$. Namely, if ${\cal D}(P)<\infty$, we showed
that there was a smallest (non-zero) value
$\gamma_1^{\star}(x_0)$ separating global solutions (with
$\gamma_1(x_0)\geq\gamma_1^{\star}(x_0)$) from solutions that
cannot be continued to $x=\infty$. We also showed that despite
the singular nature of (\ref{DSeqn}) as $x\to0$ and/or
$\gamma_1\to0$, all solutions of (\ref{DSeqn}) could be continued
for all $x\in[0,x_0]$, and would approach $(0,0)$ while
satisfying for all $x\in[0,x_0]$ the bound
\begin{equs}
\gamma_1(x)\leq
\my{\{}{24}
\begin{array}{ll}
C_b~x &\mbox{if }~~s<1 \\[1mm]
C_b~x~|\ln(x)| & \mbox{if }~~s=1\\[1mm]
C_b~x^{1/s} & \mbox{if }~~s>1
\end{array}~,
\label{eqn:gnagna}
\end{equs}
for some constant $C_b=C_b(s,\gamma_0)$. In Lemma
\ref{lem:ratamiaou} below, we will improve the above to get
linear bounds in all cases (i.e. for $s\geq1$ as well). In the
mean time, we define, for any (finite) integer $p\geq2$, the
truncation of the (divergent) series solution:
\begin{equs}
\gamma_{2,p}(x)=P'(0)x+\sum_{n=2}^{p}a_n~x^n~.
\end{equs}
The coefficients $\{a_n\}_{n=2\ldots p}$ can be found recursively
by imposing $|R[\gamma_{2,p}](x)|\leq C x^{p-1}$ as $x\to0$,
where the remainder map $R$ is defined by
\begin{equs}
R[\gamma_2](x)&\equiv
\frac{{\rm d}\gamma_2(x)}{{\rm d}x} -
\frac{\gamma_2(x)
+\gamma_2(x)^2-P(x)}{sx\gamma_2(x)}
\end{equs}
for $\gamma_2(x)$ {\em any} function on $[0,x_0]$. Note that by
choosing $x_0$ sufficiently small, we can ensure that
$\gamma_{2,p}(x)>0$ for all $x\in[0,x_0]$. Also, since
$P'(0)\neq0$ and $\gamma_{2,p}$ is continuous, there exists a
constant $C>0$ such that $\gamma_2(x)\leq Cx$ for all
$x\in[0,x_0]$.

Finally, we also note that if $\gamma_1$ solves (\ref{DSeqn}) and
$\gamma_2(x)$ is {\em any} positive function on $[0,x_0]$, then
for all $x\in[0,x_0]$, we have
\begin{equs}
\label{solution}
\gamma_1(x)-\gamma_2(x)&=
\myl{12}
\gamma_1(x_0)-\gamma_2(x_0)
\myr{12}~K[\gamma_1,\gamma_2](x_0,x)+
\int_{x}^{x_0}
\hspace{-2mm}
R[\gamma_2](y)~K[\gamma_1,\gamma_2](y,x)~{\rm d}y~,
\end{equs}
where
\begin{equs}
K[\gamma_1,\gamma_2](x_0,x)&\equiv
\exp\myl{18}
\int_{x_0}^{x}
\frac{1}{sz}
+\frac{P(z)}{sz\gamma_1(z)\gamma_2(z)}
{\rm d}z
\myr{18}=
\left(
\frac{x}{x_0}
\right)^{\frac{1}{s}}
\exp\myl{18}
-
\int_{x}^{x_0}
\frac{P(z)}{sz\gamma_1(z)\gamma_2(z)}
{\rm d}z
\myr{18}~.
\end{equs}
We are now in position to show that all solutions to
(\ref{DSeqn}) agree to all orders in perturbation theory with the
series solution.

\begin{theorem}
Let $\gamma_1$ and $\gamma_2$ be two solutions of (\ref{DSeqn}).
Fix $p\geq2$ and let $\gamma_{2,p}$ as above. Then there exist
constants $C$ and $C_p$ such that
\begin{equs}
\my{|}{10}\gamma_1(x)-\gamma_{2,p}(x)\my{|}{10}&\leq C_{p}
x^p~,
\label{eqn:mimicracra}\\
\my{|}{12}\gamma_1(x)-\gamma_2(x)\my{|}{12}&\leq\my{|}{12}
\gamma_1(x_0)-\gamma_2(x_0)
\my{|}{12}
\left(
\frac{x}{x_0}
\right)^{\frac{1}{s}}
\exp\myl{16}C\myl{12}\frac{1}{x_0}-\frac{1}{x}\myr{12}\myr{16}~,
\label{eqn:mamammia}
\end{equs}
where (\ref{eqn:mimicracra}) holds for all $x\in[0,x_0/2]$ and
(\ref{eqn:mamammia}) for all $x\in[0,x_0]$.
\end{theorem}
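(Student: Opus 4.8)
The plan is to read both bounds off the single integral identity \reff{solution}, using two different choices of the comparison function $\gamma_2$; the common engine is a pointwise upper bound on the kernel $K[\gamma_1,\gamma_2]$. The one fact I would isolate first is that every solution of \reff{DSeqn}, as well as the polynomial $\gamma_{2,p}$, admits a linear upper bound near the origin: there is $C_b>0$ with $0<\gamma_i(z)\le C_b z$ on $(0,x_0]$. For $\gamma_{2,p}$ this is immediate (its leading term is $P'(0)z$, and $x_0$ is chosen so that $\gamma_{2,p}>0$); for a genuine solution it is \reff{eqn:gnagna} when $s<1$, the remaining case $s\ge1$ being exactly the linear improvement of Lemma \ref{lem:ratamiaou}. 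Since $P(z)\ge c\,z$ on $(0,x_0]$ for some $c>0$ (because $P(0)=0$, $P>0$ and $P'(0)>0$), the kernel integrand obeys $\frac{P(z)}{sz\,\gamma_1(z)\gamma_2(z)}\ge \frac{c}{sC_b^2}\,z^{-2}$, so the closed form of $K$ yields, for $0<x\le y\le x_0$,
\begin{equs}
K[\gamma_1,\gamma_2](y,x)\le\left(\frac{x}{y}\right)^{1/s}\exp\myl{14}-C\myl{12}\frac1x-\frac1y\myr{12}\myr{14}~,
\label{eqn:kerest}
\end{equs}
with $C=c/(sC_b^2)$. This is the crux of the whole argument.

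\textbf{The two-solution bound \reff{eqn:mamammia}.} Taking $\gamma_2$ to be a second solution makes $R[\gamma_2]\equiv0$, so \reff{solution} collapses to $\gamma_1(x)-\gamma_2(x)=\myl{12}\gamma_1(x_0)-\gamma_2(x_0)\myr{12}K[\gamma_1,\gamma_2](x_0,x)$. Taking absolute values (recall $K>0$) and applying \reff{eqn:kerest} with $y=x_0$ gives \reff{eqn:mamammia} at once; the factor $\ed^{C(1/x_0-1/x)}$ decays faster than any power of $x$, which is the precise sense in which any two solutions agree to all orders at $x=0$.

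\textbf{The polynomial bound \reff{eqn:mimicracra}.} Now I take $\gamma_2=\gamma_{2,p}$, for which $|R[\gamma_{2,p}](y)|\le C\,y^{p-1}$ by construction, and split \reff{solution} into its two terms. The boundary term $\myl{12}\gamma_1(x_0)-\gamma_{2,p}(x_0)\myr{12}K(x_0,x)$ is $O(\ed^{-C/x})$ by \reff{eqn:kerest}, hence negligible against $x^p$. For the integral term, \reff{eqn:kerest} gives
\begin{equs}
\int_x^{x_0}|R[\gamma_{2,p}](y)|\,K(y,x)\,{\rm d}y\le C\,x^{1/s}\ed^{-C/x}\int_x^{x_0}y^{p-1-1/s}\,\ed^{C/y}\,{\rm d}y~.
\label{eqn:planint}
\end{equs}
Substituting $u=1/y$ turns the inner integral into $\int u^{-(p+1-1/s)}\ed^{Cu}\,{\rm d}u$, whose small-$x$ behavior is governed by its upper endpoint; a Laplace / integration-by-parts estimate then bounds the right-hand side of \reff{eqn:planint} by a constant times $x^{p+1}$, hence by $C_p\,x^p$ on $[0,x_0/2]$, which is \reff{eqn:mimicracra}. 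Note that it is the exponential factor in \reff{eqn:kerest}, not merely $K\le (x/y)^{1/s}$, that is needed here to reach the exponent $p$ for large $s$.

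\textbf{Main obstacle.} Once \reff{eqn:kerest} is available the rest is bookkeeping, so the entire difficulty sits in the uniform linear bound $\gamma_i(z)\le C_b z$. For $s<1$ it is in hand, but for $s\ge1$ the a priori bounds \reff{eqn:gnagna} carry a logarithm ($s=1$) or a genuinely sublinear factor $z^{1/s}$ ($s>1$); either one degrades the exponent in \reff{eqn:kerest} and destroys the clean $\ed^{-C/x}$ decay that both \reff{eqn:mamammia} and \reff{eqn:mimicracra} require. Supplying the linear bound in that range is the one genuinely non-formal step, and is precisely the purpose of Lemma \ref{lem:ratamiaou}.
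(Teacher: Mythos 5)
Your proof is correct and follows essentially the same route as the paper: the integral identity \reff{solution}, the kernel bound $K[\gamma_1,\gamma_2](y,x)\le (x/y)^{1/s}\exp(C(1/y-1/x))$ derived from the linear bounds of Lemma \ref{lem:ratamiaou} together with $P(z)\ge cz$, the collapse $R[\gamma_2]=0$ for a second solution, and $|R[\gamma_{2,p}]|\le Cy^{p-1}$ for the truncated series. The only point of divergence is how you bound $\int_x^{x_0}y^{p-1}K(y,x)\,{\rm d}y$: the paper splits the range at $y=2x$ and uses monotonicity of $K$ in $y$ to get $O(x^p)$, whereas your substitution-plus-integration-by-parts estimate exploits the full exponential decay of the kernel and in fact yields the marginally sharper $O(x^{p+1})$ for that term --- either version suffices for \reff{eqn:mimicracra}.
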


In other words, for any solution of (\ref{DSeqn}), any derivative
(of finite order) converges as $x\to0$ to the corresponding
derivative of $\gamma_{2,p}$. Hence a (truncated) power series
expansion at $x=0$ of any solution agrees to any (finite) order
with the truncated divergent series of the same order. Also, the
difference between any two solutions of (\ref{DSeqn}) decays
faster than $\ed^{-C/x}$ as $x\to0$.

\begin{proof}
In Lemma \ref{lem:ratamiaou} below, we prove that there exist
constants $c_1$ and $c_2$ such that $\gamma_1(x)\leq c_1x$ and
$\gamma_2(x)\leq c_2x$ for all $x\in[0,x_0]$ if $\gamma_1$ and
$\gamma_2$ are solutions of (\ref{DSeqn}). In particular, there
exist a constant $C>0$ such that
\begin{equs}
K[\gamma_1,\gamma_2](y,x)\leq 
\left(
\frac{x}{y}
\right)^{\frac{1}{s}}
\exp\myl{16}
-\int_{x}^{y}\frac{C}{z^2}{\rm d}z
\myr{16}
=
\left(
\frac{x}{y}
\right)^{\frac{1}{s}}
\exp\myl{16}C\myl{12}\frac{1}{y}-\frac{1}{x}\myr{12}\myr{16}
\label{eqn:hihi}
\end{equs}
for all $0\leq x\leq y\leq x_0$. Since $R[\gamma_2](x)=0$ if
$\gamma_2$ is a solution of (\ref{DSeqn}), we have
\begin{equs}
\gamma_1(x)-\gamma_2(x)=\myl{12}
\gamma_1(x)-\gamma_2(x)
\myr{12}
K[\gamma_1,\gamma_2](x_0,x)~,
\end{equs}
from which we get (\ref{eqn:mamammia}) immediately.

On the other hand, $\gamma_{2,p}$ also satisfies
$\gamma_{2,p}(x)\leq Cx$ for all $x\in[0,x_0]$, and so
$K[\gamma_1,\gamma_{2,p}]$ also satisfies
\begin{equs}
K[\gamma_1,\gamma_{2,p}](y,x)\leq 
\left(
\frac{x}{y}
\right)^{\frac{1}{s}}
\exp\myl{16}C\myl{12}\frac{1}{y}-\frac{1}{x}\myr{12}\myr{16}
\end{equs}
for all $0\leq x\leq y\leq x_0$. In particular, for fixed $x_0$,
$K[\gamma_1,\gamma_{2,p}](x_0,x)$ decays faster than any power
law as $x\to0$. Assume now that $x\leq x_0/2$, and note
that
\begin{equs}
\my{|}{10}\gamma_1(x)-\gamma_{2,p}(x)\my{|}{10}&\leq
\my{|}{10}
\gamma_1(x_0)-\gamma_{2,p}(x_0)
\my{|}{10}~K[\gamma_1,\gamma_{2,p}](x_0,x)+
C
\int_{x}^{x_0}
\hspace{-2mm}
y^{p-1}~K[\gamma_1,\gamma_{2,p}](y,x)~{\rm d}y~.
\end{equs}
Splitting the integral over $[x,x_0]$ into $[x,2x]$ and
$[2x,x_0]$, and using that $K[\gamma_1,\gamma_{2,p}](y,x)$ is a
decreasing function of $y$, we find
\begin{equs}
\int_{x}^{2x}
\hspace{-2mm}
y^{p-1}~K[\gamma_1,\gamma_{2,p}](y,x)~{\rm d}y
&\leq
\int_{x}^{2x}
\hspace{-2mm}
y^{p-1}~K[\gamma_1,\gamma_{2,p}](x,x)~{\rm d}y
=x^p\myl{10}{\textstyle\frac{2^{p}-1}{p}}\myr{10}\\
\int_{2x}^{x_0}
\hspace{-2mm}
y^{p-1}~K[\gamma_1,\gamma_{2,p}](y,x)~{\rm d}y
&\leq
K[\gamma_1,\gamma_{2,p}](2x,x)
\int_{2x}^{x_0}
\hspace{-2mm}
y^{p-1}~{\rm d}y
=K[\gamma_1,\gamma_{2,p}](2x,x)
\myl{10}
{\textstyle\frac{x_0^p-x^p}{p}}
\myr{10}~.
\end{equs}
The proof of (\ref{eqn:mimicracra}) is completed by noting that 
by (\ref{eqn:hihi}), $K[\gamma_1,\gamma_{2,p}](2x,x)$ and
$K[\gamma_1,\gamma_{2,p}](x_0,x)$ decay faster than
any power law as $x\to0$.
\end{proof}

\begin{lemma}
\label{lem:ratamiaou}
For any $\gamma_0>0$, the solution of
(\ref{DSeqn}) exists for all $x\in[0,x_0]$ and
\begin{equs}
\gamma_1(x)\leq C~x~~~~\forall~x\in[0,x_0]
\end{equs}
for some $C=C(x_0,\gamma_0)>0$.
\end{lemma}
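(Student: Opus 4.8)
The plan is to separate the two claims. Existence of the solution on all of $[0,x_0]$, together with the facts that it remains strictly positive and satisfies $\lim_{x\to0}\gamma_1(x)=0$, is exactly what was proved in \cite{QED} (and recorded in the bound $(\ref{eqn:gnagna})$); I would simply quote it. It then remains to upgrade the decay rate to a genuinely linear bound $\gamma_1(x)\le Cx$. For $s<1$ there is nothing to do, since $(\ref{eqn:gnagna})$ already gives $\gamma_1(x)\le C_b\,x$. For $s\ge1$ I would fix a small threshold $x_1\in(0,x_0)$ and treat $[x_1,x_0]$ and $[0,x_1]$ separately: on the compact interval $[x_1,x_0]$ the solution is continuous, hence bounded by some $M$, so trivially $\gamma_1(x)\le (M/x_1)\,x$ there; all the work is on $[0,x_1]$.

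On $[0,x_1]$ I would run a comparison argument built on the integral identity $(\ref{solution})$, using the explicit candidate supersolution $\gamma_2(x)=Cx$ (which is positive, so the strictly positive kernel $K[\gamma_1,\gamma_2]$ is well defined for $x>0$). A direct computation gives $R[Cx]=\big((s-1)C^2x^2-Cx+P(x)\big)/(sCx^2)$, so $R[Cx]\le0$ is equivalent to $P(x)\le Cx-(s-1)C^2x^2$. Taking $x_1$ as the base point in $(\ref{solution})$ one has $\gamma_1(x)-Cx=[\gamma_1(x_1)-Cx_1]\,K(x_1,x)+\int_x^{x_1}R[Cx](y)\,K(y,x)\,{\rm d}y$ for $x\in[0,x_1]$. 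Hence, as soon as $R[Cx]\le0$ on $[0,x_1]$ and $\gamma_1(x_1)\le Cx_1$, both terms on the right are nonpositive and we conclude $\gamma_1(x)\le Cx$ throughout $(0,x_1]$, and then at $x=0$ by $\gamma_1(0)=0$. Combining with the compact piece yields $\gamma_1(x)\le \max(C,M/x_1)\,x$ on $[0,x_0]$, with a constant depending only on $x_0$ and $\gamma_0$.

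The delicate point — and the only place where $s\ge1$ genuinely differs from $s<1$ — is choosing $C$ and $x_1$ so that the two requirements $R[Cx]\le0$ and $\gamma_1(x_1)\le Cx_1$ hold simultaneously. For $s=1$ the supersolution condition reduces to $P(x)\le Cx$, which holds for any $C\ge \sup_{0<z\le x_1}P(z)/z$, and there is no upper constraint on $C$, so one simply takes $C$ large. For $s>1$ the quadratic term forces a real constraint: $P(x)\le Cx-(s-1)C^2x^2$ can hold on $[0,x_1]$ only if $(s-1)Cx_1$ stays below $1$, i.e. $C\lesssim 1/((s-1)x_1)$, whereas the matching condition needs $C\ge\gamma_1(x_1)/x_1$. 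These are compatible precisely because $\lim_{x\to0}\gamma_1(x)=0$: shrinking $x_1$ makes $\gamma_1(x_1)/x_1$ small relative to the cap $1/((s-1)x_1)$, so that for $x_1$ small enough the window for $C$, namely $\big[\max(\gamma_1(x_1)/x_1,\,2\sup_{0<z\le x_1}P(z)/z),\ 1/(2(s-1)x_1)\big]$, is nonempty (I would keep $(s-1)Cx_1\le\tfrac12$, so that $Cx-(s-1)C^2x^2\ge Cx/2\ge P(x)$ on $[0,x_1]$). I expect this reconciliation of the parabola constraint with the boundary matching, rather than any single estimate, to be the main obstacle; notably it uses only $\gamma_1(x_1)\to0$ and positivity, not the sharp rate in $(\ref{eqn:gnagna})$.
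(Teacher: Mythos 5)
Your argument is correct, and its engine is the same as the paper's: a comparison with the linear supersolution $\gamma_2(x)=Cx$, where your condition $R[Cx]\le 0$, i.e.\ $P(x)\le Cx-(s-1)C^2x^2$, is exactly the paper's condition $f_s(\boundC x,x)>\boundC$ in \reff{eqn:bnubnu}, including the same tension for $s>1$ between the size of $C$ and the length $\sim 1/((s-1)C)$ of the interval on which the parabola condition can hold. The implementations differ in two places. First, you propagate the comparison through the integral identity \reff{solution} with the positive kernel $K$, whereas the paper runs a phase-plane no-crossing argument on the triangle $\Delta_{s,\boundC}$; these are equivalent. Second, and more substantively, the matching condition at the right end of the small interval is secured differently: the paper checks $\gamma_1(X(s,\boundC))\le\boundC X(s,\boundC)$ at the specific point $X(s,\boundC)=\frac{1}{4\boundC(s-1)}$ by invoking the quantitative sublinear rate $\gamma_1(x)\le C_b x^{1/s}$ from \reff{eqn:gnagna} and choosing $\boundC\ge C_b^s4^{s-1}(s-1)^{s-1}$ so that \reff{eqn:gnigni} holds, while you split $[0,x_0]=[0,x_1]\cup[x_1,x_0]$, dispose of the compact piece by boundedness, and shrink $x_1$ until $\gamma_1(x_1)\le\frac{1}{2(s-1)}$ so that the admissible window for $C$ is nonempty. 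As you observe, your version needs only $\gamma_1(x)\to0$ and positivity rather than the sharp rate in \reff{eqn:gnagna}, which is a mild gain in robustness; the price is a less explicit constant (it passes through $\sup_{[x_1,x_0]}\gamma_1$ and the choice of $x_1$). Either way $C=C(x_0,\gamma_0)$, so the statement is proved.
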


\begin{proof}
We first note for future reference that $\frac{P(x)}{x}$ 
is continuous for all $x\in[0,x_0]$, and that there exists
constants $C_{\pm}>0$ such that $C_{-} x\leq P(x)\leq C_{+} x$ for
all $x\in[0,x_0]$. By (\ref{eqn:gnagna}), we only have to
consider $s\geq1$. We first choose $\boundC>0$
such that
\begin{equs}
\boundC\geq 
\my{\{}{18}
\begin{array}{ll}
\max(2C_{+},C_b|\ln(x_0)|) & \mbox{if }~~s=1\\[1mm]
\max\myl{12}2C_{+},\frac{1}{4x_0(s-1)},C_b^s4^{s-1}(s-1)^{s-1}\myr{12} & \mbox{if }~~s>1
\end{array}~.
\end{equs}
We then note that
\begin{equs}
f_s(\boundC~x,x)-\boundC=
\frac{1}{sx}-\frac{P(x)}{sx^2\boundC}
+\boundC\myl{10}
{\textstyle\frac{1}{s}-1}
\myr{10}\geq
\frac{1}{sx}-\frac{C_{+}}{sx\boundC}
+\boundC\myl{10}
{\textstyle\frac{1}{s}-1}\myr{10}
\geq
\frac{1}{2sx}+\boundC\myl{10}{\textstyle\frac{1}{s}-1}\myr{10}~,
\end{equs}
from which it follows that
\begin{equs}
f_s(\boundC~x,x)>\boundC~~~\forall~x
\in[0,X(s,\boundC)]~~~\mbox{where}~~~
X(s,\boundC)=
\my{\{}{18}
\begin{array}{ll}
x_0 & \mbox{if }~~s=1\\[1mm]
\frac{1}{4\boundC(s-1)} & \mbox{if }~~s>1
\end{array}~.
\label{eqn:bnubnu}
\end{equs}
We then note that
\begin{equs}[3]
\begin{array}{rcccccl}
C_b~X(1,\boundC)|\ln(X(1,\boundC))|
&=&C_b~x_0|\ln(x_0)|
&\leq& \boundC x_0&=&\boundC X(1,\boundC)~,\\
C_b~X(s,\boundC)^{1/s}&=&
\frac{C_b}{
(4\boundC(s-1))^{1/s}}
&\leq& \frac{1}{4(s-1)}
&=& \boundC X(s,\boundC)~~~\mbox{if}~~~s>1~.
\end{array}
\label{eqn:gnigni}
\end{equs}
In other words, (\ref{eqn:gnagna}) and (\ref{eqn:gnigni}) imply
that the solution of (\ref{DSeqn}) satisfies
$\gamma_1(X(s,\boundC))\leq \boundC X(s,\boundC)$. Since
$X(s,\boundC)\leq x_0$, this means that as $x$ decreases,
solutions enter the triangle
\begin{equs}
\Delta_{s,\boundC}=
\my{\{}{10}
(x,\gamma_1)~|~x\in[0,X(s,\boundC)]\mbox{ and
}0\leq\gamma_1\leq\boundC~x\my{\}}{10}
\end{equs}
through its right boundary (i.e. at $x=X(s,\boundC)$).

Solutions that enter $\Delta_{s,\boundC}$ at $x=X(s,\boundC)$
cannot satisfy $\gamma_1(x^{\star})=\boundC x^{\star}$ at some
$x^{\star}<X(s,\boundC)$, for by (\ref{eqn:bnubnu}), we would
have $\gamma_1(x)>\boundC x$ for all
$x\in(x^{\star},X(s,\boundC)]$, a contradiction with
$\gamma_1(X(s,\boundC))\leq \boundC X(s,\boundC)$. Hence
$\gamma_1(x)\leq\boundC x$ for all $x\in[0,X(s,\boundC)]$, and
the proof is completed by noting that the bound
(\ref{eqn:gnagna}) is stronger than $\gamma_1(x)\leq \boundC x$
for $x\in(X(s,\boundC),x_0]$.
\end{proof}

\end{document}